\def\dOi{9(4:12)2013}
\subjclass{F.1.3 Complexity Measures and Classes, F.4.1 Math. Logic}
\theoremstyle{plain}
\theoremstyle{definition}
\theoremstyle{remark}
\theoremstyle{plain}
\renewcommand{\thefootnote}{\fnsymbol{footnote}}
\newcommand{\defnend}{\hfill\ensuremath{\blacksquare}}
\providecommand{\proofclaimname}{Proof of claim}
\newcommand\makebreak{\vspace{0.5\baselineskip}}
\newcommand{\etal}{et~al.\xspace}
\newcommand{\defeq}{:=}
\newcommand{\defiff}{:\Leftrightarrow}
\newcommand{\isom}{\cong}
\newcommand{\ra}{\rightarrow}
\newcommand{\card}[1]{|\, #1 \,|}
\newcommand{\sep}{\;|\;}
\newcommand{\divides}{\;|\;}
\newcommand{\fvec}[1]{\mathbf{#1}}	
\DeclareMathOperator{\ord}{ord}	
\newcommand{\eqclass}[2]{[#1]_{#2}}
\def\imod#1{\allowbreak\mkern10mu({\operator@font mod}\,\,#1)}
\newcommand{\qsep}{\,.\,}
\newcommand{\logic}[1]{\ensuremath{\mathrm{#1}}\xspace}
\newcommand{\formula}[1]{\ensuremath{\mathsf{#1}\xspace}}
\newcommand{\FO}{\logic{FO}}
\newcommand{\FOC}{\logic{FOC}}
\newcommand{\forkp}{\logic{FOR}_p}
\newcommand{\foslvp}{\logic{FOS}_p}
\newcommand{\foslv}[1]{\logic{FOS_{#1}}}
\newcommand{\FPC}{\logic{FPC}}
\newcommand{\FPrk}{\logic{FPR}}
\newcommand{\FP}{\logic{FP}}
\newcommand{\fo}{\logic{FO}\xspace}
\newcommand{\FOdtc}{\logic{DTC}}
\newcommand{\fpc}{\FPC}
\newcommand{\fp}{\FP}
\newcommand{\struct}[1]{{\mathbf #1}}
\newcommand{\structClass}[1]{\mathcal #1}
\newcommand{\univ}[1]{\ensuremath{D  ( \struct{#1} )}} 
\newcommand{\lindstrom}[1]{Q_{\structClass #1}}
\newcommand{\uniformLindstrom }[1]{\langle Q_{\structClass #1} \rangle }
\newcommand{\vocRing}{\tau_{\text{ring}}}
\newcommand{\vocGroup}{\tau_{\text{group}}}
\newcommand{\vocMatrix}{\tau_{\text{mat}}}
\newcommand{\vocMatrixPair}{\tau_{\text{mpair}}}
\newcommand{\vocMatrixOrd}{\tau_{\text{mat}}^\leqslant}
\newcommand{\vocMatrixPairOrd}{\tau_{\text{mpair}}^\leqslant}
\newcommand{\vocLinEqRing}{\tau_{\text{les-r}}}
\newcommand{\vocLinEqRingFixed}[1]{\tau_{\text{les}}(#1)}
\newcommand{\vocLinEqRingOrd}{\tau_{\text{les-r}}^\leqslant}
\newcommand{\vocLinEqGroup}{\tau_{\text{les-g}}}
\newcommand{\logicalReduction}[1]{\ensuremath{\leq_\text{#1}\xspace}}
\newcommand{\logicalTurReduction}[1]{\ensuremath{\leq_\text{#1-\logic T}
\xspace } }
\newcommand{\foleqTur}{\logicalTurReduction{FO}}
\newcommand{\fpleq}{\logicalReduction{\FP}}
\newcommand{\dtcleq}{\logicalReduction{\FOdtc}}
\newcommand{\fpleqTur}{\logicalTurReduction{\FP}}
\newcommand{\problem}[1]{\textsc{#1}\xspace}
\newcommand{\Solve}[1]{\problem{Slv#1}\xspace}
\newcommand{\SolveField}{\Solve{F}}
\newcommand{\SolveGeneralRing}{\Solve{R}}
\newcommand{\SolveRing}{\Solve{CR}}
\newcommand{\SolveLocalRing}{\Solve{LR}}
\newcommand{\SolveOrdRing}{\SolveRing_\leqslant}
\newcommand{\SolvekLocalRing}{\Solve{LR}_k}
\newcommand{\SolveAbGroup}{\Solve{AG}}
\newcommand{\LCON}{\Solve{CycG}}
\newcommand{\GM}{\problem{GM}}
\newcommand{\cclass}[1]{\ensuremath{\mathrm{#1}}}
\newcommand{\Ptime}{\cclass{PTIME}\xspace}
\newcommand{\NPtime}{\cclass{NP}\xspace}
\newcommand{\Logspace}{\cclass{LOGSPACE}\xspace}
\newcommand{\compnc}{\cclass{NC}\xspace}
\newcommand{\N}{\mathbb{N}}
\newcommand{\Z}{\mathbb{Z}}
\newcommand{\Q}{\mathbb{Q}}
\newcommand{\bx}{{\vec{x}}}
\providecommand{\by}{{\vec{y}}}
\renewcommand{\by}{{\vec{y}}}
\newcommand{\ba}{{\vec{a}}}
\newcommand{\bb}{{\vec{b}}}
\newcommand{\bc}{{\vec{c}}}
\newcommand{\tup}[1]{{\vec #1}}
\newcommand{\genl}[2]{\ensuremath{\mbox{\rm\sl GL}_{#1}(#2)}}
\newcommand{\ZeroMatrix}{\ensuremath{\mathbf{0}}}
\newcommand{\IdMatrix}{\ensuremath{\mathbf{1}}}
\newcommand{\inv}[1]{#1^{-1}}
\newcommand{\GF}[1]{\ensuremath{\mathsf{GF}(#1)}}
\newcommand{\GR}[2]{\ensuremath{\mathsf{GR}(#1, #2)}}
\newcommand{\Zm}[1]{{{\mathbb Z}_{#1}}}
\newcommand{\ringBase}[1]{\mathcal{B}(#1)}
\newcommand{\units}[1]{#1^\times}
\newcommand{\function}[1]{\ensuremath{\mathrm{#1}\xspace}}
\def\Ddots{\mathinner{\mkern1mu\raise\p@
\vbox{\kern7\p@\hbox{.}}\mkern2mu
\raise4\p@\hbox{.}\mkern2mu\raise7\p@\hbox{.}\mkern1mu}}
\newcommand{\lleq}{\leq_{\text{lex}}}
\begin{document}


\title[Definability of linear equation systems over groups and
rings]{Definability of linear equation systems\\over groups and
rings}

\author[A.~Dawar]{Anuj Dawar\rsuper a}	
\address{{\lsuper{a,c}}University of Cambridge, Computer Laboratory}	
\email{\{anuj.dawar, bjarki.holm\}@cl.cam.ac.uk}  

\author[E.~Gr\"adel]{Erich Gr\"adel\rsuper b}	
\address{{\lsuper{b,e}}RWTH Aachen
University, Mathematical Foundations of Computer Science}	
\email{\{graedel, pakusa\}@logic.rwth-aachen.de}  

\author[B.~Holm]{Bjarki Holm\rsuper c}	
\address{\vspace{-18 pt}}	

\author[E.~Kopczynski]{Eryk Kopczynski\rsuper d}	
\address{{\lsuper d}University of Warsaw, Institute of Informatics}	
\email{erykk@mimuw.edu.pl}  

\author[W.~Pakusa]{Wied Pakusa\rsuper e}	
\address{\vspace{-18 pt}}	



\keywords{finite model theory, logics with algebraic operators}

\thanks{{\lsuper{a,c,d,e}}The first and third
authors were supported by EPSRC grant
EP/H026835/1 and the fourth and fifth authors were supported by ESF
Research Networking Programme GAMES. The fourth author was also partially
supported by the Polish Ministry of Science grant N N206 567840.}


\begin{abstract}
Motivated by the quest for a logic for $\Ptime$ and 
recent insights that the descriptive complexity of problems from
linear algebra is a crucial aspect of this problem,
we study the solvability of linear equation systems over
finite groups and rings from the viewpoint of
logical (inter-)definability. All problems that we
consider are decidable in polynomial time, but not
expressible in fixed-point logic with counting. They also
provide natural candidates for a separation of 
polynomial time from rank logics, which extend
fixed-point logics by operators for determining the rank of definable
matrices and which are sufficient for solvability problems
over fields. 

Based on the structure theory of finite rings, we establish logical 
reductions among various solvability problems.
Our results indicate that \emph{all} solvability problems for linear 
equation systems 
that separate fixed-point logic with counting 
from $\Ptime$ can be reduced to solvability over 
commutative rings. 
Moreover, we prove closure properties for classes of queries that
reduce to solvability over rings, 
which provides normal forms
for logics extended with solvability operators. 

We conclude by studying the extent to which fixed-point logic with 
counting can express problems in linear algebra over finite 
commutative rings, generalising known results from~\cite{dawar09logics,holm10thesis,blass02polynomial} 
on the logical definability 
of linear-algebraic problems over finite fields.

\end{abstract}

\maketitle

%
%

%
%
\section*{Introduction}

The quest for a logic for $\Ptime$ \cite{FMTbook,grohe08quest}
is one of the central open problems in both finite model theory and
database theory.
Specifically, it asks whether there is a logic in which a class of
finite structures is expressible if,  and only if, membership in the class
is decidable in deterministic polynomial time.


Much of the research in this area has focused on 
the logic $\fpc$, the extension of inflationary 
fixed-point logic by counting terms. 
In fact, $\fpc$ has been shown to capture 
$\Ptime$ on many natural 
classes of structures, including planar graphs and structures of bounded 
tree-width 
\cite{grohe98fixedpoint,grohe08quest,grohe99treewidth}. 
Recently, it was shown by Grohe~\cite{grohe10minors} that $\fpc$ captures
polynomial time on all classes of graphs with excluded minors, a result that 
generalises most of the previous 
capturing results. 
More recently, it has been shown that $\fpc$ can express important
algorithmic techniques, such as the ellipsoid method for solving
linear programs~\cite{anderson13maximum}.

On the other side, already in 1992, Cai, F\"urer and 
Immerman~\cite{cai92optimal} constructed a graph query 
that can be decided in $\Ptime$, 
but which is not definable 
in $\fpc$. But while this
CFI query, as it is now called,
is very elegant and has led to new insights in many different areas, it can 
hardly be called a natural problem in polynomial time.
Therefore, it was often 
remarked that possibly all \emph{natural} polynomial-time properties of finite 
structures could be expressed in $\fpc$. However, this hope
was eventually refuted in a strong sense by Atserias, 
Bulatov and Dawar~\cite{atserias09affine} who proved that the 
important problem of \emph{solvability of linear equation systems} 
(over any 
finite Abelian group) is not definable in $\fpc$
and that, indeed, the CFI query reduces to this problem. 
This motivates the 
study of the relationship between 
finite model theory and linear algebra, and suggests
that operators from linear algebra could be a  
source of new extensions to fixed-point logic, in an attempt 
to find a logical characterisation of $\Ptime$. 
In~\cite{dawar09logics}, Dawar \etal pursued this direction of study by 
adding operators for expressing the rank of definable matrices over 
finite fields to first-order logic and fixed-point logic. 
They showed that fixed-point logic 
with rank operators ($\FPrk$) can define not only the solvability 
of linear equation systems over 
finite fields, but also 
the CFI query and essentially all other properties that were known to 
separate $\fpc$ from $\Ptime$. 
However, although $\FPrk$ is strictly more expressive than $\fpc$, 
it seems rather unlikely that $\FPrk$ suffices to capture $\Ptime$ 
on the class of all finite structures. 

A natural class of problems that might witness such a separation
arises from linear equation systems over finite domains other than 
fields. Indeed, the results of Atserias, 
Bulatov and Dawar~\cite{atserias09affine} imply that $\fpc$ 
fails to express the solvability of linear equation systems over any 
finite ring. On the other side, it is known that linear equation
systems over finite rings can be solved in polynomial time 
\cite{arvind10classifying}, but it is unclear whether any notion
of matrix rank is helpful for this purpose. We remark in this context
that there are several non-equivalent notions of matrix rank over rings,
but both the computability in polynomial time and the relationship
to linear equation systems remains unclear. Thus, rather than
matrix rank, the solvability of linear equation systems could
be used directly as a source of operators (in the form of 
generalised quantifiers) for extending fixed-point logics.


Instead of introducing a host of new logics, with operators for
various solvability problems, we set out here to investigate whether
these problems are inter-definable. In other words, are they
reducible to each other within $\fpc$? Clearly, if they are,
then any logic that generalises $\fpc$ and can define one,
can also define the others.
We thus study relations between solvability 
problems over (finite) rings, fields and Abelian groups in the context of 
logical many-to-one and Turing reductions, 
i.e., interpretations and generalised quantifiers. In this way, we show that 
solvability both over Abelian groups and over arbitrary (possibly
non-commutative) 
rings reduces to solvability over commutative rings. 
These results indicate that \emph{all} solvability problems for 
linear equation systems that separate $\fpc$ from $\Ptime$ can be reduced
to solvability over commutative rings. 
We also show that 
solvability over commutative rings reduces to solvability over local rings, 
which are the basic building blocks of finite commutative rings. Finally, 
in the other direction, we show that solvability over rings with a
linear order and solvability over 
local rings for which the maximal ideal is generated by $k$ elements,
reduces to
solvability over cyclic groups.  
Further, we prove closure properties for classes of queries that
reduce to solvability over rings, and establish normal forms
for first-order logic extended with operators for 
solvability over finite fields.

While it is known that solvability of linear equation systems over finite 
domains is not expressible in fixed-point logic with counting, it has also been 
observed that  the logic can define many other natural problems from linear 
algebra. For instance, it is known that over finite fields, the inverse to a 
non-singular matrix and the characteristic polynomial of a square matrix can be 
defined in \FPC~\cite{blass02polynomial, dawar09logics}. We conclude this paper 
by studying the extent to which these results can be generalised to finite 
commutative rings. 
Specifically, we use the structure theory of finite commutative rings to show 
that common basic problems in linear algebra over rings reduce to the
respective problems over \emph{local} rings. Furthermore, we show that over 
rings that split into a direct sum of $k$-generated local rings, matrix inverse 
can be defined in $\FPC$. Finally, we show that over the class of Galois rings, 
which are finite rings that generalise finite fields and rings of the form 
$\Zm{p^n}$, there is a formula of $\FPC$ which can define the coefficients of 
the characteristic polynomial of any square matrix. In particular, this shows 
that the matrix determinant is definable in $\FPC$ over such rings.


%
%
\section{Background on logic and algebra}
\label{sec_background}

Throughout this paper, all structures (and in particular, all algebraic
structures such as groups, rings and fields) are assumed to be finite.
Furthermore, it is assumed that all groups are Abelian, unless otherwise
noted.

\subsection{Logic and structures}

The logics we consider in this paper include \emph{first-order logic}
($\FO$) and \emph{inflationary fixed-point logic} ($\FP$) as well as their
extensions by counting terms, which we denote by $\FOC$ and $\FPC$,
respectively. 
We also consider the extension of first-order logic with
operators for deterministic transitive closure, which we denote by
$\FOdtc$. For details see~\cite{ebbinghaus99finite,FMTbook}.

A \emph{vocabulary} $\tau$ is a 
 sequence of relation and constant 
symbols $(R_1, \dots, R_k, c_1, \dots, c_\ell)$ in which every 
$R_i$ has an \emph{arity} $r_i \geq 1$. A
$\tau$-\emph{structure} $\struct A = ( \univ A,
R_1^{\struct A}, \dots, R_k^{\struct A}, c_1^{\struct A}, \dots,
c_\ell^{\struct A})$ consists of a non-empty set $\univ A$, called the
\emph{domain} of $\struct A$, together with relations $R_i^{\struct A}
\subseteq \univ A^{r_i}$ and constants $c_j^{\struct A} \in \univ A$
for each $i \leq k$ and $j \leq \ell$. 
Given a logic~$\logic L$ and a vocabulary $\tau$, we write  $\logic
L[\tau]$ to denote the set of $\tau$-formulas of~$\logic L$. 
A $\tau$-formula $\phi(\tup x)$ with $\card{\tup x} = k$
defines a \emph{$k$-ary query} that takes any $\tau$-structure $\struct A$
to the set $\phi(\tup x)^\struct{A} \defeq \{ \tup a \in \univ{A}^k \sep
\struct A \models \phi[\tup a] \}$. 
To evaluate formulas of counting logics like $\FOC$ and $\FPC$ we associate to 
each $\tau$-structure~$\struct A$ the two-sorted extension $\struct A^+$ of 
$\struct 
A$ by adding as a second sort the standard model 
of arithmetic  $\struct N = (\N, +, \cdot)$.
We assume that in such logics all variables (including the fixed-point 
variables) 
are typed 
and we require that quantification over the second sort is bounded by 
numerical terms in order to guarantee a polynomially bounded range of all 
quantifiers. To relate 
the original structure with the 
second sort we consider counting terms of the form $\# {x} \qsep \phi(x)$ which 
take as value the number of different elements $a \in \univ A$ such that 
$\struct A^+ \models \phi(a)$. For details see~\cite{FMTbook,dawar09logics}.

\makebreak

\makebreak

\noindent\textbf{Interpretations and logical reductions.} 
Consider signatures $\sigma$ and $\tau$ and a logic $\logic L$.  An
\emph{$m$-ary $\logic L$-interpretation of $\tau$ in $\sigma$} is a
sequence of formulas of $\logic L$ in vocabulary $\sigma$ consisting of:
(i) a formula $\delta(\tup x)$; (ii) a formula $\varepsilon(\tup x, \tup
y)$; (iii) for each relation symbol $R \in \tau$ of arity $k$, a formula
$\phi_R(\tup x_1, \dots, \tup x_k)$; and (iv) for each constant symbol $c
\in \tau$, a formula $\gamma_c(\tup x)$, where each $\tup x$, $\tup y$ or
$\tup x_i$ is an $m$-tuple of free variables. We call $m$ the \emph{width}
of the interpretation. We say that an interpretation $\mathcal{I}$
associates a $\tau$-structure $\mathcal{I}(\struct A) = \struct B$ to a
$\sigma$-structure $\struct A$ if there is a surjective map $h$ from the
$m$-tuples $\delta(\tup x) = \{ \tup a \in \univ{A}^m \sep \struct A
\models \delta[\tup a] \}$ to $\struct B$ such that:
\begin{itemize}
	\item $h(\tup a_1) = h(\tup a_2)$ if, and only if, $\struct A \models \varepsilon[\tup a_1, \tup a_2]$;
	
	\item $R^\struct{B}(h(\tup a_1), \dots, h(\tup a_k))$ if, and only if, $\struct A \models \phi_R[\tup a_1, \dots, \tup a_k]$; and
	
	\item $h(\tup a) = c^\struct{B}$ if, and only if, $\struct A \models \gamma_c[\tup a]$.
\end{itemize}

\makebreak

\noindent 


\noindent\textbf{Lindstr\"om quantifiers and extensions.} 
Let $\sigma = (R_1, \dots, R_k)$ be a vocabulary where each relation symbol 
$R_i$ has arity $r_i$, and consider a class
$\structClass K$ of $\sigma$-structures that is closed under isomorphism.

With $\structClass K$ and $m \geq 1$ we associate a 
\emph{Lindstr\"om quantifier}
$\lindstrom K^m$ whose \emph{type} is the tuple $(m; r_1, \ldots, 
r_k)$. For a
logic $\logic L$, we define the extension
$\logic{L}(\lindstrom K^m)$ by adding rules for constructing formulas of the
kind $\lindstrom{K} \tup x_\delta \tup x_\varepsilon \tup x_1 \dots \tup x_k 
\qsep (\delta, \varepsilon, \phi_1, \dots, \phi_k)$,
where $\delta, \varepsilon, \phi_1, \dots, \phi_k$ are $\tau$-formulas, $\tup 
x_\delta$ has length $m$, $\tup x_\varepsilon$ has length $2\cdot m$ and 
each $\tup x_i$ has length $m\cdot r_i$. To define the semantics of this new 
quantifier we associate the interpretation 
$\mathcal{I}=(\delta(\tup x_\delta), 
\varepsilon(\tup x_\varepsilon), (\phi_{i}(\tup x_i))_{1 \leq i \leq k})$ of 
signature $\sigma$ in $\tau$ of width $m$ and we let $\struct A \models
\lindstrom{K} \tup x_\delta \tup x_\varepsilon \tup x_1 \dots \tup x_k 
\qsep (\delta, \varepsilon, \phi_1, \dots, \phi_k)$ if
$\mathcal{I}(\struct A)$ is defined and $\mathcal{I}(\struct A) \in \structClass 
K$ as a $\sigma$-structure
(see~\cite{lindstroem66genQuantifiers, otto97bounded}).
Similarly we can consider the extension of~$\logic L$ by a
collection $\mathbf Q$ of Lindstr\"om quantifiers. The logic $\logic
L(\mathbf Q)$ is defined by adding a rule for constructing formulas with
$Q$, for each $Q \in \mathbf Q$, and the semantics is
defined by considering the semantics for each quantifier $Q \in \mathbf
Q$, as above. 
Finally, we write $\uniformLindstrom K
\defeq \{ Q^m_{\structClass K}  \sep m \geq 1 \}$ to denote the 
\emph{vectorised sequence} of Lindstr\"om quantifiers associated with $\mathcal 
K$
(see~\cite{dawar95generalized}).

\makebreak

\begin{defi}[Logical reductions]
Let $\structClass C$ be a class of $\sigma$-structures and $\structClass D$ a 
class of $\tau$-structures closed under isomorphism. 
\smallskip
\begin{itemize}
	\item 
	$\structClass C$ is said to be \emph{$\logic L$-many-to-one reducible} 
to $\structClass D$ ($\structClass C \logicalReduction{\logic L} \structClass 
D$) if there is an $\logic L$-interpretation $\mathcal{I}$ of $\tau$ in $\sigma$ 
such that for every $\sigma$-structure $\struct A$ it holds that $\struct A \in 
\mathcal C$ if, and only if, $\mathcal{I}(\struct A) \in \mathcal D$. 
	
	\item
	$\structClass C$ is said to be \emph{$\logic L$-Turing reducible} to 
$\structClass D$ ($\structClass C \logicalTurReduction{\logic L} \structClass 
D$) if $\structClass C$ is definable in $\logic L(\uniformLindstrom{D})$. 
\defnend
\end{itemize}
\end{defi}

\noindent
Note that as in the case of usual many-to-one and Turing-reductions,
we have that whenever a class $\structClass C$ is $\logic L$-many-to-one
reducible to a class $\structClass D$, $\structClass C$ is also $\logic
L$-Turing reducible to $\structClass D$.

\subsection{Rings and systems of linear equations}
\label{sec_structure-of-finite-rings-background}

We recall some definitions from commutative and linear algebra, assuming 
that the reader has knowledge of basic algebra and group theory (for
further details see Atiyah~\etal~\cite{atiyah1969introduction}). 
For $m \geq 2$, we write $\Zm m$ to denote the ring of integers modulo $m$. 

\makebreak

\noindent\textbf{Commutative rings.} Let $(R, \cdot, +, 1, 0)$ be a
commutative ring. 
An element $x \in R$ is a \emph{unit} if $x y = y x = 1$ for some $y \in
R$ and we denote by $\units R$ the set of all units. Moreover, we say that
$y$
\emph{divides} $x$ (written $y \divides x$) if $x = y z$ for some $z \in
R$. An element $x \in R$ is \emph{nilpotent} if $x^n = 0$ for some $n \in
\N$, and we call the least such $n \in \N$ the \emph{nilpotency} of $x$.
The element $x \in R$ is \emph{idempotent} if $x^2 = x$. Clearly $0, 1 \in
R$ are idempotent elements, and we say that an idempotent $x$ is
\emph{non-trivial} if $x \notin \{ 0, 1 \}$. Two elements $x,y \in R$ are
\emph{orthogonal} if $xy = 0$.

We say that $R$ is a \emph{principal ideal ring} if every ideal of $R$ is
generated by a single element. An ideal $m \subseteq R$ is called
\emph{maximal} if $m \neq R$ and there is no ideal $m^\prime \subsetneq R$
with $m \subsetneq m^\prime$. A commutative ring $R$ is 
\emph{local} if it contains a unique maximal ideal $m$. Rings that are both
local and principal are called \emph{chain
rings}. For example, all prime rings
$\Zm{p^n}$ are chain rings and so too are all finite fields.
More generally, a \emph{$k$-generated local ring} is a local ring for
which the maximal ideal is generated by $k$ elements. See
McDonald~\cite{mcdonald74finite} for further background.

\begin{rem}
When we speak of a ``commutative ring with a linear order'', then in
general the ordering does not respect the ring operations 
(cp.\ the notion of ordered rings from algebra).	
\end{rem}

\makebreak

\noindent\textbf{Systems of linear equations.}   
We consider systems of linear equations over groups and rings whose 
equations and variables are indexed by arbitrary sets, not necessarily
ordered. In the following, if $I$, $J$ and $X$ are finite and non-empty
sets then an \emph{$I \times J$ matrix} over $X$ is a function $A: I \times
J \rightarrow X$. An \emph{$I$-vector} over $X$ is defined similarly as a
function $\fvec b: I \rightarrow X$.

A system of linear equations over a group $G$ is a pair $(A,\fvec b)$ with
$A \colon I \times J \to \{ 0,1 \}$ and $\fvec b \colon I \rightarrow G$.
By viewing $G$ as a $\Z$-module (i.e.\ by defining the natural
multiplication between integers and group elements respecting $1 \cdot g =
g$, $(n+1)\cdot g = n \cdot g + g$, and $(n-1)\cdot g = n\cdot g - g$),
we write $(A,\fvec b)$ as a matrix equation
$A \cdot \fvec x = \fvec b$, where $\fvec x$ is a $J$-vector of variables
that range over $G$. The system $(A, \fvec b)$ is said to be
\emph{solvable} if there exists a
solution vector $\fvec c\colon J \rightarrow G$ such that $A \cdot \fvec c
= \fvec b$, where we define multiplication of unordered matrices and
vectors in the usual way by $(A \cdot \fvec c)(i) = \sum_{j \in J} A(i,j)
\cdot \fvec c(j)$ for all $i \in I$. We represent linear equation systems
over
groups as finite structures over the vocabulary $\vocLinEqGroup \defeq
( G, A, b , \vocGroup)$, where $\vocGroup \defeq ( +, e
)$ denotes the language of groups, $G$ is a unary relation symbol
(identifying the elements of the group) and $A$, $b$ are two binary
relation
symbols.

Similarly, a system of linear equations over a commutative ring $R$ is a
pair $(A,
\fvec b)$ where $A$ is an $I \times J$ matrix with entries in $R$ and
$\fvec b$ is an $I$-vector over $R$. As before, we usually write $(A, \fvec
b)$ as
a matrix equation $A \cdot \fvec x = \fvec b$ and say that $(A, \fvec b)$
is solvable if there is a solution vector $\fvec c: J \rightarrow R$ such
that $A \cdot \fvec c = \fvec b$. In the case that the ring $R$ is
not commutative, we represent linear systems in the form $A_\ell \cdot
\fvec x + (\fvec x^t \cdot A_r)^t = \fvec b$, where $A_\ell$ is an $I\times 
J$-matrix over $R$ and $A_r$ is a $J \times I$-matrix over $R$, 
respectively.

\makebreak

\noindent
We consider three different ways to
represent linear systems over rings as relational structures. For
simplicity, we just explain the case of linear systems over commutative rings 
here. The encoding of linear systems over non-commutative rings is analogous. 
Firstly, we
consider the case where the ring is part of the
structure. Let $\vocLinEqRing \defeq ( R, A, b, 
\vocRing)$, where $\vocRing = ( +, \cdot, 1, 0 )$ is the
language of rings, $R$~is a unary relation symbol
(identifying the ring elements), and $A$ and $b$ are ternary and binary
relation symbols, respectively. Then a finite $\vocLinEqRing$-structure
$\struct S$ describes the linear equation system $(A^\struct{S}, \fvec
b^\struct{S})$ over the ring $\struct R^{\struct{S} } = (R^\struct{S},
+^\struct{S}, \cdot^\struct{S})$. Secondly, we
consider a similar encoding but with the additional assumption that the
elements of the ring (but not the equations or variables of the equation
systems) are linearly ordered. Such systems can be seen as finite
structures over the vocabulary $\vocLinEqRingOrd \defeq (\vocLinEqRing, 
 \leqslant )$. Finally, we consider linear equation systems
over a fixed ring encoded in the vocabulary: for every ring~$R$, we define
the vocabulary $\vocLinEqRingFixed{R}
\defeq (A_r, b_r \sep r \in R )$, where for each $r \in R$ the symbols
$A_r$ and $b_r$ are binary and unary, respectively. A finite
$\vocLinEqRingFixed{R}$-structure $\struct S$ describes the linear equation
system $(A, \fvec b)$ over $R$ where $A(i,j) = r$ if, and only if, $(i,j)
\in A_r^\struct{S}$ and similarly for~$\fvec b$ (assuming that the
$A_r^\struct{S}$ form a partition of $I\times J$ and that the
$b_r^\struct{S}$ form a partition of $I$). 

Finally, we say that two linear equation systems $\struct S$
and $\struct S'$ are \emph{equivalent}, if
either both systems are solvable or neither system is solvable.

%
%
\section{Solvability problems over different algebraic domains}
\label{sec_les-different-domains}

\newcommand{\giveinappendix}[1]{}
It follows from the work of 
Atserias, Bulatov and Dawar~\cite{atserias09affine} that
fixed-point logic with counting 
cannot express solvability of linear equation systems
(`solvability
problems') over any class of (finite) groups or rings.
In this section we study solvability problems over such different
algebraic domains in terms of logical reductions. Our main result here is 
to show that the solvability problem over groups ($\SolveAbGroup$)
$\FOdtc$-reduces to the corresponding problem over commutative rings
($\SolveRing$) and that the solvability problem over commutative rings
which are equipped with a linear order~($\SolveOrdRing$)
$\FP$-reduces
to the solvability problem over cyclic groups ($\LCON$).
Note that over any non-Abelian group, the solvability problem already is 
\NPtime-complete~\cite{GoRu02}.

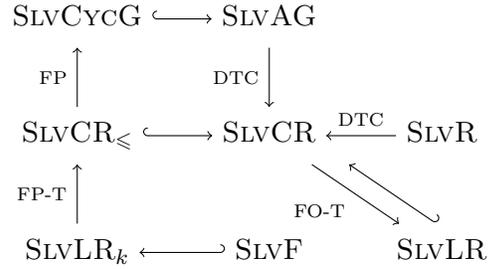
\begin{wrapfigure}{r}{0.50\textwidth}
  	\vspace{-20pt}
	\begin{center} 
		\begin{tikzpicture}[description/.style={fill=white,inner sep=2pt}]

\matrix (m) [matrix of math nodes, row sep=0.8cm,
column sep=0.8cm, text height=2ex, text depth=1ex]
{
	\LCON			&	\SolveAbGroup 	& 	\\
	\SolveOrdRing	&	\SolveRing		&	\SolveGeneralRing \\
	\SolvekLocalRing	&	\SolveField	
& \SolveLocalRing	\\
};

\path[->,font=\tiny]
	(m-2-1) edge node[left] {$\FP$} (m-1-1)
	(m-1-2) edge node[left] {\FOdtc} (m-2-2)
	(m-2-3) edge node[above] {\FOdtc} (m-2-2)
	(m-2-2) edge node[below left=-0.05] {$\FO$-\logic T} (m-3-3)
	(m-3-1) edge node[left] {$\FP$-\logic T} (m-2-1)
;

\path[right hook->]
	(m-1-1) edge node[auto] {} (m-1-2)	
	(m-2-1) edge node[auto] {} (m-2-2)
;
\path[left hook->]
	(m-3-2) edge node[auto] {} (m-3-1)
;

\draw[transform canvas={xshift=3ex},left hook->] (m-3-3) -- (m-2-2);

\end{tikzpicture}
	\end{center}
  	\vspace{-5pt}
	\caption{Logical reductions between solvability problems.  Curved
arrows ($\hookrightarrow$) denote inclusion of
one class in another.}
	\label{fig_logical-reductions-solvability}
  	\vspace{-4pt}
\end{wrapfigure}
Our methods can be further adapted to show that solvability over
arbitrary (that is, not necessarily commutative) rings
($\SolveGeneralRing$)
$\FOdtc$-reduces
to $\SolveRing\xspace$. We then consider the solvability problem restricted
to special classes of commutative
rings: local rings ($\SolveLocalRing$) and $k$-generated local rings
($\SolvekLocalRing$), which generalises solvability over finite fields
($\SolveField$). The reductions that we establish are
illustrated in 
Figure~\ref{fig_logical-reductions-solvability}. 

In the remainder of this section we describe three of the outlined
reductions: from commutative rings equipped with a linear order to
cyclic groups, from groups to commutative rings, and finally from
general rings to commutative rings. 
To give the remaining reductions from commutative rings to local rings and
from $k$-generated local rings to commutative linearly ordered rings we
need to delve further into the theory of finite commutative rings, which is
the subject of \S\ref{sec_structure-of-finite-rings}. 

\medskip

Let us start by considering the solvability problem over commutative
rings that come with a linear order. We want to construct an
$\FP$-reduction that translates from linear systems over such rings to
equivalent linear equation systems over cyclic groups. Hence, if the ring
is linearly ordered (and in particular if the ring is fixed), this shows
that, up to \FP-definability, it suffices to analyse the solvability
problem over cyclic groups.

\smallskip
The main idea of the reduction, which is given in full detail in the
proof of Theorem~\ref{theorem_reduction-ordings-lcon}, is as follows: for a
ring $R=(R,+,\cdot)$, we consider a decomposition of the additive group
$(R,+)$ into a direct sum of cyclic groups $\langle g_1 \rangle \oplus \cdots
\oplus \langle g_k \rangle$ for appropriate elements $g_i \in R$. Then every
element $r \in R$ can uniquely be identified with a $k$-tuple $(r_1, \dots,
r_k) \in \Zm{\ell_1} \times \cdots \times \Zm{\ell_k}$ where $\ell_i$ denotes
the order of $g_i$ in $(R,+)$, and furthermore, the addition in $R$
translates to component-wise addition (modulo $\ell_i$) in $\Zm{\ell_1} \times
\cdots \times \Zm{\ell_k}$.

Having such a group decomposition at hand, this suggests to treat linear
equations
component-wise, i.e.\ to let variables range over the cyclic summands
$\langle g_i \rangle$ and to split each equation into a set of equations 
accordingly. In
general, however, in contrast to the ring addition, the ring multiplication
will not be compatible with such a decomposition of the group $(R,+)$.
Moreover, an expression of the ring elements with respect to a
decomposition of $(R,+)$ has to be definable in fixed-point logic. To
guarantee this last point, we make use of the given linear ordering.

\begin{thm}
\label{theorem_reduction-ordings-lcon}
$\SolveOrdRing \fpleq \LCON$. 
\end{thm}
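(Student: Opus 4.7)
The plan is to follow the strategy outlined just before the statement: use the linear order on $R$ to fix a canonical direct-sum decomposition $(R,+) = \langle g_1 \rangle \oplus \cdots \oplus \langle g_k \rangle$ of the additive group into cyclic summands of orders $\ell_s = \mathrm{ord}(g_s)$, translate the input equation system coordinate-wise into a family of linear congruences modulo the various $\ell_s$, and finally merge these congruences into a single linear system over the cyclic group $\mathbb{Z}/L$ for $L = \mathrm{lcm}(\ell_1, \ldots, \ell_k)$. Because the linear order is on $R$, the Immerman--Vardi theorem applied to the ring reduct shows that every polynomial-time-computable feature of the ordered ring is $\FP$-definable from $R$ alone; in particular we obtain $\FP$-definability of the lexicographically smallest such sequence $(g_1, \ldots, g_k)$, the orders $\ell_s$, the coordinate map $r \mapsto (r_1, \ldots, r_k)$ with $r = \sum_s r_s g_s$, and the integer coefficients $d_t^{(r,s)} \in \{0, \ldots, \ell_t - 1\}$ in the expansion $r g_s = \sum_t d_t^{(r,s)} g_t$. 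The index sets $I$ and $J$ of the input system need not be ordered, since every one of these quantities depends only on $R$.

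For the coordinate-wise translation, I would replace each original variable $x_j$ by $k$ new variables $x_{j,s}$ intended so that $x_{j,s} \in \mathbb{Z}/\ell_s$ and $x_j = \sum_s x_{j,s} g_s$. Substituting into the $i$-th equation $\sum_j a_{ij} x_j = b_i$, using $a_{ij} g_s = \sum_t d_t^{(a_{ij},s)} g_t$, and reading off the coefficient of $g_t$ on both sides yields the congruence
\[
\sum_{j,s} d_t^{(a_{ij},s)} x_{j,s} \;\equiv\; e_t^{(i)} \pmod{\ell_t},
\]
where $b_i = \sum_t e_t^{(i)} g_t$. By directness of the decomposition, the family of all these congruences (indexed by $(i,t)$) is solvable if and only if the original system over $R$ is. A key well-posedness observation is that multiplying $a_{ij} g_s = \sum_t d_t^{(a_{ij},s)} g_t$ by $\ell_s$ annihilates the left-hand side (since $\ell_s g_s = 0$), so by directness $\ell_s d_t^{(a_{ij},s)} \equiv 0 \pmod{\ell_t}$. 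To collapse the family into a single system over $\mathbb{Z}/L$, I would lift each $x_{j,s}$ to a variable $y_{j,s}$ ranging over $\mathbb{Z}/L$ and replace each congruence by the equation $\tfrac{L}{\ell_t} \sum_{j,s} d_t^{(a_{ij},s)} y_{j,s} = \tfrac{L}{\ell_t} e_t^{(i)}$ in $\mathbb{Z}/L$; the well-posedness property shows that this lifted equation depends only on $y_{j,s} \bmod \ell_s$, so the resulting $\mathbb{Z}/L$-system is equivalent to the original.

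The main obstacle I anticipate is ensuring that the decomposition $(g_1, \ldots, g_k)$ and the coefficient tables $d_t^{(r,s)}$ are canonically and $\FP$-definably determined from the input -- precisely what the linear order on $R$ buys us, via the Immerman--Vardi theorem applied to the ordered ring reduct. Once this is in hand, the formal packaging as an $\FP$-interpretation of the $\LCON$-vocabulary in $\vocLinEqRingOrd$ is routine: choose a width and tuple-encoding for the output universe (new indices $(j,s)$ and $(i,t)$, plus the $L \leq |R|$ elements of $\mathbb{Z}/L$, which can be named by the first $L$ elements of $R$ under the order), define every output relation by an $\FP$-formula, and finally rewrite the $\mathbb{Z}/L$-system -- which naturally uses arbitrary integer coefficients -- in the $0/1$-matrix encoding of $\vocLinEqGroup$ by introducing enough auxiliary copies of each variable; this last rewriting is first-order, so the overall construction remains an $\FP$-interpretation.
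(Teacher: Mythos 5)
Your proof is correct and follows the paper's strategy closely: use the linear order together with the Immerman--Vardi theorem to fix an $\FP$-definable cyclic decomposition of $(R,+)$, translate the system coordinate-wise, lift the resulting congruences modulo $\ell_t$ to a single system over $\Zm{L}$ (note that $L=\operatorname{lcm}(\ell_1,\dots,\ell_k)=\operatorname{char}(R)$, so this is the same target cyclic group as in the paper) by multiplying each congruence through by $L/\ell_t$, and finally recode integer coefficients as repeated variables to obtain the $\{0,1\}$-matrix encoding. The one genuine simplification is in how you handle the ring multiplication: the paper decomposes both the coefficient $r$ and the variable $x$ in the chosen basis and expands $r\cdot x$ bilinearly via the structure constants $g_i g_j = \sum_y c_y^{ij} g_y$, which then forces the compatibility argument $\pi(rx)=\pi(r)\pi(x)$; you instead exploit the fact that the coefficients $a_{ij}$ are fixed ring elements and directly tabulate the group endomorphism $g_s \mapsto a_{ij} g_s = \sum_t d_t^{(a_{ij},s)} g_t$ of $(R,+)$, so the only well-posedness fact you need is $\ell_s\, d_t^{(a_{ij},s)} \equiv 0 \pmod{\ell_t}$, which you correctly derive from $\ell_s g_s=0$ and directness of the sum. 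Both routes produce the same translated system; yours avoids the bilinear bookkeeping, and since all of the required data ($g_s$, $\ell_s$, $d_t^{(r,s)}$) depend only on the linearly ordered ring reduct, your appeal to Immerman--Vardi for $\FP$-definability is sound even though $I$ and $J$ are unordered.
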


\begin{proof}
Consider a system of linear equations $(A, \fvec b)$ over a
commutative ring $R$ of
characteristic $m$ and let $\leqslant$ be a linear order on $R$. 
In the following we
describe a mapping that translates the system $(A, \fvec b)$ into a system 
of equations $(A^\star, \fvec
b^\star)$ over the cyclic group $\Zm{m}$ which is solvable if, and only 
if, $(A, \fvec b)$ has a solution over $R$. Observe that the group $\Zm{m}$
can easily be interpreted in the ring~$R$ in fixed-point
logic, for instance as the subgroup of $(R,+)$ generated by the
multiplicative identity. Indeed, for the purpose of the following
construction we could also identify $\Zm{m}$ with the cyclic group
generated by any element $r \in R$ which has maximal order in $(R,+)$.

Let $\{ g_1, \dots, g_k \} \subseteq R$ be a (minimal) generating set for 
the additive group $(R,+)$ and let~$\ell_i$ denote the order of
$g_i$ in $(R,+)$. Moreover, let us choose the set of generators such that
that $\ell_1 \divides \ell_2 \divides \cdots \divides \ell_k
\divides m$. 
From now on, we identify the group 
generated by $g_i$ with the group $\Zm{m}/\ell_i\Zm{m}$ and thus have
$(R,+) \isom (\Zm{m})^k/(\ell_1\Zm{m} \times \cdots \times
\ell_k\Zm{m})$. 
In
this way we obtain a unique representation for each element $r \in R$ as 
$r=(r_1, \dots, r_k)$ where~$r_i \in \Zm{m}/\ell_i\Zm{m}$. 
Similarly, we can identify variables~$x$ ranging
over $R$ with tuples $x = (x_1, \dots, x_k)$ where $x_i$ ranges
over~$\Zm{m}/\ell_i\Zm{m}$.

\smallskip
To translate a linear equation over $R$ into an equivalent set of 
equations over~$\Zm{m}$, the crucial step is to consider the
multiplication of a coefficient $r \in R$
with a variable $x$ with respect to the chosen representation, 
i.e.\ the
formal expression  $r \cdot x = (r_1, \dots, r_k) \cdot (x_1, \dots, x_k)$.
We observe that the ring multiplication is
uniquely determined by the products of all pairs of generators $g_i
\cdot g_j$, so we let $g_i \cdot g_j = \sum_{y=1}^k c_{y}^{ij} \cdot g_y$,
where $c_y^{ij} \in \Zm{m}/\ell_y\Zm{m}$ for $1 \leq y \leq k$.

Now, let us reconsider the formal expression $r \cdot x$ from above where
$r_i, x_i \in \Zm{m}/\ell_i\Zm{m}$ for $1 \leq i \leq k$, then we have
\[ (r_1 g_1 + \cdots + r_k g_k) \cdot (x_1 g_1 + \cdots + x_k
g_k)=\sum_{i,j\leq k} r_i x_j \sum_{y=1}^k c_y^{ij} g_y = \sum_{y=1}^k
\Big(\sum_{i,j \leq k} r_i x_j c_y^{ij}\Big) g_y. \]

Here, the coefficient of generator $g_y$ in the last expression is an
element in
$\Zm{m}/\ell_y\Zm{m}$, which in turn means that we have to reduce all
summands $r_ix_jc_y^{ij}$ modulo $\ell_y$. To see that this transformation
is sound, we choose $z \in
\Zm{m}$ arbitrary such that $\ord(g_ig_j) \divides z - r_ix_j$. However,
since it holds that $\ell_y \divides \ord(g_ig_j) \cdot c_y^{ij}$ for all
$1 \leq y \leq k$ we conclude that $z c_y^{ij} = r_ix_j c_y^{ij} \mod
\ell_y$ for all $1 \leq i,j,y \leq k$.
Finally, since $\ell_y \divides m$ for all $1 \leq
y \leq k$ we can uniformly consider all terms as taking values in $\Zm{m}$
first, and then reduce the results modulo $\ell_y$ afterwards.

For notational convenience, let us set
$b^{r,y}_j:=
\sum_{i=1}^k r_i c_y^{ij}$, then we can write $rx = (\sum_{j=1}^k b^{r,1}_j
x_j) g_1 + \cdots + (\sum_{j=1}^k b^{r,k}_j x_j) g_k$. Note that the
remaining multiplications between variables $x_j$ and coefficients
$b_j^{r,y}$ are just multiplications in $\Zm{m}/\ell_y\Zm{m}$.

\smallskip
However, for our translation we face a problem, since we cannot 
express that $x_i$ ranges over $\Zm{m}/\ell_i\Zm{m}$ as a linear equation
over $\Zm{m}$.
To overcome this obstacle, let us first drop the requirement completely,
i.e.\ let us consider the multiplication of $R$ in the form given above
lifted to the
group $(\Zm{m})^k$. Furthermore, let $\pi$ denote the natural group
epimorphism which maps $(\Zm{m})^k$ onto $(R,+)$. We claim that for all $r,
x \in (\Zm{m})^k$ we have $\pi(rx)=\pi(r)\pi(x)$. Together with the fact
that $\pi$ is also a group homomorphism from $(\Zm{m})^k$ to $(R,+)$ this
justifies doing all calculations in $\Zm{m}$ first, and reducing the
result to $(R,+)$ via~$\pi$ afterwards.
To see that $\pi(rx)=\pi(r)\pi(x)$ for all $r, x \in (\Zm{m})^k$ let us
denote by $\pi_y$ the natural group epimorphism from
$\Zm{m}\to\Zm{m}/\ell_y\Zm{m}$. Note that for $r = r_1g_1 + \cdots +
r_kg_k$ we have $\pi(r) = \pi_1(r_1)g_1 + \cdots + \pi_k(r_k)g_k$.
Then we have to show that for all $1 \leq y
\leq k$ we have 
$\pi_y( \sum_{i,j \leq k} r_ix_j c_y^{ij} ) =\pi_y(
\sum_{i,j \leq k} \pi_i(r_i) \pi_j(x_j) c_y^{ij} )$. 
For this it suffices to show that for all $i,j \leq k$ we have
$\pi_y( (r_ix_j-\pi_i(r_i)\pi_j(x_j)) c_y^{ij}) = 0$. Let $i \leq j$ (the
other case is symmetric), then $\ord{(g_ig_j)} \divides \ell_i \divides
\ell_j$ and thus by the definition of $c_y^{ij}$ we conclude that $\ell_y
\divides \ell_i c_y^{ij}$. Since $\ell_i \divides \ell_j$ we know that
$\ell_i \divides (r_ix_j-\pi_i(r_i)\pi_j(x_j))$ which yields the
result. 

We are prepared to give the final reduction. In a first step we
substitute each variable $x$ by a tuple of variables $(x_1, \dots, x_k)$
where $x_i$ takes values in $\Zm{m}$. We then translate all terms $rx$ in
the equations of the original system according to the above explanations
and split each equation $e=c$ into a set of $k$ equations $e_1=c_1,
\dots, e_k=c_k$ according to the decomposition of $(R,+)$. We finally 
have to address the issue that the set of new equations is not
equivalent to the original equation $e=c$; indeed, we really want to
introduce the set of equations $\pi_1(e_1)=\pi_1(c_1), \dots,
\pi_k(e_k)=\pi_k(c_k)$. However, this problem can be
solved easily as for all $1 \leq i \leq k$ the linear equation
$\pi_i(e_i)=\pi_i(c_i)$ over $\Zm{m}/\ell_i\Zm{m}$ is equivalent to the
linear equation $\ell_i e_i = \ell_i c_i$ over $\Zm{m}$.

Hence, altogether we obtain a
system of linear equations $(A^\star, \fvec b^\star)$  over $\Zm{m}$
which is solvable if, and only if, the original system $(A, \fvec b)$ has a
solution over $R$.

\medskip

\noindent
We proceed to explain that the mapping $(A, \fvec b) \mapsto (A^\star,
\fvec b^\star)$ can be expressed in $\FP$.
Here, we crucially rely on the given order on $R$ to fix a set of 
generators. More specifically, as we can
compute a set of generators in time polynomial in $\card{R}$, it  follows
from the Immerman-Vardi
theorem~\cite{immerman86relational,vardi82complexity} that there  is an
$\FP$-formula~$\phi(x)$ such that
$\phi(x)^{R} = \{g_1, \dots, g_k \}$ generates $(R, +)$ and  $g_1 \leqslant
\cdots \leqslant g_k$. Having fixed
a set of generators, it is obvious that the  map $\iota: R \rightarrow
(\Zm{m})^k/(\ell_1\Zm{m} \times \cdots
\times \ell_k \Zm{m})$ taking $r \mapsto (r_1, \dots, r_k)$,  is
 $\FP$-definable.
\noindent
Furthermore, the map $(r,y,j) \mapsto b_j^{r,y}$ can easily be formalised
in~$\FP$, since the coefficients are just obtained by performing
a polynomial-bounded number of ring operations. 
Splitting the original system of equations component-wise into $k$ systems
of linear equations, multiplying them with a coefficient $\ell_i$
and combining them again to a single system over $\Zm{m}$ is trivial. 

\smallskip
Finally, we note that a linear system
over the \emph{ring} $\Zm{m}$ can be reduced to an equivalent system
over the \emph{group} $\Zm{m}$, by rewriting terms $ax$ with
$a \in\Zm{m}$ as $x+x+\dots+x$ ($a\text{-times}$).
\end{proof}

Note that in the proof we crucially rely on the fact that we have given a
linear order on the ring~$R$ to be able to fix a set of generators of
the Abelian group $(R,+)$.


\noindent
So far, we have shown that the solvability problem over linearly ordered
commutative rings can be reduced to the solvability problem
over groups. This
raises the question whether a translation in the other direction is 
also possible;
that is, whether we can reduce the solvability problem over groups to
the solvability problem
over commutative rings. 
Essentially, such a reduction requires a logical interpretation of a
commutative ring in a
group, which is what we describe in the proof of the following theorem.

\begin{thm}
\label{theorem_reduction-groups-rings}
$\SolveAbGroup \dtcleq \SolveRing$. 
\end{thm}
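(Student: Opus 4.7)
My approach is to transform a linear system $(A, \fvec b)$ over a finite abelian group $G$ (with $A \colon I \times J \to \{0,1\}$ and $\fvec b \colon I \to G$) into an equivalent linear system $(A^\star, \fvec b^\star)$ over a suitable finite commutative ring $R$, and to argue that the whole translation is an $\FOdtc$-interpretation of $\vocLinEqRing$ in $\vocLinEqGroup$. Write $e$ for the exponent of $G$ and let $R$ be the \emph{trivial extension} of $\Z/e\Z$ by $G$: as an additive group $R = (\Z/e\Z) \oplus G$, the identity is $1_R = (1, 0)$, and multiplication is defined by
\[
(n_1, g_1) \cdot (n_2, g_2) \;:=\; (n_1 n_2,\; n_1 g_2 + n_2 g_1).
\]
This is a finite commutative ring in which $N := \{0\} \oplus G$ is an ideal with $N^2 = 0$, and $G$ embeds as $N$ via $g \mapsto (0, g)$.

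Introduce one ring-valued variable $X^{(j)}$ for each original variable $x^{(j)}$. The new system consists of two families of equations. First, the \emph{row equations}, one per $i \in I$,
\[
\sum_{j \in J} A(i,j)\,X^{(j)} \;=\; (0, b_i),
\]
with each $A(i,j) \in \{0,1\}$ viewed as the scalar $(A(i,j), 0) \in R$. Second, the \emph{annihilator equations}, one per pair $(j, h) \in J \times G$,
\[
(0, h) \cdot X^{(j)} \;=\; 0,
\]
which pin $X^{(j)}$ into the ideal $N$: writing $X^{(j)} = (n_j, g_j)$, these constraints yield $n_j h = 0$ for every $h \in G$, hence $e \mid n_j$ and $n_j = 0$ in $\Z/e\Z$. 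Thus any solution has $X^{(j)} = (0, c^{(j)})$, whereupon the row equations collapse to $\sum_j A(i,j)\,c^{(j)} = b_i$ in $G$; conversely, any solution $(c^{(j)})$ of $(A, \fvec b)$ lifts via $X^{(j)} := (0, c^{(j)})$. The two systems are therefore equivalent.

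To realise the map $(A, \fvec b) \mapsto (A^\star, \fvec b^\star)$ as an $\FOdtc$-interpretation, the key point is that on $(G,+)$ deterministic transitive closure expresses the integer-scaling relation $n \cdot g = h$ by iterating $(k, g') \mapsto (k+1,\, g' + g)$; this in turn makes the order of every element, the exponent $e$, and the set of elements of maximum order definable. Fixing such an element $g^\star$ as a parameter, I identify $\Z/e\Z$ with $\langle g^\star \rangle \subseteq G$ and interpret the universe of $R$ as the definable set $\langle g^\star \rangle \times G \subseteq G \times G$; addition is componentwise and the trivial-extension product is first-order over the DTC-defined scaling. The output rows $I \cup (J \times G)$, columns $J$, and the ternary/binary relations $A^\star, b^\star$ are then immediate to define from $A$ and $\fvec b$. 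The main obstacle is precisely this step: defining the ring multiplication on $R$ rests on the $\Z$-module action of integers on $G$, which is not expressible in pure $\FO$, and is exactly what deterministic transitive closure supplies. Once the ring is interpretable, the equivalence of solvability is immediate from the calculation above.
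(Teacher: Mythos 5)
Your proof is correct and follows essentially the same approach as the paper: the ring you build is exactly the paper's $\phi(G) = G \times \Zm{d}$ (the trivial extension of $\Zm{e}$ by $G$, with $d=e$ the exponent), your row equations are exactly the paper's lifted system $(A^\star, \fvec b^\star)$, and your $\FOdtc$-interpretation argument (DTC-definable integer scaling, max-order element as parameter) matches the paper's. The only minor deviation is your annihilator equations $(0,h)\cdot X^{(j)}=0$, which pin each solution component into the ideal $\{0\}\oplus G$; the paper omits them and instead observes that for any solution $\fvec s = \fvec s_g + \fvec s_n$, the numeric part satisfies $A^\star \cdot \fvec s_n \in \iota(\Zm d)^I$ while $\fvec b^\star \in \iota(G)^I$, so $\fvec s_n$ projects away and $\fvec s_g$ already yields a group solution, making the extra equations unnecessary.
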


\begin{proof}
Let $(A, \fvec b)$ be a system of linear equations over a group
$(G,+_G,e)$, where $A \in \{0,1\}^{I \times J}$ and $\fvec b \in G^I$. For the
reduction, we first construct a commutative ring $\phi(G)$ from $G$ and then
lift $(A,\fvec b)$ to a system of equations $(A^\star, \fvec{b}^\star)$ which
is solvable over $\phi(G)$ if, and only if, $(A,\fvec b)$ is solvable over $G$.

We consider $G$ as a $\Z$-module in the usual way and write $\cdot_{\Z}$ for 
multiplication of group elements by integers. Let $d$ be the least
common multiple of the order of all group elements. Then we
have $\ord_{G}(g) \divides d$ for all $g\in G$,
where $\ord_{G}(g)$ denotes the order of $g$. This allows us to obtain
from
$\cdot_{\Z}$ a well-defined multiplication of $G$ by elements of $\Zm{d}= \{
\eqclass 0 d, \dots, \eqclass {d-1} d \}$ which commutes with group addition. We
write $+_d$ and $\cdot_d$ for addition and multiplication in $\Zm{d}$, where
$\eqclass 0 d$ and $\eqclass 1 d$ denote the additive and multiplicative
identities, respectively. We now consider the set $G \times \Zm{d}$ as a group,
with component-wise addition defined by $(g_1,m_1) + (g_2,m_2) \defeq (g_1 +_G
g_2, m_1 +_d m_2)$, for all $(g_1,m_1), (g_2,m_2) \in G \times \Zm{d}$, and
identity element $0 = (e, \eqclass 0 d)$. We endow $G \times \Zm{d}$ with a
multiplication $\bullet$ which is defined as $(g_1, m_1) \bullet (g_2, m_2)
\defeq \bigl((g_1 \cdot_{\Z} m_2 +_G g_2 \cdot_{\Z} m_1), (m_1 \cdot_d m_2)
\bigr)$.

\smallskip
It is easily verified that this multiplication is associative, commutative and
distributive over $+$. It follows that $\phi(G) \defeq (G \times \Zm{d}, +,
\bullet, 1, 0)$ is a commutative ring, with identity $1 = (e, \eqclass 1 d)$.
For $g \in G$ and $z \in \Z$ we set $\overline g \defeq (g,\eqclass 0 d) \in
\phi(G)$ and $\overline z \defeq (e, \eqclass z d)\in \phi(G)$. Let $\iota: \Z
\cup G \rightarrow \phi(G)$ be the map defined by $x \mapsto \overline x$.
Extending $\iota$ to relations in the obvious way, we write $A^\star \defeq
\iota(A) \in \iota(\Zm d)^{I \times J}$ and $\fvec{b}^\star \defeq \iota(\fvec
b) \in \iota(G)^{I}$. 

\makebreak
\noindent
\emph{Claim.}
The system $(A^\star, \fvec{b}^\star)$ is solvable over $\phi(G)$ if, and only
if, $(A, \fvec b)$ is solvable over $G$.

\makebreak
\noindent
\emph{Proof of claim.}
In one direction, observe that a solution $\fvec s$ to $(A,\fvec b)$
gives the
solution $\iota(\fvec s)$ to $(A^\star, \fvec{b}^\star)$. For the other
direction, suppose that $\fvec s \in \phi(G)^J$ is a vector such that $A^\star
\cdot \fvec s = \fvec b^\star$. Since each element $(g,\eqclass m d) \in
\phi(G)$ can be written uniquely as $(g,\eqclass m d) = \overline g + \overline
m$, we write $\fvec s = \fvec s_g + \fvec s_n$, where $\fvec s_g \in \iota(G)^J$
and $\fvec s_n \in \iota(\Zm{d})^J$. Observe that we have $\overline g \bullet
\overline m \in \iota(G) \subseteq \phi(G)$ and $\overline n \bullet \overline m
\in \iota(\Zm d) \subseteq \phi(G)$ for all $g \in G$ and $n,m \in \Z$. Hence,
it follows that $A^\star \cdot \fvec s_n \in \iota(\Zm{d})^I$ and $A^\star \cdot
\fvec s_g \in \iota(G)^I$. Now, since $\fvec b^\star \in \iota(G)^I$, we have
$\fvec b^\star = A^\star \cdot \fvec s = A^\star \cdot \fvec s_g + A^\star \cdot
\fvec s_n = A^\star \cdot \fvec s_g$. Hence, $\fvec s_g$ gives a solution to
$(A, \fvec b)$, as required.

\smallskip

All that remains is to show that our reduction can be formalised as an 
interpretation in $\FOdtc$.
Essentially, this comes down to showing that 
the ring $\phi(G)$ can be interpreted in $G$ by formulas of $\FOdtc$. By
elementary group theory, we know that for elements $g \in G$ of maximal
order we have $\ord{(g)} =
d$. It is not hard to see that the set of group elements of maximal order
can
be defined in $\FOdtc$; for example, for any fixed $g \in G$ the set of 
elements of the form $n \cdot g$ for $n \geq 0$ is $\FOdtc$-definable as a
reachability query in the deterministic directed graph $E = \{ (x,y) : y = x + g 
\}$.
Hence, we can interpret $\Zm{d}$ in $G$, and as 
on ordered domains, $\FOdtc$ expresses all $\Logspace$-computable queries 
(see e.g.~\cite{FMTbook}) the multiplication of $\phi(G)$ is also
$\FOdtc$-definable, which completes the proof.
\end{proof}

\noindent
We conclude this section by discussing the solvability problem over
general (i.e.\ not necessarily commutative) rings $R$. Over such rings,
linear equation systems have a representation of the form
$A_\ell \cdot \fvec x + (\fvec x^t \cdot A_r)^t = \fvec b$ where $A_\ell$ and
$A_r$ are two coefficient matrices over $R$. This representation takes into
account the difference between left and right multiplication of variables
with coefficients from the ring.

First of all, if the ring comes with a linear ordering, then it is
easy to adapt the proof of
Theorem~\ref{theorem_reduction-ordings-lcon} for the case of
non-commutative rings. Hence, in this case we obtain again an
$\FP$-reduction to the solvability problem over cyclic groups. Moreover, in
what follows we are going to establish a $\FOdtc$-reduction from the
solvability problem over general rings to the solvability problem over
commutative rings $R$. These results indicate that from the viewpoint of
$\FP$-definability the solvability problem does not become harder when
considered over arbitrary (i.e.\ possibly non-commutative) rings.

As a technical preparation, we first give a
first-order interpretation that transforms a linear equation systems
over $R$ into an equivalent system with the following
property: the linear equation system is solvable if, and only if, the
solution
space contains a \emph{numerical solution}, i.e.\ a solution over $\Z$. 

%


\begin{lem}
\label{lemma_solvability-numerical-solutions}
There is an $\FO$-interpretation $\mathcal{I}$ of $\vocLinEqRing$ in 
$\vocLinEqRing$ such that for every linear equation system $\struct S:
A_\ell \cdot
\fvec x + (\fvec x \cdot A_r)^t = \fvec b$ over $R$,
$\mathcal{I}(\struct{S})$
describes a
linear equation system $\struct{S^\star}: A^\star \cdot_\Z \fvec{x}^\star =
\fvec{b}^\star$ over the $\Z$-module $(R, +)$ such that
$\struct S$ is solvable over $R$ if, and only if, $\struct S^\star$ has a
solution over $\Z$. 
\end{lem}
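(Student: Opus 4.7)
My plan is to linearise the system by expanding each $R$-valued unknown into a family of $\Z$-valued unknowns, using $R$ itself as a spanning set for $(R,+)$ viewed as a $\Z$-module. Concretely, for every original variable $x_j$ introduce new integer unknowns $(z_{j,r})_{r\in R}$, corresponding to the representation $x_j = \sum_{r\in R} z_{j,r}\cdot_\Z r$. Since any $c\in R$ satisfies $c = 1\cdot_\Z c$, every value of $x_j$ is realisable by some integer assignment of the $z_{j,r}$ (with much redundancy, which is harmless for the solvability question).

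Exploiting $\Z$-bilinearity of the ring multiplication, each mixed term in the $i$-th equation of $\struct S$ rewrites as
\[
a_{\ell,ij}\cdot x_j + x_j\cdot a_{r,ij} \;=\; \sum_{r\in R} z_{j,r}\cdot_\Z \bigl(a_{\ell,ij}\cdot r + r\cdot a_{r,ij}\bigr),
\]
so that equation becomes $\sum_{(j,r)\in J\times R} z_{j,r}\cdot_\Z A^\star(i,(j,r)) = b_i$, with new coefficient $A^\star(i,(j,r)) := a_{\ell,ij}\cdot r + r\cdot a_{r,ij}\in R$ and right-hand side $b^\star_i := b_i$. This defines the target system $\struct{S^\star}$ over the $\Z$-module $(R,+)$. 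For the forward direction of correctness, a solution $(c_j)_j \in R^J$ of $\struct S$ yields the integer solution $z_{j,c_j}=1$ and $z_{j,r}=0$ for $r\neq c_j$. Conversely, any integer solution $(z_{j,r})$ of $\struct{S^\star}$ gives the solution $c_j := \sum_r z_{j,r}\cdot_\Z r$ of $\struct S$ by reversing the identity above.

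It remains to realise $\mathcal I$ as a first-order interpretation, which I plan to do with width $2$: the new variable sort is carved out as pairs $(j,r)\in J\times R$, while ring elements and equation indices are carried over as ``diagonal'' pairs $(x,x)$. Each new coefficient $A^\star(i,(j,r))$ is a fixed expression in $A_\ell^\struct S(i,j)$, $A_r^\struct S(j,i)$ and $r$ using the ring operations $+,\cdot$ of $\vocRing$, so its graph is quantifier-free definable once the slot $(i,(j,r))$ is fixed; the relation $b^\star$ is just $b$. There is no real obstacle here: the construction is essentially algebraic, and the only point requiring care is the bookkeeping between the ``left'' and ``right'' contributions $a_{\ell,ij}\cdot r$ and $r\cdot a_{r,ij}$ when combining them into a single module-coefficient, together with checking that the $\Z$-action commutes with one-sided ring multiplication, which is immediate from the definition of $\cdot_\Z$ on the Abelian group $(R,+)$.
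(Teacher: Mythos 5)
Your proposal is correct and follows essentially the same route as the paper: new $\Z$-valued variables indexed by $J\times R$, linearisation via $\Z$-bilinearity of the ring product, the indicator-function assignment for the forward direction, and the sum $\sum_r z_{j,r}\cdot_\Z r$ for the converse. The only cosmetic difference is that the paper first duplicates variables so that each occurs with only left- or only right-hand coefficients before substituting, whereas you fold both contributions into a single coefficient $a_{\ell,ij}\cdot r + r\cdot a_{r,ij}$, which is equally valid and slightly more direct.
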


\begin{proof}[Proof]
Let 
$A_\ell \in R^{I \times J}$, $A_r \in R^{J \times I}$, and $\fvec b
\in R^I$. By duplicating each variable we can assume that for every $j \in
J$ we have that $A_\ell(i,j)=0$ for all $i \in I$, or $A_r(j,i)=0$ for all
$i \in I$, i.e.\ we assume that each variable occurs only with either
left-hand or right-hand coefficients.
For $\struct S^\star$, we introduce for each variable $x_j$
($j \in J$) and each element $s \in R$ a new variable $x_j^s$, i.e.\ the
index set for the variables of $\struct S^\star$ is $J \times R$. 
Finally, we replace all terms of the form $rx_j$ by $\sum_{s \in R}
rsx_j^s$, and similarly, terms of the form $x_jr$ by $\sum_{s \in R}
srx_j^s$. If we let the new variables $x_j^s$ take values in $\Z$,
then we obtain a new linear equation system of the desired form
$\struct{S^\star}: A^\star \cdot_\Z \fvec{x}^\star = \fvec{b}^\star$ over
the $\Z$-module $(R,+)$.
It is easy to see that this transformation can be formalised by an
\FO-interpretation $\mathcal{I}$.

\medskip
Finally we observe that the newly constructed linear equation system
$\struct {S^\star}$ is equivalent to the original system $\struct S$.
To see this, assume that $\fvec x \in R^J$ is a solution of the
original system. By setting $x_j^s = 1$ if $x_j = s$ and by setting $x_j^s
= 0$ otherwise, we obtain a solution $\fvec {x^\star} \in \Z^{R \times J}$
of the system~$\struct {S^\star}$. For the other direction, assume that
$\fvec {x^\star} \in \Z^{R \times J}$ is a solution of $\struct{S^\star}$.
Then we set $x_j := \sum_{s \in R} sx_j^s = \sum_{s \in R} x_j^ss$ to get a
solution $\fvec x$ for the system $\struct{S}$.
\end{proof}

\noindent
By Lemma~\ref{lemma_solvability-numerical-solutions}, 
we can restrict to linear equation systems $(A, \fvec b)$ over the
$\Z$-module $(R,+)$ where variables take values in $\Z$. However, since
$\Z$ is an infinite domain, we let $d=\max\{\ord(r) : r \in R\}$ denote the
maximal order of elements in the group $(R,+)$. Then we can also treat $(A,
\fvec b)$ as an equivalent linear equation system over the $\Zm{d}$-module
$(R,+)$.

\smallskip
 At this point, we reuse
our construction from                                          
Theorem~\ref{theorem_reduction-groups-rings} to obtain a linear
system $(A^\star, \fvec b^\star)$ over the commutative ring $R^\star
\defeq \phi((R,+))$, where $A^\star \defeq \iota(A)$ and $\fvec
b^\star \defeq \iota(\fvec b)$. 
We claim that $(A^\star, \fvec b^\star)$ is solvable over
$R^\star$ if, and only if, $(A, \fvec b)$ is solvable over $R$. For the
non-trivial direction, suppose $\fvec s$ is a solution to $(A^\star, \fvec
b^\star)$ and decompose $\fvec s = \fvec s_g + \fvec s_n$ into group
elements and number elements, as explained in the proof of
Theorem~\ref{theorem_reduction-groups-rings}.
Recalling that $\overline r_1 \bullet \overline r_2 = 0$ for all $r_1,
r_2 \in R$, it follows that $A^\star \bullet (\fvec s_g + \fvec s_n) =
A^\star \bullet \fvec s_n = \fvec b^\star$. Hence, there is a solution
$\fvec s_n$ to $(A^\star, \fvec b^\star)$ that consists \emph{only} of
number elements, as claimed. Thus we obtain:

\begin{thm}
 $\SolveGeneralRing \dtcleq \SolveRing$.
\end{thm}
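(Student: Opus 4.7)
The plan is to chain together Lemma~\ref{lemma_solvability-numerical-solutions} with the commutative-ring embedding $\phi$ from the proof of Theorem~\ref{theorem_reduction-groups-rings}. Starting from a system $A_\ell \cdot \fvec x + (\fvec x^t \cdot A_r)^t = \fvec b$ over a (possibly non-commutative) ring $R$, the lemma first replaces it by an equivalent system $A \cdot_\Z \fvec x^\star = \fvec b$ over the $\Z$-module $(R,+)$ in which each variable ranges over $\Z$. Because this step collapses both left- and right-multiplication into a single additive equation, the non-commutativity of $R$ is absorbed at the outset. Since $(R,+)$ is a finite Abelian group of exponent $d := \max\{\ord(r) : r \in R\}$, the variables can equivalently be taken to range over $\Zm{d}$.

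Next I would apply the construction $\phi$ from the proof of Theorem~\ref{theorem_reduction-groups-rings} to the group $(R,+)$, producing the commutative ring $R^\star := \phi((R,+))$ together with the embedding $\iota$. Setting $A^\star := \iota(A)$ and $\fvec b^\star := \iota(\fvec b)$ lifts the module system to a linear equation system over $R^\star$. The crucial bookkeeping point is that the entries of $A$ produced by the lemma are elements of $R$ (they arise as products $rs$ with $r,s \in R$), and hence the entries of $A^\star$ all lie in the pure group-part $\iota(R) \subseteq R^\star$.

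The heart of the argument is the equivalence claim that $(A^\star, \fvec b^\star)$ is solvable over $R^\star$ if, and only if, the original system is solvable over $R$. The forward direction is immediate: a $\Zm d$-valued solution to the module equation lifts via $\iota$ to a solution in $R^\star$. For the converse, given $\fvec s \in (R^\star)^J$ with $A^\star \bullet \fvec s = \fvec b^\star$, I would decompose $\fvec s = \fvec s_g + \fvec s_n$ into its group- and number-components. Because every entry of $A^\star$ lies in $\iota(R)$ and $\overline{r_1} \bullet \overline{r_2} = 0$ for all $r_1, r_2 \in R$, the contribution $A^\star \bullet \fvec s_g$ vanishes, leaving $A^\star \bullet \fvec s_n = \fvec b^\star$. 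Reading off group-components on both sides then produces a $\Zm d$-valued solution to the lemma's system, which the lemma converts back into a solution of the original system over $R$.

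Finally I would verify $\FOdtc$-definability of the whole reduction. Lemma~\ref{lemma_solvability-numerical-solutions} provides a first-order interpretation, while the proof of Theorem~\ref{theorem_reduction-groups-rings} already establishes that $\phi(G)$ and $\iota$ are $\FOdtc$-interpretable: the exponent $d$ is defined by a reachability query, and multiplication in $\phi(G)$ is computable in $\Logspace$ on the ordered domain $\Zm d$, which $\FOdtc$ captures. Composing these interpretations yields the required $\FOdtc$-reduction. I expect the main point needing care to be the observation underlying the converse direction: one must insist that after the lemma's variable-doubling every coefficient really is a pure ring element, so that $A^\star$ lands inside $\iota(R)$ and the ``group times group equals zero'' phenomenon of $R^\star$ can annihilate the $\fvec s_g$ part of an arbitrary solution.
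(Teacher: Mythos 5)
Your proposal matches the paper's proof essentially step for step: apply Lemma~\ref{lemma_solvability-numerical-solutions} to pass to a $\Zm{d}$-module system over $(R,+)$ with $R$-valued coefficients and integer-valued variables, embed via $\phi$ and $\iota$ from the proof of Theorem~\ref{theorem_reduction-groups-rings}, and in the converse direction decompose a solution $\fvec s = \fvec s_g + \fvec s_n$ and use $\overline{r_1}\bullet\overline{r_2}=0$ (group part times group part vanishes) to discard $\fvec s_g$, leaving a purely number-valued solution. The bookkeeping point you flag — that after the lemma every coefficient is a genuine ring element, so $A^\star$ lands in $\iota(R)$ and the annihilation works — is exactly the observation the paper relies on, and your $\FOdtc$-definability remarks agree with the paper's as well.
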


%
%
\section{The structure of finite commutative rings}
\label{sec_structure-of-finite-rings}

In this section we study 
structural properties of (finite) commutative
rings and present the remaining reductions for solvability outlined in
\S\ref{sec_les-different-domains}: from commutative rings to local rings,
and from $k$-generated local rings to commutative rings with a linear
order. 
Recall that a commutative ring $R$ is local if it
contains a unique maximal ideal $m$. 
The importance of the notion of local rings comes from the fact that
they are the basic building blocks of finite
commutative rings. We start by summarising some of their useful
properties.

\begin{prop}[Properties of (finite) local
rings]
\label{proposition_local-rings-properties} 
Let $R$ be a finite commutative ring.
\begin{itemize}
	\item If $R$ is local, then the unique maximal ideal is $m = R
\setminus \units R$.
	\item $R$ is local if, and only if, all idempotent elements in
$R$ are trivial. 
	\item If $x \in R$ is idempotent then $R = x \cdot R \oplus
(1-x) \cdot R$ as a direct sum of rings.
	\item If $R$ is local then its cardinality (and in particular its
characteristic) is a prime power.
\end{itemize}
\end{prop}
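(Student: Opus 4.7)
The plan is to prove the four items in sequence, each by elementary commutative algebra. For item~(i), I would note that every non-unit $x \in R$ generates a proper ideal $xR$ contained in some maximal ideal; if $R$ is local this maximal ideal must be $m$, giving $R \setminus \units R \subseteq m$, while conversely no element of the proper ideal $m$ can be a unit (else $1 \in m$). For item~(iii), the sets $xR$ and $(1-x)R$ are clearly ideals, every $r \in R$ decomposes as $r = xr + (1-x)r$, and $xR \cap (1-x)R = 0$ since any element of the intersection is annihilated by the identity $x(1-x) = 0$. Equipping the two summands with identities $x$ and $1-x$ respectively yields the ring direct sum $R = xR \oplus (1-x)R$.

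For item~(ii), the forward direction uses that any idempotent $e$ satisfies $e(1-e) = 0$ and $e + (1-e) = 1$. Since $m = R \setminus \units R$ is an additively closed proper set (containing $0$ but not $1$), at least one of $e, 1-e$ must be a unit; multiplying $e(1-e)=0$ by its inverse forces the other to vanish, so $e \in \{0,1\}$. The converse is the heart of the proposition: I would observe that in any finite commutative ring every element has an idempotent power. Indeed, pigeonholing the sequence $x, x^2, x^3, \ldots$ produces $i,p$ with $x^{i+p} = x^i$, and then for any multiple $N$ of $p$ with $N \geq i$ one checks that $x^N \cdot x^N = x^N$. Hence, if the only idempotents of $R$ are trivial, then for each $x \in R$ either $x^N = 0$ (so $x$ is nilpotent) or $x^N = 1$ (so $x$ is a unit with inverse $x^{N-1}$). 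The nilradical is always an ideal in a commutative ring, so $R \setminus \units R$ is an ideal, necessarily the unique maximal ideal of $R$.

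For item~(iv), assume $R$ is local with maximal ideal $m$. Then $R/m$ is a finite field, so $|R/m| = p^a$ for some prime $p$. Since $R$ is finite, the descending chain $m \supseteq m^2 \supseteq \cdots$ stabilises; Nakayama's lemma (applied to the stable term, using that $m$ is the Jacobson radical of $R$) forces $m^n = 0$ for some $n$. Each successive quotient $m^i/m^{i+1}$ is a vector space over $R/m$, hence of cardinality a power of $p$, and the filtration $R \supset m \supset \cdots \supset m^n = 0$ gives $|R| = \prod_{i<n} |m^i/m^{i+1}|$, a power of $p$. The characteristic of $R$ is the additive order of $1$, which divides $|R|$ and is therefore likewise a power of $p$.

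The main obstacle is the converse of item~(ii); once one establishes that in a finite commutative ring every element admits an idempotent power, the hypothesis on idempotents collapses into the strong structural statement that the non-units coincide with the nilradical, from which locality is immediate. The remaining items are essentially routine, using only standard facts about Artinian rings and finite fields.
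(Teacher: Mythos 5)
Your proposal is correct. Items (i)--(iii) follow essentially the same path as the paper: for (ii) you argue, as the paper does, that a non-trivial idempotent $e$ would force $e$ and $1-e$ to both lie in $m$, contradicting $e+(1-e)=1$, and for the converse you extract an idempotent power of each element by pigeonholing the sequence $x, x^2, x^3, \ldots$, exactly the device the paper uses (the paper phrases it as $x^{n+km}=x^n$ and takes $x^{nm}$). Where you genuinely diverge is item (iv). The paper avoids any appeal to the residue field or to Nakayama's lemma: writing $\card{R}=p^k n$ with $p\nmid n$ and $n>1$, it uses a B\'ezout identity $xp^k+yn=1$ to split $R$ as an internal direct sum $I_p\oplus I_n$ of two nonzero ideals of annihilators, which (implicitly invoking item (ii)) yields a non-trivial idempotent and hence a contradiction with locality. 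You instead observe that $R/m$ is a finite field of order $p^a$, that $m$ is nilpotent (via the stabilising chain $m\supseteq m^2\supseteq\cdots$ and Nakayama, applied with $m$ as the Jacobson radical), and that each quotient $m^i/m^{i+1}$ is an $R/m$-vector space, giving $\card{R}$ as a product of powers of $p$. Both arguments are correct; the paper's is more elementary and self-contained (needing only B\'ezout and the already-proved idempotent criterion), while yours exposes more of the structure theory of local Artinian rings (the $m$-adic filtration) and makes the prime $p$ appear intrinsically as the characteristic of the residue field rather than as a factor of $\card{R}$ picked at the outset. A small point worth noting is that you do not actually need Nakayama here: since, by your own proof of (ii), every element of $m$ is nilpotent and $m$ is finitely generated, nilpotency of $m$ follows directly.
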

\begin{proof} The first claim follows directly by the uniqueness of
the maximal ideal $m$. For the second part, assume $R$ is local but
contains a non-trivial idempotent element $x$, i.e.\ $x (1-x) = 0$ but
$x\not=0,1$. In this case
$x$ and $(1-x)$ are two non-units distinct from $0$, hence $x, (1-x) \in
m$. But then $x + (1-x) = 1 \in m$ which yields the contradiction. On
the other hand, if $R$ only contains trivial idempotents, then we claim
that every non-unit in $R$ is nilpotent: assume that $x\not=0$ is a
non-unit which is not nilpotent, then $x^{n+km}=x^n$ for some $m,
n\geq 1$ and all $k \geq 1$ because $R$ is finite. In particular,
\[ x^{nm} \cdot x^{nm} = x^n \cdot x^{nm-n} \cdot x^{nm} = x^{n+nm}
\cdot x^{nm-n} = x^n \cdot x^{nm-n} = x^{nm}.\]
Since $x^{nm}\not=1$ we have $x^{nm}=0$ which is a contradiction to
our assumption that $x$ is not nilpotent. Hence we have that $x$ is a
non-unit if, and only if, $x$ is nilpotent. Knowing this, it is easy
to verify that also sums of non-units are non-units, which implies that
the set of non-units forms a unique maximal ideal in $R$.

For the third part, assume $x \in R$ is  idempotent. Then $(1-x)^2 =
(1 - 2x + x^2) = (1-x)$ so $(1-x)$ is also idempotent. Furthermore,
as $x(1-x) = 0$ we see that $x$ and $(1-x)$
are orthogonal, and since $x + (1-x) = 1$, any element $r \in R$
can be expressed as $r = rx + r(1-x)$, so we conclude that $R = x R
\oplus (1-x) R$.


Finally, let $R$ be local and suppose $\card{R} = p^k n $ where $p
\nmid n$. We want to show that $n =1$. Otherwise $I_p = \lbrace r \in
R \sep p^k r = 0 \rbrace$ and $I_n = \lbrace r \in R \sep n r = 0
\rbrace$ would be two proper distinct ideals. To see this, let 
$x,y \in \Z$ with $x p^k + y n = 1$. We first show that $I_p\cap
I_n=\{0\}$. Assume $p^k r = 0 = n r$ for some $r \in R$, then $x p^k r
+ y n r = 0$ and hence $r = 0$.
Furthermore we show that $R = I_p + I_c$: for each $r \in R$ we
have that $n r \in I_p, p^k r \in I_n$ and so $y n r + x p^k r = r \in
(I_p + I_n)$. This shows that $R$ does not contain a unique maximal ideal,
and so $R$ was not local. 
\end{proof}

\noindent
By this proposition we know that
finite commutative rings can be decomposed into local summands that
are principal ideals generated by pairwise orthogonal
idempotent elements. Indeed, this decomposition is unique (for more
details, see e.g.~\cite{bini02finite}).
\noindent

\begin{prop}[Decomposition into local rings]
\label{proposition_decomposition-into-local-rings}
Let $R$ be a (finite) commutative ring. Then there is a unique set 
$\ringBase R \subseteq R$ of pairwise orthogonal idempotent elements
for which it holds that (i) $e \cdot R$ is local for each $e \in
\ringBase R$; (ii) $\sum_{e \in \ringBase R} e = 1$;  and (iii) $R =
\bigoplus_{e \in \ringBase R} e \cdot R$. 
\end{prop}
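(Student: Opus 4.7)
The plan is to combine the ring-decomposition $R = xR \oplus (1-x)R$ arising from any idempotent $x$ (Proposition~\ref{proposition_local-rings-properties}, third bullet) with the characterisation of local rings as those whose only idempotents are $0$ and $1$ (second bullet). Existence will follow by iterating the decomposition until no summand contains a non-trivial idempotent; uniqueness will follow by characterising $\ringBase R$ intrinsically as the set of \emph{primitive} non-zero idempotents, where I call a non-zero idempotent $e$ primitive if $eR$ is local.

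For existence I argue by induction on $|R|$. If every idempotent of $R$ is trivial, then $R$ is itself local and $\ringBase R := \{1\}$ satisfies (i)--(iii). Otherwise pick a non-trivial idempotent $x$ and write $R = xR \oplus (1-x)R$ as a direct sum of commutative rings with identities $x$ and $1-x$. Both summands are strictly smaller, so induction supplies $\ringBase{xR}$ and $\ringBase{(1-x)R}$; take their union as $\ringBase R$. Orthogonality within each piece is inductive, and between pieces it follows because $x(1-x)=0$ annihilates any cross-product. The sum of all elements is $x+(1-x)=1$. For $e \in \ringBase{xR}$, since $e \in xR$ and $xR \cap (1-x)R = \{0\}$, the ideal $eR$ inside $R$ coincides with $e \cdot (xR)$, which is local by induction; combining the two inductive decompositions with $R = xR \oplus (1-x)R$ yields $R = \bigoplus_{e \in \ringBase R} eR$.

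For uniqueness, suppose $\ringBase R = \{e_1, \ldots, e_n\}$ satisfies (i)--(iii), and let $f$ be any primitive non-zero idempotent of $R$. Expanding $f = f \cdot 1 = \sum_i f e_i$, each $fe_i$ is an idempotent (by commutativity) that lies in $e_i R$, and therefore by Proposition~\ref{proposition_local-rings-properties} equals either $0$ or the identity $e_i$ of $e_i R$. Hence $f = \sum_{i \in S} e_i$ for some non-empty $S \subseteq \{1,\ldots,n\}$, and multiplying by $f$ shows each $e_i$ with $i \in S$ satisfies $f e_i = e_i$, so it lies in $fR$. If $|S| \geq 2$, then any single $e_i$ with $i \in S$ would be a non-trivial idempotent of the local ring $fR$ (nonzero, and distinct from $f$ because the other summands contribute a nonzero element), contradicting Proposition~\ref{proposition_local-rings-properties}. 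Thus $|S|=1$ and $f = e_i$, so $\ringBase R$ is precisely the set of primitive non-zero idempotents of $R$, which is an invariant of $R$. The delicate point throughout is to keep careful track of which idempotent is the identity of which summand; the key lever that makes the matching work is the fact, already supplied by the preceding proposition, that every local commutative ring has only trivial idempotents, which collapses the sum $f = \sum f e_i$ to a single term.
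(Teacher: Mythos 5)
Your proof is correct, and it is worth noting that the paper does not actually prove this proposition; it simply cites the literature (\cite{bini02finite}) for uniqueness and lets the statement stand on the back of Proposition~\ref{proposition_local-rings-properties}. So there is no in-paper argument to compare against, but your argument is exactly the standard one and fills the gap cleanly.

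A few small points worth tightening if this were to be included. In the existence induction you should note explicitly why both $xR$ and $(1-x)R$ are strictly smaller: since $x\neq 0,1$, both $x$ and $1-x$ are nonzero, so both summands are nonzero, and a direct-sum decomposition into two nonzero pieces forces each piece to have strictly smaller cardinality. In the uniqueness step, the parenthetical ``distinct from $f$ because the other summands contribute a nonzero element'' is a little quick: the clean way to see $e_i\neq f$ when $|S|\geq 2$ is to suppose $e_i=f=\sum_{j\in S}e_j$ and multiply by some $e_k$ with $k\in S$, $k\neq i$; orthogonality collapses the left side to $0$ and the right side to $e_k$, and $e_k\neq 0$ because $e_kR$ is local (in particular nonzero). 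Finally, your intrinsic characterisation of $\ringBase R$ as the nonzero idempotents $e$ with $eR$ local is equivalent to, but phrased differently from, the one the paper later uses in the proof of Lemma~\ref{lemma_ring-idempotent-base} (nonzero idempotents not expressible as a sum of two orthogonal nontrivial idempotents); either formulation gives uniqueness, and it might be worth remarking that they coincide.
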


\noindent
We next show that the ring decomposition $R =
\bigoplus_{e \in \ringBase R} e \cdot R$  is $\FO$-definable. As a first
step, we note that $\ringBase R$ (the
\emph{base} of $R$) is $\FO$-definable over $R$. 

\begin{lem}
\label{lemma_ring-idempotent-base} 
There is a formula $\phi(x) \in \FO(\vocRing)$ such that 
$\phi(x)^R = \ringBase{R}$ for every (finite) commutative ring $R$.
\end{lem}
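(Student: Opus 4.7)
The plan is to exhibit an explicit first-order formula that characterises membership in $\ringBase R$ by combining Proposition~\ref{proposition_local-rings-properties} (a commutative ring is local iff it has only trivial idempotents) with Proposition~\ref{proposition_decomposition-into-local-rings}. The key observation is that for an idempotent $e \in R$, the ring $e \cdot R$ is itself a commutative ring with multiplicative identity $e$, and its idempotents are exactly those $y \in R$ satisfying $y^2 = y$ and $e y = y$. Hence $e \cdot R$ is local precisely when every such $y$ is either $0$ or $e$.

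I would therefore take
\[
\phi(x) \;\defeq\; x \cdot x = x \,\land\, x \neq 0 \,\land\, \forall y\,\bigl( y \cdot y = y \land x \cdot y = y \to y = 0 \lor y = x \bigr),
\]
which is clearly a formula in $\FO(\vocRing)$, and verify $\phi(x)^R = \ringBase R$ for every finite commutative ring $R$.

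For the easy direction, if $e \in \ringBase R$ then $e$ is a non-zero idempotent (by (i), (ii) of Proposition~\ref{proposition_decomposition-into-local-rings}) and $e \cdot R$ is local. Any $y$ with $y^2 = y$ and $e y = y$ is an idempotent of $e \cdot R$, so by Proposition~\ref{proposition_local-rings-properties} it must be trivial in $e \cdot R$, i.e.\ $y \in \{0, e\}$. Thus $\phi(e)$ holds. For the converse, assume $\phi(e)$ holds and use the decomposition $R = \bigoplus_{f \in \ringBase R} f \cdot R$ to write $e = \sum_{f \in \ringBase R} e_f$ with $e_f \in f \cdot R$. Because the summands are pairwise orthogonal, the equation $e^2 = e$ forces $e_f^2 = e_f$ in the local ring $f \cdot R$ for each $f$, so by Proposition~\ref{proposition_local-rings-properties} we get $e_f \in \{0, f\}$. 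Hence $e = \sum_{f \in S} f$ for some $S \subseteq \ringBase R$, and every $f \in S$ satisfies $f^2 = f$, $f \neq 0$, and $e f = f$ (by orthogonality and idempotency of $f$). The third conjunct of $\phi(e)$ then forces $f = e$ for each $f \in S$, so $S = \{e\}$ and $e \in \ringBase R$.

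The main obstacle is really only in the converse direction, where one has to be careful that a non-zero idempotent $e$ with $e \cdot R$ local is not merely a sum of several base idempotents but equals a single one. This is handled by the orthogonality argument above, which reduces matters to Proposition~\ref{proposition_local-rings-properties}. Everything else is a direct first-order translation, so no additional logical machinery beyond $\FO(\vocRing)$ is needed.
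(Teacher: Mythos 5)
Your proof is correct, and it arrives at a slightly cleaner formula than the one the paper sketches. The paper characterises $\ringBase R$ as the set of \emph{non-trivial} idempotents (i.e.\ $\neq 0,1$) that cannot be written as a sum of two orthogonal non-trivial idempotents, which leaves out the case where $R$ is itself local and $\ringBase R = \{1\}$; the paper therefore handles that case by a separate disjunct that tests locality of $R$ via Proposition~\ref{proposition_local-rings-properties}. Your formula instead encodes directly the defining property from Proposition~\ref{proposition_decomposition-into-local-rings}: $x$ is a non-zero idempotent such that $x \cdot R$ (whose idempotents are exactly the $y$ with $y^2 = y$ and $xy = y$) has only the trivial idempotents $0$ and $x$, hence is local. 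Because you only require $x \neq 0$ rather than $x$ non-trivial, the local case is absorbed automatically: when $R$ is local, $\phi(1)$ reduces precisely to the statement that all idempotents of $R$ are trivial, and nothing else satisfies $\phi$. The two characterisations are of course equivalent — a non-trivial idempotent of $e \cdot R$ yields a decomposition $e = x + (e - x)$ into orthogonal non-trivial idempotents and conversely — so this is the same underlying idea, but your single uniform formula is a modest simplification. Your verification of the converse inclusion, via the decomposition $e = \sum_f e_f$ and orthogonality forcing $S = \{e\}$, is sound.
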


\begin{proof}
We claim that $\ringBase{R}$ consists precisely of those non-trivial
idempotent elements of $R$ which cannot be expressed as the sum of two
orthogonal non-trivial idempotent elements. To establish this claim,
consider an element $e \in \ringBase{R}$ and suppose that $e = x + y$
where $x$ and $y$ are orthogonal non-trivial idempotents. It follows
that $e$ is different from both $x$ and $y$, since if $e = x$, say,
then $y = e - x = 0$ and similarly when $e = y$. Now $ex = xe = x(x+y)
= x^2 + xy = x$ and, similarly, $ey = y$. Since both $ex$ and $ey$ are
idempotent elements in $eR$, it follows that $ex, ey \in \lbrace 0, e
\rbrace$, since $eR$ is local with identity $e$ and contains no
non-trivial idempotents. But by the above we know that $ex = x \neq e$
and $ey = y \neq e$, so $ex = ey = x = y = 0$. This contradicts the
fact that $e = x + y$ is non-trivial, so the original assumption must
be false.

Conversely, suppose $x \in R$ is a non-trivial idempotent element that
cannot be written as the sum of two orthogonal non-trivial
idempotents. Writing $\ringBase{R} = \lbrace e_1, \dots, e_m \rbrace$,
we get that  
\[
	x = x (1) = x (e_1 + \cdots + e_m) = xe_1 + \cdots + xe_m.
\]

\noindent
Each $xe_i$ is an idempotent element of $e_iR$ and since $e_iR$ is
local, $xe_i$ must be trivial. Hence, there are distinct $f_1, \dots,
f_n \in \ringBase{R}$, with $n \leq m$, such that $x = f_1 + \cdots +
f_n$. But since $x$ cannot be written as a sum of two (or more)
non-trivial idempotents, it follows that $n = 1$ and $x \in
\ringBase{R}$, as claimed.

\makebreak

\noindent
Now it is straightforward to write down a first-order formula that
identifies exactly all non-trivial idempotent elements that are not
expressible as the sum of two non-trivial orthogonal idempotents.
Moreover, if $R$ was already local then trivially $\ringBase{R} = \{1\}$.
To test for locality, it suffices
by Proposition~\ref{proposition_local-rings-properties} to check
whether all idempotent elements in $R$ are trivial and this can be
expressed easily in first-order logic.
\end{proof}

\noindent
The next step is to show that the canonical mapping $R \to \bigoplus_{e \in
\ringBase R} e \cdot R$ can be defined in $\FO$. To this end, recall from
Proposition~\ref{proposition_local-rings-properties} that
for every $e \in \ringBase R$ (indeed, for any idempotent element $e
\in R$), we can decompose the ring $R$ as $R = e
\cdot R \oplus (1-e) \cdot R$. This fact allows us to define for all base
elements $e \in \ringBase R$ the projection of elements $r \in R$ onto the
summand $e\cdot R$ in first-order logic, without having to keep track of
all local summands simultaneously.

\begin{lem}
\label{lemma_projection-local-ring}
There is a formula $\psi(x, y, z) \in \FO(\vocRing)$
such that for all rings $R$, $e \in \mathcal{B}(R)$ and $r, s \in R$, it
holds that $(R, e, r, s) \models \psi$ if, and only if, $s$ is the
projection of $r$ onto $e \cdot R$. 
\end{lem}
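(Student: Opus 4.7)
The plan is to exhibit the projection map onto $e \cdot R$ as an explicit algebraic operation, which will reduce the definition of $\psi$ to a single atomic formula. More precisely, I will argue that the projection of $r \in R$ onto the summand $e \cdot R$ in the decomposition $R = e \cdot R \oplus (1-e) \cdot R$ is simply the element $e \cdot r$, so one may take $\psi(x, y, z) \defeq (z = x \cdot y)$.

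To justify this, I would appeal to Proposition~\ref{proposition_local-rings-properties}: since every $e \in \ringBase R$ is in particular idempotent, $R$ decomposes as a direct sum of rings $R = e \cdot R \oplus (1-e) \cdot R$. Writing $1 = e + (1-e)$, every $r \in R$ satisfies
\[ r = e \cdot r + (1-e) \cdot r, \]
and the two summands lie in $e \cdot R$ and $(1-e) \cdot R$ respectively (the former because $e \cdot (e \cdot r) = e^2 \cdot r = e \cdot r$, the latter by the symmetric argument). By uniqueness of the decomposition in a direct sum, $e \cdot r$ is \emph{precisely} the projection of $r$ onto $e \cdot R$. Hence for every $e \in \ringBase R$ and $r, s \in R$, we have $(R, e, r, s) \models \psi$ iff $s = e \cdot r$ iff $s$ is the projection of $r$ onto $e \cdot R$, as required.

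There is no genuine obstacle here: all the work is done by the structural Proposition~\ref{proposition_local-rings-properties}, which gives the direct-sum decomposition, and by the elementary observation that projection in this decomposition coincides with multiplication by the idempotent $e$. Note also that, although $\ringBase R$ is itself $\FO$-definable by Lemma~\ref{lemma_ring-idempotent-base}, we do not need to incorporate this into $\psi$, since the lemma only asserts correctness of $\psi$ when $e$ is drawn from $\ringBase R$; the formula may behave arbitrarily on other elements.
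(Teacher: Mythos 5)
Your proof is correct and rests on the same structural fact as the paper's, namely the direct-sum decomposition $R = eR \oplus (1-e)R$ from Proposition~\ref{proposition_local-rings-properties}. The only difference is cosmetic: the paper defines
\[
\psi(x,y,z) := \exists a, b\,\big((y = x\cdot a + (1-x)\cdot b) \wedge (z = x\cdot a)\big),
\]
which says that $s$ is the $eR$-component of \emph{some} decomposition of $r$, while you go one step further and observe that, because the decomposition is unique, that component is always the closed-form expression $e\cdot r$, giving the quantifier-free formula $z = x\cdot y$. The two are equivalent whenever $e$ is idempotent, and your version is marginally simpler.
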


\begin{proof}
The formula can simply be given as
\[
	\psi(x, y, z) := \exists a, b \,\big( (y =
x\cdot a + (1-x) \cdot b) \wedge (z = x \cdot a)\big).   
\]
\end{proof}

\noindent
It follows that any relation over $R$ can be decomposed in first-order 
logic according to the decomposition of $R$ into local summands. In
particular, a linear equation system $(A \sep \fvec b)$ over $R$ is
solvable if, and only if, each of the projected linear equation systems
$(A^e \sep \fvec b^e)$ is solvable over $eR$. Hence, we obtain:

\begin{thm}
\label{reduction_solvability-rings-to-local-rings}
$\SolveRing \foleqTur \SolveLocalRing$. 
\end{thm}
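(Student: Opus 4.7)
The plan is to exploit the canonical decomposition $R = \bigoplus_{e \in \ringBase R} eR$ from Proposition~\ref{proposition_decomposition-into-local-rings} to split a single solvability instance over a commutative ring $R$ into a conjunction of solvability instances, one per local summand $eR$. The first step is to observe that a linear system $A \cdot \fvec x = \fvec b$ over $R$ is solvable if, and only if, for every $e \in \ringBase R$ the projected system $A^e \cdot \fvec x = \fvec b^e$ is solvable over the local ring $eR$, where the projection of a coefficient $r$ onto $eR$ is defined by the formula $\psi$ of Lemma~\ref{lemma_projection-local-ring} applied componentwise. One direction is immediate: if $\fvec c$ solves the original system, its projection onto $eR$ solves the projected one. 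For the converse, if solutions $\fvec c^e \in (eR)^J$ exist for each $e$, then $\fvec c \defeq \sum_{e \in \ringBase R} \fvec c^e$ solves the original system, since $R \cong \bigoplus_e eR$ is a ring isomorphism that transports matrix--vector multiplication to its componentwise analogue and the family $\{e \mid e \in \ringBase R\}$ is orthogonal with $\sum_e e = 1$.

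Next, I would construct an $\FO$-interpretation $\mathcal{I}$ of $\vocLinEqRing$ in $\vocLinEqRing$, with a free parameter $x$, that on input a linear system $\struct S$ over $R$ and an assignment $x = e \in \ringBase R$ returns the projected linear system over the local ring $eR$. Concretely, the domain predicate of $\mathcal{I}$ restricts to those elements lying in the image of $\psi(x, \cdot, \cdot)$; the ring operations are inherited from $R$ on this image (with $e$ playing the role of the multiplicative identity); and the ternary relation for $A$ and the binary relation for $\fvec b$ are defined by first applying $\psi(x, \cdot, \cdot)$ to the corresponding coefficient of $\struct S$. Lemma~\ref{lemma_projection-local-ring} guarantees the uniform $\FO$-definability of all these ingredients.

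To complete the Turing reduction, I would write the $\FO(\uniformLindstrom{\SolveLocalRing})$-formula
\[
\forall x \,\bigl(\phi(x) \rightarrow \chi(x)\bigr),
\]
where $\phi(x)$ is the formula from Lemma~\ref{lemma_ring-idempotent-base} defining $\ringBase R$ and $\chi(x)$ is the application of the Lindström quantifier for $\SolveLocalRing$ to the interpretation $\mathcal{I}$ (in which $x$ occurs free as a parameter). By the equivalence established in the first step, this formula accepts $\struct S$ precisely when $\struct S$ is solvable over $R$. If $R$ is already local, then $\ringBase R = \{1\}$ and the formula reduces to a single application of the quantifier, as expected.

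The main obstacle is essentially bookkeeping: one must ensure that the parameter $e$ is threaded consistently through all defining formulas of $\mathcal{I}$, so that the interpreted ring genuinely has identity $e$ rather than $1$, and so that the coefficients of $A$ and $\fvec b$ are truly projected onto $eR$ rather than merely relativised. Once Lemmas~\ref{lemma_ring-idempotent-base} and~\ref{lemma_projection-local-ring} are in place, no further algebraic argument is required beyond the elementary observation justifying the component-wise decomposition of solvability.
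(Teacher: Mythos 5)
Your proposal follows exactly the paper's route: decompose $R$ into local summands via Proposition~\ref{proposition_decomposition-into-local-rings}, use Lemma~\ref{lemma_ring-idempotent-base} to define $\ringBase R$ and Lemma~\ref{lemma_projection-local-ring} to project the system onto each $eR$, and observe that solvability over $R$ is equivalent to solvability of each projected system. The paper leaves the construction of the parametrised interpretation and the $\forall x\,(\phi(x)\rightarrow\chi(x))$ wrapper implicit, but you have simply spelled out the same argument in more detail.
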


\noindent
We now want to exploit the algebraic structure of finite local rings
further. In \S\ref{sec_les-different-domains} we proved that solvability
over rings with a linear ordering can be reduced in fixed-point logic to
solvability over cyclic groups. This naturally raises the question:
which classes of rings can be linearly ordered in fixed-point logic?
By Lemma~\ref{lemma_projection-local-ring}, we know that for this
question it suffices to focus on local rings, which have a
well-studied structure. The most basic local rings are the
rings $\Zm{p^n}$, and the natural ordering of such rings can be easily
defined in~$\FP$ (since the additive group of $\Zm{p^n}$ is cyclic).
Moreover, the same holds for finite fields as they have a cyclic
multiplicative group~\cite{holm10thesis}. 

In the following lemma, 
we are able to generalise these
insights in a strong sense: for any fixed $k\geq 1$ we
can define an ordering on the 
class of all local rings for which the maximal ideal is
generated by at most $k$ elements. We refer to such rings as \emph{$k$-generated local
rings}. Note that for $k=1$ we obtain the notion
of chain rings which include all finite fields and rings of the form 
$\Zm{p^n}$. For increasing values of $k$ the
structure of $k$-generated local rings 
becomes more and more
sophisticated. For instance, the ring $R_k =
\Zm{2}[X_1,\dots,X_k]/(X_1^2,\dots,X_k^2)$ is a $k$-generated
local ring which is not $(k-1)$-generated.

\begin{lem}[Ordering $k$-generated local rings]
\label{lemma_ordering-local-pir} 
There is an $\FP$-formula $\phi(x,z_1,\dots,z_k; v,w)$
such that for all $k$-generated local rings $R$ there are $\alpha,
\pi_1,\dots,\pi_k \in R$ such that
\[
	\phi^R(\alpha/x, \tup \pi/\tup z; v,w) = \{ (a,b) \in R \times
R \sep (R, \alpha,\tup \pi; a,b) \models \phi \},\]
is a linear order on $R$.
\end{lem}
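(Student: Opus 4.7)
The plan is to have $\phi(\alpha, \vec \pi; v, w)$ express lexicographic comparison between canonical ``$\pi$-adic digit expansions'' of $v$ and $w$, provided $\alpha$ lifts a primitive element of the cyclic multiplicative group $(R/m)^*$ and $\pi_1, \ldots, \pi_k$ generate the maximal ideal $m$. Such parameters exist for any $k$-generated local ring: generators of $m$ by hypothesis, and a primitive element because every finite field has a cyclic multiplicative group. With $q := |R/m|$, the set $S := \{0, 1, \alpha, \alpha^2, \ldots, \alpha^{q-2}\}$ is a set of representatives of the cosets of $m$ in $R$, naturally ordered by the exponent of $\alpha$; both $S$ and the induced linear order on $R/m$ are $\FP$-definable from $\alpha$ (the ideal $m$ itself being $\FO$-definable as $R \setminus \units R$, by Proposition~\ref{proposition_local-rings-properties}).

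The construction proceeds in three stages. First, compute in $\FP$ the filtration $R = m^0 \supsetneq m \supsetneq \cdots \supsetneq m^t = \{0\}$ by iterated ideal multiplication starting from $\pi_1, \ldots, \pi_k$. Second, for each graded piece $\mathrm{gr}^j R := m^j/m^{j+1}$, which is a finite-dimensional $R/m$-vector space spanned by the monomials $\pi^{\vec e} := \pi_1^{e_1} \cdots \pi_k^{e_k}$ with $|\vec e| = j$, select in $\FP$ the lexicographically first subset $B_j$ of these monomials that is a basis: order exponent tuples lex, and greedily include the next monomial iff it is linearly independent modulo $m^{j+1}$ from those already chosen (the independence test reduces to solving a linear system over $R/m$, which is $\FP$-computable since we have a definable order on $R/m$). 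Third, assign each $r \in R$ its canonical expansion $r = \tilde r_0 + \cdots + \tilde r_{t-1}$ inductively: set $s_j := r - \sum_{i<j} \tilde r_i$, let $r_j := s_j + m^{j+1} \in \mathrm{gr}^j R$, express $r_j$ uniquely in basis $B_j$ with coefficients in $R/m$, replace each coefficient by its representative in $S$, and take the resulting linear combination of the basis monomials as the canonical lift $\tilde r_j \in m^j$.

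The formula $\phi(\alpha, \vec \pi; v, w)$ then holds iff the tuple of digits in $S$ attached to $v$ precedes that of $w$ in lexicographic order — sorted primarily by the level $j$, then by position within $B_j$, then by the order on $S$. Uniqueness of the basis-$B_j$ expansion makes the digit assignment injective, so this yields a linear order on $R$. Each computational step — building the filtration, selecting $B_j$, solving the linear systems over $R/m$ needed to extract coordinates, and assembling the lex comparison — is $\FP$-expressible, because $\FP$ captures $\Ptime$ on ordered structures and the parameter $\alpha$ provides the required order on $R/m$. The main obstacle is ensuring uniformity: that a single $\FP$-formula $\phi$ simultaneously chooses a canonical basis of every graded piece and computes canonical coordinates, for \emph{every} $k$-generated local ring $R$ and the associated correct parameter choice. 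This is handled by the greedy lex-first selection of $B_j$, which depends only on the fixed ordering of monomials $\pi^{\vec e}$ inherited from the indexing of the $\pi_i$ and the $\FP$-definable order on $R/m$.
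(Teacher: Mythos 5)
Your approach is genuinely different from the paper's and appears sound in outline, though the final step is justified too loosely. The paper avoids the graded-ring machinery entirely: it works with the Teichm\"uller coordinate set $\Gamma(R)=\{r\in R: r^q=r\}$ as the canonical set of coset representatives (parameter-free once $q$ is known), directly expands every element as $r=\sum_{\vec i} a_{\vec i}\,\pi^{\vec i}$ with $a_{\vec i}\in\Gamma(R)$ over \emph{all} monomials (not a chosen basis), and restores uniqueness via a greedy scheme: repeatedly peel off the minimal exponent tuple and minimal coefficient such that what remains lies in an $\FP$-definable ideal $R[\vec i]$ of ``deeper'' elements. Your route instead passes through the $m$-adic filtration $m^j/m^{j+1}$, selects a canonical basis $B_j$ of monomials in each graded piece, and extracts coordinates level by level. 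Both build a lex order on digit expansions indexed by monomials and coefficients ordered via~$\alpha$; yours makes uniqueness of the expansion manifest by construction, at the cost of having to define and justify the basis-selection step, while the paper sidesteps basis choice but needs the greedy peel-off argument.

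One point needs tightening. You justify the linear-independence tests and coordinate extraction with the blanket claim that linear systems over $R/m$ are $\FP$-solvable ``since we have a definable order on $R/m$.'' As the rest of this paper emphasizes, an order on the coefficient field is not by itself enough: solvability of linear equation systems over $\GF{p}$ is the central $\FPC$-inexpressible problem. What actually makes your step work is a size bound: each $B_j$ has $\dim_{R/m}(m^j/m^{j+1}) \le \log_q|R|$ elements, so the search space $(R/m)^{|B_j|}$ has at most $|m^j/m^{j+1}| \le |R|$ elements, and with the $\alpha$-order on $R/m$ these coefficient tuples can be enumerated via the number sort. You should state this explicitly — the argument is ``small system over a definably ordered field,'' not ``system over a definably ordered field.'' A similar remark applies to the greedy basis selection: the independence check is an exhaustive search over a polynomially bounded set, not an appeal to general solvability. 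With that correction the proof goes through.

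Two smaller remarks. First, your $S=\{0,1,\alpha,\ldots,\alpha^{q-2}\}$ is a valid $\FP$-definable transversal of $m$ in $R$ given the lifted primitive element $\alpha$, but unlike the paper's $\Gamma(R)$ it depends on the choice of lift; since $\alpha$ is a parameter anyway, this does no harm. Second, you should bound the exponent tuples $\vec e$ appearing in the monomials: for fixed $k$, the number of $\vec e$ with $|\vec e|=j\le t$ (the nilpotency of $m$, which is $O(\log|R|)$) is polylogarithmic, so they can be ranged over with number variables; without saying this, the ``greedy lex-first'' selection is not obviously an $\FP$-expressible iteration.
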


\begin{proof} 
First of all, there are $\FP$-formulas $\phi_u(x), \phi_m(x),
\phi_g(x_1,\dots,x_k)$ that define in any $k$-generated local ring $R$ the
set of units, the maximal ideal $m$ (i.e.\ the set of non-units) and the
property of being a set of size $\leq k$ that generates the ideal $m$,
respectively. More precisely, for all $(\pi_1,\dots,\pi_k)
\in \phi_g^R$
we have that $\sum_i\pi_i R = \phi_m^R=m$ is the maximal ideal of $R$
and
$\units{R} = \phi_u^ R = R \setminus m$. In particular there is a
first-order interpretation of the field $F \defeq R/m$ in $R$.

\medskip
The idea of the following proof is to represent the elements of $R$ as
polynomial
expressions of a certain kind. Let $q \defeq
\card{F}$ and define $\Gamma(R) \defeq \{ r \in R : r^q = r\}$. It
can be seen that $\Gamma(R)\setminus \{0\}$ forms a multiplicative group
which is known as the \textit{Teichm\"{u}ller coordinate
set}~\cite{bini02finite}. Now, the map $\iota: \Gamma(R) \rightarrow
F$ defined by $r \mapsto r+m$ is a bijection. Indeed,
for two different units $r,s \in \Gamma(R)$ we have $r-s \notin
m$. Otherwise, we would have $r-s = x$ for some $x \in m$ and thus $r
= (s+x)^q = s + \sum_{i=1}^q {q \choose i} x^is^{q-i} $. Since $q \in
m$ and $r-s=x$ we obtain that $x = xy$ for some $y \in m$. Hence
$x(1-y) = 0$ and since $(1-y) \in \units R$, as in a local ring the sum
of a unit and a non-unit is always a unit, this means $x=0$.

As explained above, we can define in $\FP$ an order
on $F$ by fixing a generator $\alpha \in \units F$ of the cyclic
group $\units F$. Combining this order with $\inv{\iota}$, we obtain
an $\FP$-definable order on~$\Gamma(R)$. 
The importance of $\Gamma(R)$ now lies in the fact that every ring element
can
be expressed as a polynomial expression over a set of $k$ generators of
the maximal ideal $m$ with coefficients lying in $\Gamma(R)$. To be
precise, let $\pi_1, \dots, \pi_k \in m$ be a set of generators for $m$,
i.e.\ $m = \pi_1R + \cdots + \pi_kR$, where each $\pi_i$
has nilpotency $n_i$ for $1 \leq i \leq k$.
We claim that we can express $r \in R$ as
\begin{align*}
r = \sum_{\substack{(i_1,\dots,i_k) \lleq (n_1,\dots,n_k)}}
a_{i_1\cdots i_k}
\pi_1^{i_1}\cdots\pi_k^{i_k},\quad \text{with } a_{i_1\cdots i_k}
\in \Gamma(R). \tag{P}\label{equ:frs:polyexpr}
\end{align*}

\noindent To see this, consider the following recursive algorithm: 
\smallskip
\begin{itemize}
	\item If $r \in \units{R}$, then for a unique $a \in
\Gamma(R)$ we have that $r \in a+m$, so $r = a + (\pi_1  r_1 + \cdots +
\pi_kr_k)$ for some $r_1,\dots,r_k \in R$ and we continue with
$r_1,\dots,r_k$. 
	\item Else $r \in m$, and $r = \pi_1  r_1 + \cdots +
\pi_kr_k$ for some $r_1,\dots,r_k \in R$; continue
with $r_1,\dots,r_k$. 
\end{itemize}
\smallskip
\noindent
Observe that for all pairs $a, b\in \Gamma(R)$ there exist elements
$c \in \Gamma(R), r \in m$ such that
$a\pi_1^{i_1}\cdots\pi_k^{i_k} + b\pi_1^{i_1}\cdots\pi_k^{i_k} =
c\pi_1^{i_1}\cdots\pi_k^{i_k} + r\pi_1^{i_1}\cdots\pi_k^{i_k}$. Since
$\pi_1^{i_1} \cdots \pi_k^{i_k} = 0$ if $i_\ell \geq n_\ell$ for some
$1 \leq \ell \leq k$, the process is guaranteed to stop and the claim
follows.

\medskip
Note that this procedure neither yields a polynomial-time algorithm
nor do we obtain a \emph{unique} expression, as for
instance, the choice of elements $r_1, \dots, r_k \in R$ (in both
recursion steps) need not to be unique. However, knowing only the
existence of an expression of this kind, we can proceed as
follows. For any sequence of exponents $(\ell_1, \dots,
\ell_k) \lleq (n_1,\dots,n_k)$ define the ideal $R[\ell_1, \dots,
\ell_k] \trianglelefteq R$ as the set of all elements having an
expression of the form~(\ref{equ:frs:polyexpr}) where $a_{i_1\cdots i_k} =
0$ for all $(i_1,\dots,i_k) \lleq (\ell_1, \dots, \ell_k)$.

It is clear that we can define the ideal $R[\ell_1, \dots, \ell_k]$ in
$\FP$. Having this, we can use the following recursive procedure to
define a unique expression of the form~(\ref{equ:frs:polyexpr}) for
all $r \in R$:
\begin{itemize}
 \item Choose the minimal $(i_1,\dots,i_k) \lleq (n_1,\dots,n_k)$
such that $r= a\pi_1^{i_1}\cdots\pi_k^{i_k} + s$ for some (minimal) $a
\in \Gamma(R)$ and $s \in R[i_1, \dots, i_k]$. Continue the process
with $s$.
\end{itemize}

\noindent
Finally, the lexicographical ordering induced by
the ordering on $n_1 \times \cdots \times n_k$ and the
ordering on $\Gamma(R)$ yields an $\FP$-definable order on $R$ (where we
use the parameters to fix a generator of $\units F$ and a set of
generators of $m$).
\end{proof}

\begin{cor}
\label{reduction_solvability-chain-rings-to-ordered-rings}
$\SolvekLocalRing \fpleqTur \SolveOrdRing \fpleq \LCON$. 
\end{cor}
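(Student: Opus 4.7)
The second reduction, $\SolveOrdRing \fpleq \LCON$, is exactly the content of Theorem~\ref{theorem_reduction-ordings-lcon}, so the remaining task is to establish the Turing reduction $\SolvekLocalRing \fpleqTur \SolveOrdRing$. The plan is to assemble an $\FP(\uniformLindstrom{\SolveOrdRing})$-formula that uses Lemma~\ref{lemma_ordering-local-pir} to synthesise an order on the ring on the fly, and then defers the solvability check to the Lindstr\"om quantifier for $\SolveOrdRing$.

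Given an input $\vocLinEqRing$-structure $\struct S$ encoding a system $(A, \fvec b)$ over a $k$-generated local ring $R$, I would work with a formula of the shape
\[
\exists \alpha \, \exists \pi_1 \cdots \exists \pi_k \Bigl[ \psi_{\mathrm{ord}}(\alpha, \tup \pi) \wedge \lindstrom{\SolveOrdRing}\bigl(\mathcal{I}(\alpha, \tup \pi)\bigr) \Bigr].
\]
Here $\psi_{\mathrm{ord}}$ is a first-order guard asserting that the binary relation $\phi(\alpha, \tup \pi; v, w)$ supplied by Lemma~\ref{lemma_ordering-local-pir} is a total linear order on $R$, and $\mathcal{I}(\alpha, \tup \pi)$ is an $\FP$-interpretation of $\vocLinEqRingOrd$ in $\vocLinEqRing$ that copies the ring, the coefficient relation $A$ and the right-hand side $\fvec b$ verbatim from $\struct S$ while interpreting the extra order symbol $\leqslant$ by $\phi(\alpha, \tup \pi; \cdot, \cdot)$.

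Correctness has two parts. First, Lemma~\ref{lemma_ordering-local-pir} guarantees that on every $k$-generated local ring there exist parameters making $\phi$ a linear order, so the existential guess can succeed. Second, solvability of $(A, \fvec b)$ over $R$ is intrinsic to the ring and the coefficients and independent of any linear order imposed on its universe, so for every valid choice of parameters the interpreted $\vocLinEqRingOrd$-structure is in $\SolveOrdRing$ if, and only if, $\struct S \in \SolvekLocalRing$. Composing the resulting Turing reduction with the many-to-one reduction $\SolveOrdRing \fpleq \LCON$ from Theorem~\ref{theorem_reduction-ordings-lcon} yields the corollary.

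The substantive obstacle was in fact already dealt with by Lemma~\ref{lemma_ordering-local-pir}, namely the $\FP$-definition of a linear order on a $k$-generated local ring from $k+1$ ring parameters. Given that lemma, the remaining steps are essentially packaging: writing down the $\FP$-interpretation $\mathcal{I}$ (straightforward, since the ring and the matrix data are simply copied) and the first-order guard $\psi_{\mathrm{ord}}$ (being a linear order is $\FO$-expressible). No further algebraic subtleties are anticipated.
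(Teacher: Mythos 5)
Your proof is correct and matches the paper's intended approach: the corollary is stated without an explicit proof precisely because it follows immediately by combining Lemma~\ref{lemma_ordering-local-pir} (existential guess of parameters defining an $\FP$-order, with an $\FP$-expressible guard checking the order axioms) with the $\SolveOrdRing$ Lindstr\"om quantifier, and then composing with Theorem~\ref{theorem_reduction-ordings-lcon}. The only minor imprecision is calling the guard $\psi_{\mathrm{ord}}$ ``first-order''; it is first-order on top of the $\FP$-formula $\phi$ from the lemma, hence itself an $\FP$-formula, which is what the Turing reduction requires.
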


%
%
\section{Solvability problems under logical reductions}
\label{sec_reductions-to-solvability}

\newcommand{\RelativeSolve}[3]{\Sigma_{#1}^{#3}(#2)\xspace}
\newcommand{\rtsqr}[1]{\RelativeSolve{\FO}{#1}{\text{qf}}}
\newcommand{\rtsfo}[1]{\RelativeSolve{\FO}{#1}{}}
\newcommand{\rtstfo}[1]{\RelativeSolve{\FO}{#1}{\text{T}}}
\newcommand{\clsles}[1]{\Solve(#1)}

In the previous two sections we studied reductions between solvability 
problems over different algebraic domains. Here we change our perspective
and investigate \emph{classes of queries} that are reducible to the
solvability problem over a fixed commutative ring. Our motivation for this
work was to study extensions of first-order logic with generalised
quantifiers which express solvability problems over rings. In particular,
the aim was to establish various \emph{normal forms} for such logics.
However, rather than defining a host of new logics in full detail, we state
our results in this section in terms of closure properties of classes of
structures that are themselves defined by reductions to solvability
problems. We explain the connection between the specific closure 
properties and the corresponding logical normal forms below.

To state our main results formally, let $R$ be a \emph{fixed} commutative
ring. We write
$\clsles{R}$ to denote the solvability problem over~$R$, as a class of
$\vocLinEqRingFixed{R}$-structures. Let $\rtsqr{R}$ and $\rtsfo{R}$ denote
the classes of queries that are reducible to $\clsles{R}$ under
quantifier-free and first-order many-to-one reductions, respectively. Then
we show that $\rtsqr{R}$ and $\rtsfo{R}$ are closed under first-order
operations for any commutative ring $R$ of \emph{prime-power
characteristic},  i.e.\ $\text{char}(R)=p^k$ for a prime $p$ and an integer
$k \geq 1$. In particular, we have that $\rtsqr{R}$ contains \emph{any}
$\FO$-definable query in such a case. Furthermore, we prove that if
$R$ has \emph{prime characteristic}, i.e.\ $\text{char}(R)=p$ for a prime
$p$, then $\rtsqr{R}$ and $\rtsfo{R}$ are closed under oracle queries. 
Thus, if we denote by $\rtstfo{R}$ the class
of queries reducible to $\clsles{R}$ by first-order Turing reductions, 
then for all commutative rings $R$ of prime characteristic the three
solvability reduction classes coincide, i.e.\ we have $\rtsqr{R} =
\rtsfo{R} =\rtstfo{R}$. 

To relate these results to logical normal forms,  we let $\structClass D =
\clsles{R}$ and write $\foslv R \defeq \fo(\uniformLindstrom{D})$ to denote
first-order logic extended by Lindstr\"om quantifiers
expressing solvability over $R$. Then the closure of $\rtsfo{R}$ under
first-order operations amounts to showing that the fragment of $\foslv R$
which consists of formulas without \emph{nested} solvability
quantifiers has a normal form which consists of a single application of 
a solvability quantifier to a first-order formula. Moreover, for the case
when $R$ has prime characteristic, the closure of $\rtsqr{R}= \rtsfo{R}$
under first-order oracle queries amounts to showing that nesting of
solvability quantifiers can be reduced to a single quantifier. It follows
that $\foslv R$ has a strong normal form: one application of a
solvability quantifier to a \emph{quantifier-free} formula
suffices. It remains as an interesting open question whether the closure
properties
we establish here can be extended to the case of general commutative rings,
i.e.\ to rings $R$ whose characteristic is divisible by two different
primes, e.g.\ to $R=\Zm{6}$.

Throughout this section, the reader should keep in mind that for logical
reductions to the solvability problem over a \emph{fixed} ring $R$, we can
safely drop all formulas $\varepsilon(\bx,\by)$ in interpretations
$\mathcal{I}$ which define the equality-congruence on the domain of the
interpreted structure: indeed, duplicating equations or
variables does not affect the solvability of a given linear equation
system.

\subsection{Closure under first-order operations}

Let $R$ be a fixed commutative ring of characteristic $m$. In
this section we
prove the closure of $\rtsqr{R}$ and $\rtsfo{R}$ under first-order
operations for the case that $m$ is a prime power.
To this end, we need to establish a couple of technical
results. Of particular importance is the following key lemma, which
gives a simple normal form for linear equation systems: up
to quantifier-free reductions, we can restrict ourselves to
linear systems over rings $\Zm m$,
where the constant term $b_i$ of every linear equation $(A\cdot \fvec
x)(i)=b_i$ is $b_i=1 \in \Zm m$, and the same holds for all the
coefficients, i.e.\ $A(i,j)=1$, for all $i,j \in I$.
The proof of the lemma crucially relies on the fact that
the ring $R$ is fixed. Recall that $m$ denotes the characteristic of the ring 
$R$.

\begin{lem}[Normal form for linear equation
systems]
\label{lemmanormalformles}
There is a quantifier-free interpretation $\mathcal{I}$ of 
$\vocLinEqRingFixed{\Zm m}$ in $\vocLinEqRingFixed{R}$ such that for all
$\vocLinEqRingFixed{R}$-structures $\struct{S}$ it holds that
\begin{itemize}
	\item $\mathcal{I}(\struct{S})$ is an equation system 
$(A, \fvec b)$ over $\Zm m$, where $A$ is a $\{0,1\}$-matrix and 
$\fvec b = \fvec{1}$; and 
	\item  $\struct{S} \in \clsles{R}$ if, and only 
      if, $\mathcal{I}(\struct{S}) \in \clsles{\Zm m}$. 
\end{itemize}
\end{lem}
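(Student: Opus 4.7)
The plan is to construct $\mathcal{I}$ as a composition of three normalising steps, exploiting throughout that $R$ is \emph{fixed}: its characteristic $m$, a decomposition $(R,+) = \langle g_1 \rangle \oplus \cdots \oplus \langle g_t \rangle$ of the additive group with orders $\ell_1 \mid \cdots \mid \ell_t = m$, and the structure constants $c^{ab}_y \in \Zm{m}$ encoding $g_a \cdot g_b$, are all hard-coded constants rather than quantities that must be recovered from the input by logical means. The first step mimics the proof of Theorem~\ref{theorem_reduction-ordings-lcon}: I replace each variable $x_j$ by $t$ fresh $\Zm{m}$-variables $x_j^1, \dots, x_j^t$, split each $R$-equation into $t$ component equations using the structure constants, and multiply the $y$-th component by the appropriate factor to lift it from $\Zm{m} / \ell_y \Zm{m}$ to $\Zm{m}$. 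The outcome is an equivalent system $\sum_j \gamma_{ij}\, y_j = \delta_i$ over $\Zm{m}$ with coefficients $\gamma_{ij}$ and right-hand sides $\delta_i$ taken from $\{0, 1, \dots, m-1\}$.

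The second step normalises the coefficients to $\{0,1\}$ via an equality-gadget. For each variable $y$ of the intermediate system and each $k \in \{1, \dots, m-1\}$ I introduce a fresh copy $y^{(k)}$ together with an auxiliary variable $z_{y,k}$, and add the two equations $y + z_{y,k} = 1$ and $y^{(k)} + z_{y,k} = 1$. These already have the desired ``subset-sum equal to $1$'' shape and they force $y^{(k)} = y$ in $\Zm{m}$; conversely, setting $z_{y,k} := 1 - y$ satisfies both equations whenever $y^{(k)} = y$. I then rewrite each term $\gamma_{ij}\, y_j$ with $\gamma_{ij} = c \in \{0,\dots,m-1\}$ as the formal sum $y_j + y_j^{(1)} + \cdots + y_j^{(c-1)}$, turning every non-gadget equation into $\sum_{v \in S_i} v = \delta_i$ with $\{0,1\}$-coefficients.

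The third step fixes the right-hand sides. For each equation $\sum_{v \in S_i} v = \delta_i$ set $k_i := (1 - \delta_i) \bmod m$, introduce $k_i$ fresh unit variables $u^i_1, \dots, u^i_{k_i}$, each pinned by the single-variable equation $u^i_\ell = 1$, and replace the given equation by $\sum_{v \in S_i} v + u^i_1 + \cdots + u^i_{k_i} = 1$. Since every $u^i_\ell$ is forced to be $1$, this is equivalent to the previous equation, so after this step every equation in the final system is a subset-sum equal to $1$, as required.

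Finally, I would check that the whole construction is a quantifier-free interpretation of $\vocLinEqRingFixed{\Zm{m}}$ in $\vocLinEqRingFixed{R}$. The essential point is that $R$, $m$, $t$, the orders $\ell_y$, the structure constants $c^{ab}_y$, the range $\{1,\dots,m-1\}$ of the copies and the values $k_i \in \{0,\dots,m-1\}$ are fixed constants of the ring, so the new variable and equation sets are obtained from the input domain by a product with a fixed finite set of ``roles'' (original/component/copy/gadget-auxiliary/unit variable and original/gadget/pinning equation). This product is easily realised as a subset of tuples of some fixed width over the input domain, cut out by quantifier-free conditions on the tuple positions; the formulas defining the output relations $A_0, A_1, b_0, \dots, b_{m-1}$ then case-split on a tuple's role and read the relevant $A_r, b_r$ off the input quantifier-freely. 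The main obstacle will be the bookkeeping required to make this encoding explicit and to verify the partition conditions for $\vocLinEqRingFixed{\Zm{m}}$-structures; but since every quantity is a fixed-ring constant, no quantification is ever needed.
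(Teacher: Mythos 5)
Your proof is correct, and it reaches the normal form by a route that is structurally parallel to the paper's but uses different gadgets.

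The paper also factors the construction into three quantifier-free pieces and, like you, begins by specialising the proof of Theorem~\ref{theorem_reduction-ordings-lcon} to the fixed ring to pass from $R$ to $\Zm m$. The difference is in the order and flavour of the remaining two steps. The paper first normalises the right-hand side and only then the coefficients: it introduces a block of variables $w_r$ (one for each $r \in \Zm m$) pinned by the equations $(1-r)w_1 + w_r = 1$, together with a ``slack'' variable $v_e$ per original equation, so that an equation $\sum_j B(e,j)x_j = c(e)$ is replaced by the pair $v_e + \sum_j B(e,j)x_j = 1$ and $v_e + w_{c(e)} = 1$; it then expands coefficients into sums of copies $v_0,\dots,v_{m-1}$, enforcing copy equality via auxiliary variables $v_j^-$ and the already-pinned $w_1$. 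You instead normalise coefficients first, using the two-equation equality gadget $y + z_{y,k} = 1$, $y^{(k)} + z_{y,k} = 1$ to force copies to agree, and then pad the right-hand side with unit variables pinned by $u = 1$. Your gadgets are arguably a bit more elementary (no need to materialise all constants of $\Zm m$ as variable values, and no reliance on a previously pinned $w_1$), while the paper's version has the small virtue that the $w_r$ subsystem is reusable for both the right-hand-side adjustment and the copy-equality rewriting. Both versions exploit exactly the same fact that $R$ --- and hence $m$, $t$, the orders $\ell_y$ and the structure constants --- are hard-coded, so that all the index bookkeeping can be absorbed into a fixed-width role encoding and realised quantifier-freely.
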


\begin{proof}
We describe $\mathcal I$ as the composition of
three quantifier-free transformations: the first
one maps a system $(A, \fvec b)$ over $R$ to an equivalent
system $(B, \fvec c)$ over $\Zm m$, where $m$ is the
characteristic of $R$. Secondly, $(B, \fvec c)$ is mapped to an
equivalent system $(C, \fvec 1)$ over $\Zm m$.
Finally, we
transform $(C, \fvec 1)$ into an equivalent system $(D, \fvec 1)$ over
$\Zm m$, where $D$ is a $\{0,1\}$-matrix.  The first transformation is
obtained by adapting the proof of
Theorem~\ref{theorem_reduction-ordings-lcon}. It can be seen that
first-order quantifiers and fixed-point operators are not needed if
$R$ is fixed. 

For the second transformation,
suppose that $B$ is an $I \times J$ matrix and $\fvec c$ a vector
indexed by $I$. We define a new linear equation system $\struct T$
which has in addition to all the
variables that occur in $\struct S$, a new variable $v_e$ for every $e
\in I$ and a new variable $w_r$ for every $r \in R$. For every element
$r \in \Zm m$, we include in $\struct T$ the equation $(1-r)w_1 + w_r=
1$. It can be seen that this subsysem of equations has a unique
solution given by $w_r = r$ for all $r \in \Zm m$. Finally, for every
equation $\sum_{j \in J} B(e,j) \cdot x_j = \fvec c(e)$ in $\struct S$
(indexed by $e \in I$) we include in $\struct T$ the two equations
$v_e + \sum_{j \in J} B(e,j) \cdot x_j = 1$ and $v_e + w_{\fvec c(e)}=
1$.

Finally, we translate the system $\struct T: C \fvec x
= \fvec 1$ over~$\Zm m$ into an equivalent system over~$\Zm m$ in
which
all coefficients are either 0 or 1.
For each variable $v$ in $\struct
T$, the system has the $m$ distinct variables $v_0,
\dots, v_{m-1}$ together with equations $v_i = v_j$ for $i\not= j$.
By replacing all terms $rv$ by $\sum_{1 \leq i \leq r} v_i$ we obtain
an equivalent system. However, in order to establish our original
claim we need to express the auxiliary equations of the form $v_i =
v_j$ as a set of equations whose constant terms are equal to 1. To
achieve this, we introduce a new variable $v_j^-$ for each $v_j$,
and the equation $v_j + v_j^- + w_1 = 1$. Finally, we
rewrite $v_i = v_j$ as $v_i + v_j^- + w_1 = 1$. The
resulting system is equivalent to $\struct T$ and has the desired
form.

Since the ring $R$, and hence also the ring $\Zm m$, is fixed, it can be
seen that all the reductions we have outlined above can be formalised as
quantifier-free reductions.
\end{proof}

\begin{cor}
\label{corollary_reduction-solvability-integers} 
$\rtsqr{R} = \rtsqr{\Zm m}$, $\rtsfo{R} = \rtsfo{\Zm m}$ and
 $\rtstfo{R} = \rtstfo{\Zm m}$. 
\end{cor}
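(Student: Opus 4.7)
My plan is to establish all three equalities by composing reductions with Lemma~\ref{lemmanormalformles}. For the inclusions $\rtsqr{R}\subseteq\rtsqr{\Zm m}$, $\rtsfo{R}\subseteq\rtsfo{\Zm m}$, and $\rtstfo{R}\subseteq\rtstfo{\Zm m}$, the lemma already furnishes a quantifier-free many-to-one reduction from $\clsles{R}$ to $\clsles{\Zm m}$. Composing any quantifier-free, first-order, or first-order Turing reduction from a class $\structClass C$ to $\clsles{R}$ with this reduction produces a reduction of the same kind from $\structClass C$ to $\clsles{\Zm m}$, since each of the three classes of logical reductions is closed under composition with quantifier-free interpretations.

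For the reverse inclusions I would exhibit a single quantifier-free interpretation $\mathcal J$ of $\vocLinEqRingFixed{R}$ in $\vocLinEqRingFixed{\Zm m}$ that implements the canonical embedding $\iota\colon\Zm m\hookrightarrow R$, $k\mapsto k\cdot 1_R$. Concretely, $\mathcal J$ keeps the domain of the input structure $\struct S$ and, for each $r\in R$, defines $A_r^{\mathcal J(\struct S)}\defeq A_s^{\struct S}$ if $r=\iota(s)$ for some (unique) $s\in\Zm m$ and $A_r^{\mathcal J(\struct S)}\defeq\emptyset$ otherwise; the symbols $b_r$ are handled analogously. Since the ring $R$ is fixed, this amounts to a finite renaming of relation symbols with empty defaults, and is trivially quantifier-free. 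Composing $\mathcal J$ with any reduction of $\structClass C$ to $\clsles{\Zm m}$ then yields a reduction of the same kind to $\clsles{R}$.

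The main obstacle is to verify that $\mathcal J$ is correct as a many-to-one reduction: I must show that a system $A\fvec x=\fvec b$ over $\Zm m$ is solvable over $\Zm m$ if, and only if, the corresponding $R$-system (with entries reinterpreted via $\iota$) is solvable over $R$. The ``only if'' direction is immediate. For the converse my key algebraic observation is that the multiplicative identity $1_R$ has additive order $\mathrm{char}(R)=m$, which coincides with the exponent of the finite abelian group $(R,+)$; by the standard splitting result that an element of maximal order always generates a direct summand of a finite abelian group, we obtain a decomposition $(R,+)=\iota(\Zm m)\oplus C$. Since the $\Z$-action on $R$ factors through $\Zm m$, this is even a decomposition of $\Zm m$-modules. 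Given any solution $\fvec x\in R^J$, I would split $\fvec x=\fvec x_1+\fvec x_2$ with $\fvec x_1\in\iota(\Zm m)^J$ and $\fvec x_2\in C^J$; because the coefficient matrix and right-hand side both lie in $\iota(\Zm m)$, the products $A\fvec x_1$ and $A\fvec x_2$ lie in $\iota(\Zm m)^I$ and $C^I$ respectively, so the equation $A\fvec x=\fvec b$ forces the $C$-component to vanish and leaves $\fvec x_1$ as a bona fide solution over $\Zm m$.
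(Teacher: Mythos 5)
The proposal is correct. The paper states Corollary~\ref{corollary_reduction-solvability-integers} without giving an explicit proof, and only the forward inclusions $\rtsqr{R}\subseteq\rtsqr{\Zm m}$ (and likewise for the $\FO$ and $\FO$-Turing variants) are an immediate consequence of Lemma~\ref{lemmanormalformles}, obtained by composing any reduction to $\clsles{R}$ with the lemma's quantifier-free interpretation of $\clsles{\Zm m}$-instances in $\clsles{R}$-instances. You have correctly identified that the reverse inclusions require a separate quantifier-free many-to-one reduction from $\clsles{\Zm m}$ to $\clsles{R}$, and your argument for it is sound: the map $\iota\colon\Zm m\to R$, $k\mapsto k\cdot 1_R$, is injective precisely because $\mathrm{char}(R)=m$; that characteristic equals the exponent of $(R,+)$, so $\iota(\Zm m)=\langle 1_R\rangle$ has maximal order and hence splits off as a $\Zm m$-module direct summand $(R,+)=\iota(\Zm m)\oplus C$; and because multiplication by any element of $\iota(\Zm m)$ is simply $\Zm m$-scalar action, it respects this decomposition, so every $R$-solution projects componentwise to a $\Zm m$-solution. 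This split-the-solution device is the same one the paper deploys in the second half of the proof of Theorem~\ref{theorem_reduction-groups-rings}, where a solution $\fvec s$ over $\phi(G)$ is decomposed as $\fvec s_g+\fvec s_n$ and the appropriate component is shown to solve the original system. Your proof therefore fills the gap the paper leaves implicit, using the same kind of algebraic reasoning the paper itself employs nearby.
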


\noindent
It is a basic fact from linear algebra that solvability of a linear
equation system $A \cdot \fvec x = \fvec b$ is invariant under
applying elementary row and column operations to the augmented
coefficient matrix $(A \sep \fvec b)$. Over fields, this insight
forms the basis for the method of Gaussian elimination, which transforms
the augmented coefficient matrix of a linear equation system into row
echelon form. 
Over the integers, a generalisation of this method can be
used to transform the coefficient matrix into Hermite normal form. The
following lemma shows that a similar normal form exists over chain
rings. The proof 
uses the fact
that in a chain ring~$R$, 
divisibility 
is a total preorder.

\begin{lem}[Hermite normal form]
\label{lemma_hermite-normal-form} 
For every $k \times \ell$-matrix $A$ over a chain ring $R$, there
exists an invertible $k \times k$-matrix $S$ and an $\ell\times
\ell$-permutation matrix $T$ such that
\begin{align*}
	S A T = 
	\begin{pmatrix}
		Q \\
		\cline{2-2}
		\fvec 0
	\end{pmatrix}
	&\;\;\;\text{ with }\;\;\;
	Q = 
	\begin{pmatrix}
		a_{11} 	& 	\cdots	& \star \\
		0		&	\ddots	& \vdots \\
		 \fvec 0	& 	0		& a_{kk}
\\			
	\end{pmatrix},
\end{align*}

\noindent 
where $a_{11} \mid a_{22} \mid  a_{33}  \mid  \cdots \mid
a_{kk}$ and for all $1 \leq i, j \leq k$ it holds that
$a_{ii} \mid a_{ij}$.
\end{lem}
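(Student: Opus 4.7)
The plan is induction on the number of rows $k$, mimicking Gaussian elimination while exploiting the defining feature of a chain ring: because all ideals form a chain under inclusion, divisibility is a total preorder on $R$. In particular, among any finite set of elements $r_1, \ldots, r_n \in R$, some $r_{i_0}$ divides all others, namely a generator of the maximal ideal in the chain $r_1 R, \ldots, r_n R$ (with the convention that $0 \cdot R = \{0\}$ is the minimum).

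The base case is trivial. For the inductive step, I would first select a nonzero entry $a$ of $A$ that divides every entry of the matrix, applying the chain property to the set of all entries. Permute this entry into position $(1,1)$ by left-multiplication with a row-permutation matrix and right-multiplication with the column-permutation matrix $T_0$. Since $a$ divides every entry, we may clear column~$1$ below the pivot: for each $i \geq 2$, write $A(i,1) = c_i \cdot a$ and subtract $c_i$ times row~$1$ from row~$i$; this is left-multiplication by an invertible elementary matrix, and it leaves row~$1$ untouched. I would then recurse on the $(k-1) \times (\ell - 1)$ submatrix $A'$ obtained by deleting row~$1$ and column~$1$: by induction there exist $S'$ invertible and $T'$ a permutation bringing $A'$ into the claimed form. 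Embed $S'$ and $T'$ as block-diagonal matrices with a leading~$1$; these act only on rows and columns of index $\geq 2$, so they leave the column $(a, 0, \ldots, 0)^t$ intact. Composing the elementary row operations into $S$ and the two column permutations into $T = T_0 \cdot T''$ yields the required decomposition.

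It remains to verify the divisibility conditions. Every entry of $A'$ is an $R$-linear combination of entries of the original $A$, each divisible by $a$, so $a$ divides every entry of $A'$; hence $a = a_{11}$ divides $a_{22}$, and together with the inductively supplied chain $a_{22} \mid \cdots \mid a_{kk}$ this produces the full diagonal chain. For the row condition $a_{ii} \mid a_{ij}$, the cases $i \geq 2$ are inductive, while for $i = 1$ row~$1$ is only permuted by $T$, and every entry of row~$1$ began divisible by $a_{11}$. The one subtlety to flag is the asymmetry between $S$ (arbitrary invertible) and $T$ (only a permutation): we cannot clear entries above the diagonal by column operations, so the conclusion must settle for the weaker $a_{ii} \mid a_{ij}$ rather than zeroing the $\star$ region. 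This is exactly why the pivot at each recursion step must be chosen to divide \emph{all} remaining entries, not merely all entries in the current column; the chain-ring hypothesis is used precisely to guarantee such a pivot exists.
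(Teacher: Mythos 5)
Your proposal is correct and follows essentially the same route as the paper's proof: exploit the chain-ring property to pick a divisibility-minimal pivot, permute it to the corner, clear the rest of the first column by row operations, and recurse on the $(k-1)\times(\ell-1)$ minor, observing that the pivot still divides every entry of the minor since these are $R$-linear combinations of entries it already divided. Your writeup adds some helpful explicitness (the induction scaffold, the block-diagonal embedding of $S'$ and $T'$, and the remark about why $T$ is restricted to permutations, which forces the weaker $a_{ii}\mid a_{ij}$ conclusion rather than full column clearance), but the underlying argument is the one in the paper.
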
 
\begin{proof}
  If $R$ is not a field, fix an element $\pi \in R\setminus \units R$
which generates the
maximal ideal $m = \pi R$ in $R$. Then, every element of $R$ can be
represented in the form $\pi^n u$ where $n\geq 0$ and $u \in \units R$. It
follows that for all elements $r, s \in R$ we have $r \divides s$ or
$s \divides r$. 

Now, consider the following procedure: In the
remaining $k \times \ell$-matrix $A'$, choose an entry $r \in R$ which is
minimal with respect to divisibility and use row and column
permutations to obtain an equivalent $k \times \ell$-matrix $A^\prime$
which has $r$ in the upper left corner, i.e.\ $A^\prime(1,1) = r$.
Then, use the first row to eliminate all other entries in the first
column. After this transformation, the element $r$ still divides every
entry in the resulting matrix, since all of its entries are linear
combinations of entries of $A^\prime$. Proceed in the same way with the
$(k-1) \times
(\ell -1)$-submatrix that results by deleting the first row and
column from $A^\prime$.
\end{proof}

\noindent
Now we are ready to prove the closure of $\rtsqr{R}$ and $\rtsfo{R}$
under first-order operations for the case that $R$ has prime-power
characteristic, i.e.\ for the case that $\Zm{m}$ is a chain ring.
First of all, it can be seen that
conjunction and universal quantification can be handled easily by
combining independent subsystems into a single system. 
Assume for example, that we have given two
quantifier-free interpretations 
${\mathcal{I}}=(\delta_I(\tup x),\phi_A(\tup x,\tup y))$ and
$\mathcal{J}=(\delta_J(\tup x),\psi_A(\tup x,\tup y))$ of 
$\vocLinEqRingFixed{\Zm{2}}$-structures in $\tau$-structures, where we
assume the normal form established in Lemma~\ref{lemmanormalformles}. In
order to obtain a quantifier-free interpretation
$\mathcal{I}\cap\mathcal{J}$ of $\vocLinEqRingFixed{\Zm{2}}$-structures in
$\tau$-structures such that
$(\mathcal{I}\cap\mathcal{J})(\struct A)\in\clsles{\Zm{2}}$ if, and only
if,
$\mathcal{I}(\struct A) \in \clsles{\Zm{2}}$ and $\mathcal{J}(\struct A)
\in \clsles{\Zm{2}}$, we just set
$\mathcal{I}\cap\mathcal{J} := ( \delta(x'x''\tup x), \vartheta_A(x'x''\tup
x, y'y'' \tup y))$ where
\begin{itemize}
 \item $\delta(x'x''\tup x)= ( x'=x'' \wedge
\delta_I(\tup x))\vee (x'\not=x'' \wedge \delta_J(\tup x))$, and 
 \item $\vartheta_A(x'x''\tup x, y'y'' \tup y) = (x'=x''\wedge y'=y''
\wedge \phi_A(\tup x,\tup y))\vee (x'\not=x'' \wedge y'\not=y'' \wedge
\psi_A(\tup x, \tup y)).$
\end{itemize}
To see that the resulting system is equivalent, the reader should recall
that the duplication of equations and variables does not effect the
solvability of linear equation systems.

Thus, the only
non-trivial part of the proof is to establish closure under
complementation. To do this, we describe an appropriate reduction that
translates from non-solvability to solvability of linear
systems. For this step we make use of the fact that $R$ has
prime-power characteristic (which was not necessary for obtaining the
closure under conjunction and universal quantification).

First of all, we consider the case where $R$ has
characteristic $m=p$ for a prime~$p$. In this case we know that
$\rtsqr{R} = \rtsqr{\Zm p}$ and $\rtsfo{R} = \rtsfo{\Zm p}$ by
Corollary~\ref{corollary_reduction-solvability-integers}, where $\Zm
p$ is a finite field. Over fields, the method of Gaussian elimination
guarantees that a linear equation system $(A, \fvec b)$ is not
solvable if, and only if,  for some vector $\fvec x$ we have $\fvec x
\cdot (A \sep \fvec b) = (0, \dots, 0, 1)$. In other words, the vector
$\fvec b$ is not in the column span of $A$ if, and only if, the vector
$(0, \dots, 0, 1)$ is in the row span of $(A \sep \fvec b)$. This
shows that $(A \sep \fvec b)$ is not solvable if, and only if, the
system $( (A \sep \fvec b)^T , (0,\dots, 0,1)^T)$ is solvable.  
In other words, over fields this
reasoning translates the question of non-solvability to the question of 
solvability. 
In the proof of the next lemma, we generalise this approach to chain 
rings, which enables us to translate from non-solvability to solvability 
over all rings of prime-power characteristic.

\begin{lem}[Non-solvability over chain rings]
\label{lemma_criterion_non_solvability}
Let $(A,\fvec b)$ be a linear equation system over a chain ring $R$
with maximal ideal $\pi R$ and let $n$ be the nilpotency of $\pi$.
Then $(A,\fvec b)$ is not solvable over $R$ if, and only if, there is
a vector $\fvec x$ such that $\fvec x \cdot (A \sep \fvec b) = (0,
\dots, 0, \pi^{n-1})$.
\end{lem}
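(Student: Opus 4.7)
The ``if'' direction is immediate: if $\fvec x \cdot (A \sep \fvec b) = (0,\dots,0,\pi^{n-1})$ then any solution $\fvec c$ of $A\fvec c = \fvec b$ would yield $\pi^{n-1} = \fvec x \fvec b = \fvec x A \fvec c = \fvec 0 \cdot \fvec c = 0$, contradicting $\pi^{n-1} \neq 0$ (which holds since $n$ is the nilpotency of $\pi$). For the converse I first reduce to producing \emph{some} $\fvec x$ with $\fvec x A = \fvec 0$ and $\fvec x \fvec b \neq 0$: in the chain ring $R$ every nonzero element is uniquely of the form $\pi^e u$ with $0 \le e \le n-1$ and $u \in \units R$, so writing $\fvec x \fvec b = \pi^e u$ and replacing $\fvec x$ by $\pi^{n-1-e} u^{-1} \fvec x$ makes $\fvec x \fvec b = \pi^{n-1}$ without disturbing $\fvec x A = \fvec 0$.

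To produce such an $\fvec x$ when $(A,\fvec b)$ is not solvable, I apply the Hermite normal form (Lemma~\ref{lemma_hermite-normal-form}): there are an invertible $S$ and a permutation $T$ so that $SAT$ is in echelon form, whose nonzero (``pivot'') rows $i = 1,\dots,k$ have leading entries satisfying $a_{11} \mid a_{22} \mid \cdots \mid a_{kk}$, with every entry of pivot row $i$ being a multiple of $a_{ii}$ (this uniform divisibility along a pivot row is built into the iterative construction behind Lemma~\ref{lemma_hermite-normal-form}, since at each stage the chosen pivot is divisibility-minimal in the remaining submatrix of a chain ring), and with all further rows being zero. Write $S\fvec b = (c_1,\dots,c_m)^T$. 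Since $S$ is invertible and $T$ a permutation, $(SAT, S\fvec b)$ is solvable iff $(A,\fvec b)$ is. A short back-substitution along the triangular pivot block, using $a_{ii} \mid a_{ij}$ to see that modifying $c_i$ by multiples of the already-solved variables $z_{i+1},\dots,z_k$ preserves divisibility by $a_{ii}$, reduces solvability of $(SAT, S\fvec b)$ to the two conditions: (i)~$c_i = 0$ for every zero row $i$, and (ii)~$a_{ii} \mid c_i$ for every pivot row $i$.

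Since $(A,\fvec b)$ is not solvable, at least one of these conditions fails at some index $i$, and in either case the $i$-th row $\fvec s_i$ of $S$, appropriately scaled, is the sought certificate. If (i) fails, then the $i$-th row of $SAT$ is zero, so $\fvec s_i A = (\text{row $i$ of } SAT)\, T^{-1} = \fvec 0$ while $\fvec s_i \fvec b = c_i \neq 0$, and $\fvec x = \fvec s_i$ works. If (ii) fails, write $a_{ii} = \pi^{d_i} u_i$ and $c_i = \pi^{e_i} v_i$ with $u_i, v_i \in \units R$ and $e_i < d_i$, and set $\fvec x = \pi^{n-d_i} \fvec s_i$. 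Every entry of the $i$-th row of $SAT$ is a multiple of $a_{ii} = \pi^{d_i} u_i$, so multiplication by $\pi^{n-d_i}$ annihilates the entire row (each entry becomes a multiple of $\pi^n = 0$), giving $\fvec x A = \fvec 0$; meanwhile $\fvec x \fvec b = \pi^{n-d_i+e_i} v_i \neq 0$ since $n-d_i+e_i < n$ and $v_i$ is a unit. Rescaling as in the first paragraph yields $\fvec x \fvec b = \pi^{n-1}$ and completes the proof. The crux of the argument is this last construction: it hinges on the Hermite form delivering a pivot row whose entries are \emph{uniformly} divisible by its pivot, which in a chain ring lets a single power of $\pi$ wipe out the whole row while keeping the image on $\fvec b$ nonzero.
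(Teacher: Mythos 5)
Your proof is correct and takes essentially the same route as the paper's: both directions go through the Hermite normal form of Lemma~\ref{lemma_hermite-normal-form}, the observation that (after the triangularising transformation) solvability is equivalent to each diagonal entry dividing the corresponding right-hand-side entry, and the construction of the certificate $\fvec x$ from a row where this divisibility fails by multiplying that row of $S$ by a suitable power of $\pi$. The paper phrases the hard direction as a contrapositive (``if no such $\fvec x$ exists, then the system is solvable''), whereas you argue directly from non-solvability; you also spell out the back-substitution criterion and the final rescaling to hit exactly $\pi^{n-1}$ a bit more explicitly, but the underlying argument is the same.
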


\begin{proof}
Of course, if such a vector $\fvec x$ exists, then $(A,\fvec b)$  
is not solvable. On the other hand, if no such $\fvec x$ exists,
then we apply Lemma~\ref{lemma_hermite-normal-form} to transform the
augmented matrix $(A \sep \fvec b)$ into Hermite normal form $(A' \sep
\fvec b')$ with respect to $A$ (that is, $A' = S A T$ as in
Lemma~\ref{lemma_hermite-normal-form} and $\fvec b' = S \fvec b$). We
 claim that for every row index $i$, the diagonal entry $a_{ii}$ in
the transformed coefficient matrix $A'$ divides the $i$-th entry of
the transformed target vector $\fvec b'$. Towards a contradiction,
suppose that there is some $a_{ii}$ not
dividing $\fvec b'_i$. Then $a_{ii}$ is a non-unit in $R$ and 
can  be written as $a_{ii} = u \pi^t$ for some unit $u$ and
$t \geq 1$. 

By Lemma~\ref{lemma_hermite-normal-form}, it holds
that $a_{ii}$ divides every entry in the $i$-th row of $A'$ and thus
we can multiply the $i$-th row of the augmented matrix $(A' \sep \fvec
b')$ by an appropriate non-unit to obtain a vector of the form
$(0, \dots, 0, \pi^{n-1})$, contradicting our assumption. Hence, in
the transformed augmented coefficient matrix~$(A \sep \fvec b)$,
diagonal entries divide \emph{all} entries in the same row, which
implies solvability of $(A,\fvec b)$, since every linear equation of the
form $ax + c = d$ with $a, c, d \in R$ and $a \divides c, d$ is
clearly solvable.
\end{proof}

\noindent
Along with our previous discussion,
Lemma~\ref{lemma_criterion_non_solvability} now yields the closure of
$\rtsqr{R}$ and $\rtsfo{R}$ under complementation for all rings $R$ which
have prime-power characteristic. As noted above, it is an interesting open
question whether the reduction classes are also closed under
complementation when $R$ does not have prime characteristic. The
prototype example for studying this question is $R=\Zm{6}$.

\begin{thm}
\label{theorem_redtosolv-foop} $\rtsqr{R}$, $\rtsfo{R}$ and 
$\rtstfo{R}$ are closed under first-order operations for all finite
commutative rings $R$ that have prime-power characteristic.
\end{thm}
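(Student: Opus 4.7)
The plan is to assemble the closure properties from the machinery already developed. The case of $\rtstfo{R}$ is immediate, since $\rtstfo{R}$ consists of the classes definable in $\FO(\uniformLindstrom{\clsles{R}})$, a logic closed under all first-order connectives and quantifiers by its syntactic definition. For $\rtsqr{R}$ and $\rtsfo{R}$ I would first invoke Corollary~\ref{corollary_reduction-solvability-integers} to reduce to the case $R = \Zm{p^k}$, which is a chain ring, and then use Lemma~\ref{lemmanormalformles} to assume that every interpretation in question produces a system in the normal form $(A, \fvec 1)$ with $A$ a $\{0,1\}$-matrix.

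Closure under conjunction and universal quantification is then handled by the block-diagonal construction already sketched in the paragraph preceding Lemma~\ref{lemma_criterion_non_solvability}: for $\wedge$, the width is enlarged by two and the two subsystems are placed on disjoint, flag-distinguished copies of the domain; for $\forall \tup x$, the width is enlarged by $|\tup x|$ and one independent subsystem is produced per parameter tuple, the combined system being solvable if, and only if, each slice is solvable.

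The substantive step is closure under complementation, for which I would apply Lemma~\ref{lemma_criterion_non_solvability}: given a normal-form system $(A,\fvec 1)$, non-solvability is equivalent to the solvability of the system whose coefficient matrix is the transpose $(A \sep \fvec 1)^{T}$ and whose right-hand side is $(0,\ldots,0,p^{k-1})^{T}$. I would realise this as a width-$2$ quantifier-free interpretation in which tuples $(u,u)$ serve as proxies for the original elements (acting simultaneously as dual variables and, via the transpose, as dual equations), while all tuples $(u,v)$ with $u \neq v$ collectively represent the single additional equation $\sum_i x_i = p^{k-1}$. Since $\varepsilon$-formulas may be dropped for solvability interpretations, as noted at the start of Section~\ref{sec_reductions-to-solvability}, this duplication is harmless. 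The new $A_r$ and $b_r$ relations can then be defined by quantifier-free case-distinctions on whether $u = v$ or $u \neq v$, using only the atomic formulas already supplied by the given interpretation. Composing with Lemma~\ref{lemmanormalformles} keeps the overall reduction quantifier-free, which gives closure of $\rtsqr{R}$ under negation; running the same argument starting from first-order rather than quantifier-free interpretations yields the analogous closure for $\rtsfo{R}$. Closure under disjunction and existential quantification then follows by De~Morgan's laws.

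The main obstacle is the complementation step, and it is there that the prime-power characteristic hypothesis is essential: it is what permits the reduction to a chain ring, without which Lemma~\ref{lemma_criterion_non_solvability} — whose proof relies on the Hermite normal form of Lemma~\ref{lemma_hermite-normal-form} and on divisibility being a total preorder in a chain ring — is not available. This matches the authors' explicit remark that the prototype open case is $R = \Zm{6}$.
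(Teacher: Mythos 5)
Your proposal is correct and follows essentially the same route as the paper: reduce to $\Zm{p^k}$ via Corollary~\ref{corollary_reduction-solvability-integers}, normalise via Lemma~\ref{lemmanormalformles}, handle conjunction and universal quantification by the block-diagonal combination of independent subsystems, and handle complementation by passing to the transposed augmented system with target $(0,\ldots,0,p^{k-1})$ using Lemma~\ref{lemma_criterion_non_solvability}. The paper states the complementation step somewhat tersely ("Lemma~\ref{lemma_criterion_non_solvability} now yields the closure...") whereas you fill in the explicit quantifier-free interpretation — the diagonal/off-diagonal pair encoding for the extra equation and the re-normalisation step — but the argument is the same.
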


\subsection{Solvability over rings of prime characteristic}

From now on we assume that the commutative ring $R$ is of prime
characteristic $p$. We prove that in this case, the three reduction
classes
$\rtsqr{R}$, $\rtsfo{R}$ and $\rtstfo{R}$ coincide.
By definition, we have $\rtsqr{R} \subseteq \rtsfo{R}
\subseteq \rtstfo{R}$.
  Also, since we know that solvability over $R$
can be reduced to solvability over $\Zm p$ (Corollary~
\ref{corollary_reduction-solvability-integers}), it suffices for our
proof to show that $\rtsqr{\Zm p} \supseteq \rtstfo{\Zm p}$.
Furthermore, by Theorem \ref{theorem_redtosolv-foop} it follows that
$\rtsqr{\Zm p}$ is closed under first-order operations, so it only
remains to prove closure under oracle queries. 

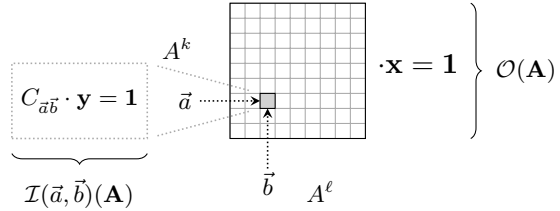
\begin{wrapfigure}{r}{0.54\textwidth}
  	\vspace{-10pt}
	\begin{center} 
		\tikzstyle{blank}=[
	draw=none,
	fill=none
]

\begin{tikzpicture}[x=0.2cm,y=0.2cm,>=stealth]


	\node at (4.7cm,2.3cm) (blank) {\color{white}{x}};

	\node at (-1.8cm,-0.7cm) (blank) {\color{white}{x}};


	\node[blank] (Cell) at (0.5cm, 0.5cm) {};
	\node[blank] (CellNW) at ($(Cell) + (-1mm,1mm)$) {};
	\node[blank] (CellSW) at ($(Cell) + (-1mm,-1mm)$) {};
	\node[blank] (CellSE) at ($(Cell) + (1mm,-1mm)$) {};

	\draw[step=2mm, draw=mediumgrey] (0,0) grid (9,9);

	\draw[draw=black] (0,0) rectangle (1.8cm,1.8cm);

	\draw (-0.7cm,1.2cm) node[font=\footnotesize] {$A^k$};	
	\draw (1.2cm,-0.7cm) node[font=\footnotesize] {$A^\ell$};	

	\draw[draw=black,->,densely dotted,semithick] (-0.4cm,0.5cm) -- ($(Cell) + (-0.5,0)$);	
	\node[font=\footnotesize] at (-0.6cm,0.5cm) {$\tup a$};

	\draw[draw=black,->,densely dotted,semithick] (0.5cm,-0.4cm) -- ($(Cell) + (0,-0.5)$);	
	\node[font=\footnotesize] at (0.5cm,-0.6cm) {$\tup b$};
	
	\draw[fill=lightgrey] (CellNW) rectangle (CellSE) {};
	
	\draw (2.5cm,1cm) node[font=\normalsize] {$\cdot \fvec x = \fvec 1$};

    \draw[decorate,decoration={brace, amplitude=4pt}] 
		(3.2cm,1.8cm) -- (3.2cm,0cm) 
		node[midway, right=2mm,font=\footnotesize]{$\mathcal O(\struct A)$}
		;


	\node[blank] (ISNW) at (-2.9cm,1cm) {};
	\node[blank] (ISTC) at ($(ISNW) + (0.9cm,0)$) {};
	\node[blank] (ISNE) at ($(ISNW) + (1.8cm,0)$) {};
	\node[blank] (ISSE) at ($(ISNE) + (0cm,-1cm)$) {};
	\node[blank] (ISSW) at ($(ISNW) + (0cm,-1cm)$) {};
	
	\draw[densely dotted,draw=mediumgrey,semithick] (ISNW) rectangle (ISSE);
	\draw[densely dotted,draw=mediumgrey,semithick] (ISNE) -- (CellNW);
	\draw[densely dotted,draw=mediumgrey,semithick] (ISSE) -- (CellSW);

    \draw[decorate,decoration={brace, amplitude=4pt}] 
		($(ISSE) + (0,-2mm)$) -- ($(ISSW) + (0,-2mm)$) 
		node[midway, below=2mm,font=\footnotesize]{$\mathcal I(\tup a, \tup b)(\struct A)$}
		;
		
	\node[font=\footnotesize] at ($(ISTC) + (0,-0.5cm)$) {$C_{\tup a \tup b} \cdot \fvec y = \fvec 1$};

	


\end{tikzpicture}
	\end{center}
	\caption{Each entry $(\tup a, \tup b)$ of the coefficient 
 matrix of the outer linear equation system $\mathcal O(\struct A)$ is
 determined by the corresponding inner linear system $C_{\tup a\tup b}
\cdot \fvec y = \fvec 1$ described by $\mathcal I(\tup a, \tup b)(\struct
A)$: this entry is~1 if $\mathcal I(\tup a, \tup b)(\struct A)$ is solvable
and 0 otherwise.}
	\label{fig_solvability-nesting}
  	\vspace{-10pt}
\end{wrapfigure}
\noindent 
More specifically,
it can be seen that proving closure under oracle queries amounts to showing
that the nesting of solvability queries can be
reduced to the solvability of a single linear equation system. In order to
formalise
this, let $\mathcal{I}(\bx,\by)$ be a
quantifier-free interpretation of $\vocLinEqRingFixed{\Zm p}$ in
$\sigma$ with parameters $\bx, \by$ of length $k$ and $\ell$,
respectively. We extend the signature $\sigma$ to $\sigma_X \defeq
\sigma \cup \{ X \}$ and restrict our attention to those
$\sigma_X$-structures~$\struct{A}$ (with domain $A$) where the
relation symbol $X$ is interpreted as $X^{\struct{A}} = \lbrace
(\ba,\bb) \in A^{k \times \ell} \sep \mathcal{I}(\ba,\bb)(\struct{A}) \in
\clsles{\Zm p} \rbrace$. 

\medskip
Then it remains to show that for any
quantifier-free interpretation $\mathcal{O}$ of
$\vocLinEqRingFixed{\Zm p}$ in~$\sigma_X$, there exists a quantifier-free
interpretation of $\vocLinEqRingFixed{\Zm p}$ in $\sigma$ that
describes linear equation systems equivalent to $\mathcal{O}$.

Hereafter, for any $\sigma_X$-structure $\struct A$ and tuples $\tup
a$ and $\tup b$, we will refer to $\mathcal O(\struct A)$ as an
``outer'' linear equation system and refer to $\mathcal I(\tup a, \tup
b)(\struct A)$ as an ``inner'' linear equation system. By applying
Lemma~\ref{lemmanormalformles} and
Theorem~\ref{theorem_redtosolv-foop}, it is sufficient to consider the
case where for $\sigma_X$-structures $\struct{A}$,
$\mathcal{O}(\struct{A})$ describes a linear system $(M,
\fvec 1)$, where $M$ is the $\{ 0,1 \}$-matrix of the relation
$X^{\struct{A}}$. For an illustration of this setup, 
see Figure~\ref{fig_solvability-nesting}.


\begin{thm}[Closure under oracle queries]
For $\mathcal{I}$, $\mathcal{O}$ as above, there exists a quantifier-free 
interpretation $\mathcal{K}$ of $\vocLinEqRingFixed{\Zm p}$ in $\sigma$
such that for all $\sigma_X$-structures $\struct{A}$ it holds that
$\mathcal{O}(\struct{A}) \in \clsles{\Zm p}$ if, and only if,
$\mathcal{K}(\struct{A}) \in \clsles{\Zm p}$.
\end{thm}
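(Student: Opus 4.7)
The plan is to construct a single quantifier-free interpretation $\mathcal{K}$ that combines all inner systems $\mathcal{I}(\ba,\bb)(\struct{A})$ with the outer system $\mathcal{O}(\struct{A}) = (M,\fvec{1})$ into one linear equation system over $\Zm p$, whose solvability is equivalent to that of $\mathcal{O}(\struct{A})$. The key idea is to encode the scalar $M_{\ba\bb}\cdot x_\bb$ as a single fresh variable $\gamma^+_{\ba\bb}$ that, in every solution of $\mathcal{K}(\struct{A})$, is forced to equal $x_\bb$ when $\mathcal{I}(\ba,\bb)(\struct{A})$ is solvable and to equal $0$ when it is not. To make this possible, I intend to exploit both the primal and the dual (Fredholm) characterisations of solvability over the field $\Zm p$: the primal side yields a slack that vanishes exactly when the inner system is not solvable, while the Fredholm alternative supplies a complementary slack that vanishes exactly when the inner system \emph{is} solvable.

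After applying Lemma~\ref{lemmanormalformles} within $\mathcal{I}$, one may assume without loss of generality that every inner system has the normal form $B_{\ba\bb}\fvec{y}=\fvec{1}$, where $B_{\ba\bb}$ is a $\{0,1\}$-matrix whose entries and index sets are quantifier-free definable in $\struct{A}$ with parameters $\ba,\bb$. The combined system $\mathcal{K}(\struct{A})$ has outer variables $x_\bb$ (one per column index of $M$), two scalar slacks $\gamma^+_{\ba\bb}, \gamma^-_{\ba\bb}$ per pair $(\ba,\bb)$, together with primal witness vectors $\fvec{y}_{\ba\bb}$ and dual witness vectors $\fvec{w}_{\ba\bb}$. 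Its equations are, for each pair $(\ba,\bb)$, the primal block $B_{\ba\bb}\fvec{y}_{\ba\bb}=\gamma^+_{\ba\bb}\cdot\fvec{1}$, the dual block $B_{\ba\bb}^{t}\fvec{w}_{\ba\bb}=\fvec{0}$ together with $\sum_i w_{\ba\bb,i}=\gamma^-_{\ba\bb}$, and the linking equation $\gamma^+_{\ba\bb}-\gamma^-_{\ba\bb}=x_\bb$; and for each $\ba$, the single outer equation $\sum_\bb \gamma^+_{\ba\bb}=1$. Since the transpose, the indexed dual sum, and the linking equations are all built from coefficients $0$, $\pm 1$ and from $B_{\ba\bb}$, all entries of $\mathcal{K}(\struct{A})$ are Boolean combinations of atomic formulas obtained by composing $\mathcal{I}$ and $\mathcal{O}$ with straightforward bookkeeping, so $\mathcal{K}$ is indeed a quantifier-free interpretation of $\vocLinEqRingFixed{\Zm p}$ in $\sigma$.

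For correctness I argue as follows. Over the field $\Zm p$, the primal block $B_{\ba\bb}\fvec{y}_{\ba\bb}=c\fvec{1}$ is solvable for the scalar $c=\gamma^+_{\ba\bb}$ only when $c=0$ or $\mathcal{I}(\ba,\bb)(\struct{A})$ is solvable, so $\gamma^+_{\ba\bb}$ is forced to $0$ when $M_{\ba\bb}=0$ and is free otherwise; dually, by the Fredholm alternative every left null vector of $B_{\ba\bb}$ is orthogonal to $\fvec{1}$ iff $\mathcal{I}(\ba,\bb)(\struct{A})$ is solvable, so $\gamma^-_{\ba\bb}=\sum_i w_{\ba\bb,i}$ is forced to $0$ when $M_{\ba\bb}=1$ and may take any value in $\Zm p$ otherwise. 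Substituting into $\gamma^+_{\ba\bb}=x_\bb+\gamma^-_{\ba\bb}$ yields in both cases the identity $\gamma^+_{\ba\bb}=M_{\ba\bb}\cdot x_\bb$, so the outer equation reduces to $\sum_\bb M_{\ba\bb}x_\bb=1$; conversely, any solution $\fvec x$ of $\mathcal{O}(\struct{A})$ extends to a solution of $\mathcal{K}(\struct{A})$ by choosing primal witnesses when $M_{\ba\bb}=1$ and, via a Fredholm witness scaled by $-x_\bb$, dual slacks when $M_{\ba\bb}=0$. The main conceptual obstacle is the identification of the dual slack $\gamma^-_{\ba\bb}$: its existence relies on the Fredholm alternative and hence on $\Zm p$ being a field, which is precisely why the closure under oracle queries is proved only for rings $R$ of prime characteristic rather than of arbitrary prime-power characteristic.
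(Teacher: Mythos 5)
Your proof is correct but takes a genuinely different---and arguably more transparent---route than the paper's. The paper works with a single family of variables $v_{\ba,\bb}$ and forces them to factor as $M_o(\ba,\bb)\cdot v_\bb$ indirectly, by adding subsystems that enforce the two implications ``$v_{\ba,\bb}\neq 0\Rightarrow M_o(\ba,\bb)=1$'' and ``$v_{\ba,\bb}\neq v_{\bc,\bb}\Rightarrow\{M_o(\ba,\bb),M_o(\bc,\bb)\}=\{0,1\}$''; the second of these is encoded as a linear system by invoking Theorem~\ref{theorem_redtosolv-foop} as a black box (and hence, implicitly, the non-solvability criterion of Lemma~\ref{lemma_criterion_non_solvability}). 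You instead introduce the outer variables $x_\bb$ explicitly along with primal and dual slacks $\gamma^\pm_{\ba\bb}$, and realise the product $\gamma^+_{\ba\bb}=M_o(\ba,\bb)\cdot x_\bb$ directly: a scaled copy $B_{\ba\bb}\fvec{y}=\gamma^+_{\ba\bb}\fvec{1}$ of the inner system forces $\gamma^+_{\ba\bb}=0$ when $M_o(\ba,\bb)=0$, while the transposed Fredholm dual system $B_{\ba\bb}^{t}\fvec{w}=\fvec{0}$, $\sum_i w_i=\gamma^-_{\ba\bb}$ forces $\gamma^-_{\ba\bb}=0$ when $M_o(\ba,\bb)=1$. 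Both arguments rest on the same two facts about $\Zm p$---every nonzero element is a unit, and the orthogonal complement of the column space equals the left null space---but yours exposes this dependence directly rather than routing it through Theorem~\ref{theorem_redtosolv-foop} and the Hermite normal form, and it avoids the triple-indexed conditions over $\ba,\bb,\bc$, at the modest cost of a larger variable set. Your account of quantifier-freeness is also sound, since all coefficients are Boolean combinations of atoms of $\mathcal{I}$ (with index positions possibly transposed) together with fixed scalars from $\Zm p$, and the standard tagging trick handles the disjoint union of index sets.
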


\begin{proof} 
For a $\sigma$-structure $\struct{A}$, let $M_o$ denote the 
$\{ 0,1 \}$-coefficient matrix of the outer linear equation system
$\mathcal{O}(\struct{A})$. Then for $(\ba, \bb) \in A^{k \times \ell}$ 
we have $M_o(\ba,\bb)=1$ if, and only if, the inner linear system
$\mathcal{I}(\ba, \bb)(\struct{A})$ is solvable. If we explicitly denote
the set of variables of the outer linear system $\mathcal{O}(\struct{A})$
by $\lbrace v_{\bb} \sep \bb \in A^\ell \rbrace$, then we can express the
equations of $\mathcal{O}(\struct{A})$ as $\sum_{\bb \in A^\ell} M_o(\ba,
\bb) \cdot v_\bb = 1$, for $\ba \in A^k$.

For the new linear equation system $\mathcal{K}(\struct{A})$ we take
$\lbrace v_{\ba, \bb} \sep (\ba, \bb) \in A^{k \times \ell} \rbrace$ as
the set of variables, and we include the equations $\sum_{\bb \in
A} v_{\ba, \bb} = 1$ for all $\ba \in A^k$. In what follows, our aim is to
extend $\mathcal{K}(\struct{A})$ by additional equations so that for every
solution to $\mathcal{K}(\struct{A})$, there are values~$v_\bb \in \Zm p$
such that for $\ba \in A^k$ it holds that $v_{\ba, \bb} = M_o(\ba, \bb)
\cdot v_{\bb}$. Assuming this to be true, it is immediate that
$\mathcal{O}(\struct{A})$ is solvable if, and only if,
$\mathcal{K}(\struct{A})$ is solvable, which is what we want to show. 

\noindent
In order to enforce the condition
``$v_{\ba, \bb} = M_o(\ba, \bb) \cdot v_{\bb}$'' by \emph{linear}
equations, we need to introduce a number of auxiliary linear subsystems to
$\mathcal K(\struct A)$. The reason why we cannot express this condition
directly by a linear equation is because $M_o(\ba, \bb)$ is determined by
the solvability of the inner system $\mathcal{I}(\ba,\bb)(\struct{A})$.
Therefore, if we were to treat both the elements of $M_o(\ba, \bb)$ and the
$v_\bb$ as individual variables, then that would require us to express the
\emph{non-linear} term $M_o(\ba,\bb) \cdot v_{\bb}$.

\medskip
To solve this
problem, we introduce new linear subsystems in~$\mathcal{K}(\struct{A})$
to ensure that for all $\ba, \bb, \bc \in A$ it holds that:
\begin{align}
&\textbf{if }  v_{\ba, \bb} \not= 0 \textbf{ then } M_o(\ba, \bb) = 1;
\text{ and } \label{equation_proof-nesting-solv1} \\
&\textbf{if }  v_{\ba, \bb} \not= v_{\bc, \bb} \textbf{ then } \lbrace
M_o(\ba, \bb) , M_o(\bc, \bb) \rbrace = \{ 0,1
\rbrace.\label{equation_proof-nesting-solv2}
\end{align}

\noindent
Assuming we have expressed (\ref{equation_proof-nesting-solv1}) 
and (\ref{equation_proof-nesting-solv2}), it can be seen that solutions of
$\mathcal{K}(\struct{A})$ directly translate into solutions for
$\mathcal{O}(\struct{A})$ and vice versa. To express
(\ref{equation_proof-nesting-solv1}) we proceed as follows: for each $(\ba,
\bb) \in A^{k \times \ell}$ we add $\mathcal{I}(\ba,\bb)(\struct{A})$
as an independent linear subsystem in $\mathcal{K}(\struct{A})$ in which we
additionally extend each equation by the term $(v_{\ba,\bb} +1)$.
Now, if in a solution of~$\mathcal{K}(\struct{A})$ the variable
$v_{\ba,\bb}$ is evaluated to $0$, then the subsystem corresponding to
$\mathcal{I}(\ba,\bb)(\struct{A})$ has the trivial solution
(recall, that the constant terms of every equations are $1$). However, if a
non-zero value is assigned to $v_{\ba, \bb}$, then this value clearly is a unit 
in
$\Zm p$ and thereby a solution for $\mathcal{K}(\struct{A})$ necessarily
contains a solution of the subsystem $\mathcal{I}(\ba,\bb)(\struct{A})$;
 that is, we have $M_o(\ba, \bb) = 1$.

 \smallskip
For (\ref{equation_proof-nesting-solv2}) we follow a similar approach. 
For fixed tuples $\ba$, $\bb$ and $\bc$, the condition on the right-hand
side of (\ref{equation_proof-nesting-solv2}) is a simple Boolean
combination of solvability queries. Hence, by Theorem
\ref{theorem_redtosolv-foop}, this combination can be expressed by a single
linear equation system. Again we embed the respective linear equation
system as a subsystem in $\mathcal{K}(\struct{A})$ where we add to each of
its equations the term $(1 + v_{\ba, \bb} - v_{\bc, \bb})$. With the same
reasoning as above we conclude that this imposes the constraint
(\ref{equation_proof-nesting-solv2}) on the variables $v_{\ba,\bb}$ and
$v_{\bc, \bb}$, which concludes the proof.
\end{proof}

\begin{cor}
\label{corollary_redtosolv-primechar}
If $R$ has prime characteristic, then $\rtsqr{R}=\rtsfo{R}=\rtstfo{R}$.  
\end{cor}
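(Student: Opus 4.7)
The inclusions $\rtsqr{R} \subseteq \rtsfo{R} \subseteq \rtstfo{R}$ are immediate from the definitions, so the task is to establish $\rtstfo{R} \subseteq \rtsqr{R}$. By Corollary~\ref{corollary_reduction-solvability-integers}, I would first replace $R$ by $\Zm{p}$ throughout and reduce the problem to showing $\rtstfo{\Zm p} \subseteq \rtsqr{\Zm p}$.

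A class $C \in \rtstfo{\Zm p}$ is defined by a formula $\phi$ of $\FO(\uniformLindstrom{\clsles{\Zm p}})$. The plan is to prove by induction on the nesting depth~$n$ of solvability Lindstr\"om quantifiers in $\phi$ that the class defined by $\phi$ lies in $\rtsqr{\Zm p}$. The base case $n=0$ is precisely Theorem~\ref{theorem_redtosolv-foop}, which in particular guarantees that every first-order-definable query lies in $\rtsqr{\Zm p}$.

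For the induction step I would single out one outermost occurrence of a solvability quantifier inside $\phi$, of the form $\lindstrom{\clsles{\Zm p}}\, \bar y\, (\delta, \phi_A, \phi_b)$, whose argument subformulas have nesting depth at most $n$. By the induction hypothesis, each of these argument subformulas defines a class in $\rtsqr{\Zm p}$, so (after invoking Lemma~\ref{lemmanormalformles} to put each one into the explicit normal form for linear systems over $\Zm p$) the ``inner'' interpretation at this occurrence can be taken to be quantifier-free. Treating this occurrence as an oracle call to a fresh relation symbol $X$ turns $\phi$ into a formula of nesting depth $n$ over the expanded signature $\sigma \cup \{X\}$; applying Theorem~\ref{theorem_redtosolv-foop} and again the normal form from Lemma~\ref{lemmanormalformles}, I can assume that the ``outer'' interpretation is quantifier-free. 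This places us exactly in the hypothesis of the preceding theorem on closure under oracle queries, which collapses the composition of the quantifier-free outer interpretation with the quantifier-free oracle interpretation into a single quantifier-free many-to-one reduction to $\clsles{\Zm p}$. Iterating this procedure, one outermost occurrence at a time, eliminates every solvability quantifier and yields $C \in \rtsqr{\Zm p}$.

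The main obstacle, in my view, is the bookkeeping in the induction step: when several oracle calls are combined by first-order connectives, one must normalise the surrounding formula to the shape demanded by the oracle-closure theorem. This normalisation is only possible because Theorem~\ref{theorem_redtosolv-foop} gives closure under first-order operations (so the Boolean and quantifier combinations of the oracle calls stay inside $\rtsqr{\Zm p}$), and because Lemma~\ref{lemmanormalformles} collapses arbitrary quantifier-free reductions into linear systems over $\Zm p$ with $\{0,1\}$-coefficients and all-one right-hand sides. The interplay between these two results is what lets the induction run, and the conclusion $\rtsqr{\Zm p} = \rtsfo{\Zm p} = \rtstfo{\Zm p}$ transfers back to $R$ via Corollary~\ref{corollary_reduction-solvability-integers}.
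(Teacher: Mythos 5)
Your proposal is correct and takes essentially the same route as the paper: reduce to $\Zm p$ via Corollary~\ref{corollary_reduction-solvability-integers}, use Theorem~\ref{theorem_redtosolv-foop} for closure under first-order operations (which handles the first-order glue and gives the base case), and then invoke the closure-under-oracle-queries theorem to eliminate nested solvability quantifiers. The paper leaves the induction on quantifier nesting implicit, whereas you spell it out; that is the only difference, and it is one of exposition rather than substance.
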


\noindent
As explained above, our results have some important
consequences. For a prime $p$, let us denote by $\foslvp$
first-order logic extended by quantifiers deciding
solvability over $\Zm{p}$. 
Corresponding extensions of first-order logic by rank operators
over prime fields $(\forkp)$ were studied by
Dawar~\etal~\cite{dawar09logics}. Their results imply that $\foslvp =
\forkp$ over ordered structures, and that both logics have a strong
normal form over ordered structures, i.e.\ that every formula is
equivalent to a formula with only one application of a solvability or
rank operator, respectively~\cite{pakusa10thesis}. 
Corollary~\ref{corollary_redtosolv-primechar} allows
us to generalise the latter result for $\foslvp$ to the class of all finite
structures.

\begin{cor}
\label{cor_redtosolv-normalformfo}
Every formula $\phi \in \foslvp$ is equivalent to a formula with a single
application of a solvability quantifier to a quantifier-free formula.
\end{cor}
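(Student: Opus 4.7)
The plan is to apply Corollary \ref{corollary_redtosolv-primechar} directly. By definition, $\foslvp = \fo(\uniformLindstrom{\clsles{\Zm p}})$, so the class of $\sigma$-structures defined by any $\foslvp$-sentence is precisely an element of $\rtstfo{\Zm p}$. The corollary collapses $\rtstfo{\Zm p}$ with $\rtsqr{\Zm p}$, hence any such class is already witnessed by a quantifier-free interpretation of $\vocLinEqRingFixed{\Zm p}$ into $\sigma$. Translating this interpretation back into syntax will yield a sentence with exactly one application of the solvability quantifier to quantifier-free subformulas.

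To handle a formula $\phi(\tup x) \in \foslvp$ with free variables $\tup x = (x_1,\ldots,x_k)$, I would first pass to the expanded signature $\sigma' \defeq \sigma \cup \{ c_1,\ldots,c_k \}$ with $k$ fresh constant symbols, and regard $\phi(\tup x)$ as a $\foslvp$-sentence $\phi^\star$ over $\sigma'$: a $\sigma'$-structure $(\struct A, \tup a)$ satisfies $\phi^\star$ iff $\struct A \models \phi[\tup a]$. Applying the previous observation to $\phi^\star$ produces a quantifier-free interpretation $\mathcal I$ of $\vocLinEqRingFixed{\Zm p}$ in $\sigma'$ such that $(\struct A, \tup a) \models \phi^\star$ iff $\mathcal I(\struct A, \tup a) \in \clsles{\Zm p}$.

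Writing $\mathcal I = (\delta, \varepsilon, (\phi_{A_r})_{r \in \Zm p}, (\phi_{b_r})_{r \in \Zm p})$ as a tuple of quantifier-free $\sigma'$-formulas (with $\varepsilon$ chosen trivially, as is permitted by the remark at the start of Section \ref{sec_reductions-to-solvability}), I would then assemble the $\foslvp$-sentence
\[
\psi^\star \;\defeq\; \lindstrom{\clsles{\Zm p}} \tup y_\delta\, \tup y_\varepsilon\, (\tup y_{A_r})_{r}\, (\tup y_{b_r})_{r} \qsep \bigl(\delta, \varepsilon, (\phi_{A_r})_{r}, (\phi_{b_r})_{r}\bigr),
\]
which by the semantics of $\lindstrom{\clsles{\Zm p}}$ defines exactly the class of $\sigma'$-structures defined by $\phi^\star$. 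Substituting $x_i$ for each occurrence of $c_i$ in $\psi^\star$ (renaming bound Lindstr\"om variables if necessary to avoid capture) produces a formula $\psi(\tup x) \in \foslvp$ of the required shape; soundness of this substitution is immediate since all subformulas of the Lindstr\"om quantifier are quantifier-free, so their semantic value does not distinguish between constants and free variables.

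The only slightly delicate point will be the two-way correspondence between the semantic notion of quantifier-free interpretation (as used in the definition of $\rtsqr{\Zm p}$) and the syntactic form of a single Lindstr\"om quantifier $\lindstrom{\clsles{\Zm p}}$ applied to quantifier-free formulas. This correspondence is essentially by definition of the quantifier, and the auxiliary passage through constants serves only to reduce the treatment of free variables to the sentence case covered by Corollary \ref{corollary_redtosolv-primechar}. All algebraic and combinatorial content has already been absorbed into that corollary, so the remaining work is routine bookkeeping.
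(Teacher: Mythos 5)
Your proposal is correct and follows essentially the same route as the paper, which derives this corollary from Corollary~\ref{corollary_redtosolv-primechar} via the correspondence, outlined at the start of \S\ref{sec_reductions-to-solvability}, between the semantic reduction classes $\RelativeSolve{\FO}{\Zm p}{\text{qf}}$, $\RelativeSolve{\FO}{\Zm p}{}$, $\RelativeSolve{\FO}{\Zm p}{\text{T}}$ and syntactic normal forms for $\foslvp$. You simply spell out the routine bookkeeping (translating a quantifier-free interpretation into a single Lindstr\"om quantifier, reducing the free-variable case to the sentence case via fresh constants) that the paper leaves implicit.
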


%
%
\section{Linear algebra in fixed-point logic with
counting} 
\label{sec_linear-algebra-fpc}

\renewcommand{\thefootnote}{\arabic{footnote}}

In this section we present further applications of the techniques we
developed during our study of the solvability problem 
that we pursued so far. Specifically, we turn our
interest to the elements of linear algebra over finite commutative
rings that can be expressed in fixed-point logic with counting. 
While the solvability problem over commutative rings is not definable in
fixed-point logic with counting, we show here that various other natural
matrix problems and properties, such as matrix inverse and matrix
determinant, can be formalised in \fpc over certain classes of
rings. 

To this end, we first apply the definable structure theory that
we established in~\S\ref{sec_structure-of-finite-rings} to show that many
linear-algebraic
problems over commutative rings can be reduced to the
respective problems over local rings. Next, we consider basic
linear-algebraic queries over local rings,
such as multiplication and matrix inversion, for the case where the
local ring comes with a built-in linear ordering (that is, where the
matrix is encoded as a finite $\vocMatrixOrd$-structure). By
Lemma~\ref{lemma_ordering-local-pir} it follows that all
of these definability results hold for matrices over $k$-generated
local rings (for a fixed $k$) as well, since we can define in $\fpc$
a linear order in such rings.
In particular, all of our results on $\vocMatrixOrd$-structures apply
to matrices over chain rings (that is $1$-generated, or principal ideal,
local rings), which
include the class of so called \emph{Galois rings}.
In the final part of this section we study matrices over unordered
Galois rings specifically. Here our main result is that there are
$\FPC$-formulas
that define the characteristic polynomial and the determinant of square
matrices over Galois rings, which extends the results of
Dawar~\etal~\cite{dawar09logics} concerning definability of the determinant
of matrices over finite fields. 


\subsection{Basic linear-algebraic operations over commutative rings}

To begin with, we need to fix our encoding of matrices over a
commutative ring $R$ in terms of a finite relational structure. For this
we
proceed as we did for linear equation systems
in~\S\ref{sec_background}. As before, for non-empty sets $I$ and $J$, an
$I\times J$-matrix $A$ over a commutative ring $R$ is a mapping
$A: I\times J \to R$. We set $\vocMatrix :=
( R, M , \vocRing)$, where $R$ is a unary, and $M$ is a
ternary relation symbol, and we understand each $\vocMatrix$-structure
$\struct A$
as encoding a matrix over a commutative ring 
(given that $\struct A$ satisfies some basic properties which make this
identification well-defined, e.g.\ that $R^{\struct A}$ forms a commutative
ring and that the projection of $M^{\struct A}$ onto the third component is
a subset of $R^\struct{A}$).
If we want to encode \emph{pairs} of matrices over the same commutative
ring $R$
as a finite structure, we use the extended vocabulary $\vocMatrixPair :=
( N , \vocMatrix)$, where $N$ is an additional ternary
relation symbol.
Moreover, we consider a representation of matrices as structures 
in vocabulary $\vocMatrixOrd \defeq (\vocMatrix, \leqslant )$ where it is 
assumed that
$\leqslant^\struct{A}$ is a linear ordering on the set of ring elements
$R^\struct{A}$. Similarly, structures of vocabulary $\vocMatrixPairOrd
\defeq (\vocMatrixPair , \leqslant )$ are used to encode
pairs of matrices over the ring~$R^{\struct A}$ on which
$\leqslant^\struct{A}$ is a linear ordering.

We first recall from \S\ref{sec_structure-of-finite-rings} that every
(finite) commutative ring $R$ can be expressed as a direct sum 
of local rings, i.e.\ $R=\bigoplus_{e \in \ringBase R} e \cdot R$ where
all principal ideals $eR$ are local rings.
It follows that every $I\times J$-matrix $A$ over $R$ can be uniquely
decomposed into a set of matrices $\{ A_e : e \in \ringBase R\}$, where $A
= \bigoplus_{e \in \ringBase R} A_e$ and where $A_e$ is an $I\times
J$-matrix over the local ring~$e \cdot R$. Moreover, following
Lemma~\ref{lemma_projection-local-ring}, this set of matrices can be
defined
in first-order logic. Stated more precisely,
Lemma~\ref{lemma_projection-local-ring} implies the existence of a
parameterised first-order interpretation $\Theta(x)$ of $\vocMatrix$ in
$\vocMatrix$ such that for any $\vocMatrix$-structure $\struct A$, which
encodes an $I\times J$-matrix $A$ over the commutative ring $R$, and all $e
\in \mathcal{B}(\struct R)$, it holds that $\Theta[e](\struct{A})$ encodes
the projection of $A$ onto the local ring $e \cdot R$, i.e.\ the $I \times
J$-matrix $A_e$ over the local ring $e \cdot R$. Since the ring base $\ringBase R$
of $R$ is also definable in first-order logic
(Lemma~\ref{lemma_ring-idempotent-base}), this allows us to reduce most
natural problems in linear algebra from arbitrary commutative rings to
local rings. In particular, we are interested in the following
linear-algebraic queries.

\begin{prop}[Local ring reductions]
\label{proposition_reduction-to-local-rings}
For each of the following problems over commutative rings, there is a
first-order Turing reduction to the respective query over local rings:
\begin{enumerate}
	\item Deciding solvability of linear equation systems (cf.\
\S\ref{sec_structure-of-finite-rings}).
    
	\item Defining basic matrix arithmetic, i.e.\ matrix addition and
matrix multiplication.

       \item Deciding whether a square matrix is invertible (and defining
its inverse in this case).

	\item Defining the characteristic polynomial and the determinant of
a square matrix.

	
	
\end{enumerate}
\end{prop}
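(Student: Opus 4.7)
The plan is to treat all four items by a single uniform recipe: use the $\FO$-definable decomposition $R = \bigoplus_{e \in \ringBase R} e \cdot R$ from Proposition~\ref{proposition_decomposition-into-local-rings} to split the input matrix (or pair of matrices) into its local components $A_e$ over $e \cdot R$, make one oracle call per base element $e \in \ringBase R$ to the corresponding query over the local ring $e \cdot R$, and reassemble the answers by means of $r = \sum_{e \in \ringBase R} e \cdot r$. The required ingredients are already in place: Lemma~\ref{lemma_ring-idempotent-base} furnishes an $\FO$-formula that defines $\ringBase R$, and Lemma~\ref{lemma_projection-local-ring} furnishes an $\FO$-formula $\psi(x,y,z)$ for the projection $r \mapsto e \cdot r$ onto $e \cdot R$ with $e$ as a parameter. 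Together these yield the parameterised $\FO$-interpretation $\Theta(x)$ of $\vocMatrix$ in $\vocMatrix$ discussed in the introduction to this section, which for any base element $e$ maps a $\vocMatrix$-structure $\struct A$ to a $\vocMatrix$-structure encoding $A_e$ over $e \cdot R$; for $\vocMatrixPair$-inputs the construction is carried out on both matrices in parallel.

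Item (1) is precisely Theorem~\ref{reduction_solvability-rings-to-local-rings}, so nothing further is needed there. For item (2), orthogonality of distinct idempotents in $\ringBase R$ kills all cross-terms under the ring isomorphism $R \cong \prod_{e \in \ringBase R} e \cdot R$, so matrix addition and multiplication split componentwise: $A + B = \bigoplus_e (A_e + B_e)$ and $A \cdot B = \bigoplus_e A_e \cdot B_e$. A Turing reduction therefore issues one Lindstr\"om call per $e$ to the local-ring operation and reassembles the componentwise output by an $\FO$-definable sum, producing a $\vocMatrix$-structure over $R$.

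For item (3), the same ring isomorphism, lifted to matrices, shows that $A$ is invertible over $R$ if and only if each $A_e$ is invertible over $e \cdot R$, and in that case $A^{-1} = \bigoplus_e (A_e)^{-1}$. The decision question thus becomes a universal quantification over $\ringBase R$ of local-ring oracle calls, while computing the inverse reduces to the same componentwise reassembly as above. Item (4) is handled coefficient by coefficient: the isomorphism lifts to $R[X] \cong \prod_e (e \cdot R)[X]$, so $\det A = \bigoplus_e \det A_e$ and the $i$-th coefficient of $\chi_A$ is the direct sum over $e$ of the $i$-th coefficients of the $\chi_{A_e}$. An oracle call per $e$, with the coefficient index supplied as a parameter, followed by the usual reassembly, completes the reduction.

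The only point requiring genuine care, and hence the main obstacle, is bookkeeping: each oracle call has to be presented as a bona fide $\vocMatrix$- (respectively $\vocMatrixPair$-) structure whose ring part is the local ring $e \cdot R$ rather than the full ring $R$, and each answer has to be re-encoded as a $\vocMatrix$-structure over $R$. Both tasks are routine applications of the parameterised interpretation $\Theta(x)$ together with $\psi(x,y,z)$, and I do not expect any essential combinatorial obstruction, since all four queries commute with the product decomposition of $R$ in a completely transparent way.
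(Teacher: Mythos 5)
Your proposal is correct and follows essentially the same route as the paper, which states Proposition~\ref{proposition_reduction-to-local-rings} without a separate proof precisely because the surrounding discussion already establishes the key ingredients: the $\FO$-definable base $\ringBase R$ (Lemma~\ref{lemma_ring-idempotent-base}), the $\FO$-definable projection onto $eR$ (Lemma~\ref{lemma_projection-local-ring}), and the resulting parameterised interpretation $\Theta(x)$. Your componentwise decomposition of each query along the isomorphism $R \cong \prod_{e \in \ringBase R} eR$, with one oracle call per base element and $\FO$-definable reassembly, is exactly the intended argument.
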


\noindent
It was shown in~\cite{blass02polynomial} that over finite fields, 
the class of invertible matrices can be defined in $\FPC$ and that there is
an $\FPC$-definable interpretation that associates each invertible matrix
with its inverse. In what follows we show that the same holds when we
consider matrices over ordered local rings. Our proof 
follows the approach taken by Blass~\etal in~\cite{blass02polynomial}. As a
first step, we show that exponentiation of matrices can be defined in
$\FPC$, even if the exponent is given as a binary string of polynomial
length. We then show that for each set $I$, there is
a formula of $\FPC$ that defines the number of invertible $I \times I$
matrices over a local ring $R$. Finally, combining the two results with the
fact that the set of all invertible $I \times I$ matrices over $R$ forms a
group under multiplication, we conclude that the inverse to any invertible
$I \times I$ matrix over $R$ can be defined in $\FPC$.

\makebreak

\noindent
For the first step, to show that powers of matrices over ordered 
local rings can be interpreted in $\FPC$, we need the following lemma on
matrix multiplication. Recall that addition and multiplication of unordered
matrices is defined in exactly the same way as for ordered matrices, except
that we now have to ensure that the index sets of the two matrices, and not
just their dimension, are matching. 
\begin{lem}[Matrix multiplication]

\label{lemma_matrix-multiplication} 
 There is an $\FPC$-interpretation $\Theta$ of $\vocMatrixOrd$ in
 $\vocMatrixPairOrd$ such that for all $\vocMatrixPairOrd$-structures
 $\struct P$, which encode an $I \times K$-matrix $A$ and a
 $K \times J$-matrix $B$ over a commutative ring
 $R$ with a linear order, $\Theta(\struct{P})$
encodes the $I \times J$-matrix $A \cdot B$.
\end{lem}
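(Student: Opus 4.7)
The plan is to write down an $\FPC$-interpretation $\Theta$ that, given a $\vocMatrixPairOrd$-structure $\struct P$ encoding matrices $A\in R^{I\times K}$ and $B\in R^{K\times J}$ (the second by the ternary symbol $N^{\struct P}$), produces the $\vocMatrixOrd$-structure encoding $A\cdot B$. The domain, the ring part, and the linear order on $R$ are simply copied over from $\struct P$; the entire work lies in defining the ternary relation $M^{A\cdot B}(i,j,r) \iff (A\cdot B)(i,j)=r$.

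The obstacle is that the index set $K$ is unordered, so we cannot literally iterate through $K$ when computing $\sum_{k\in K} A(i,k)\cdot B(k,j)$. The remedy is standard once counting is available: group summands by their value. For each ring element $r\in R$ set
\[
n_r(i,j) \defeq \#k\bigl(A(i,k)\cdot B(k,j)=r\bigr),
\]
which is a counting term of $\FPC$ producing a natural number bounded by $\card{K}$. By commutativity of ring addition,
\[
(A\cdot B)(i,j) \;=\; \sum_{r\in R} n_r(i,j)\cdot r,
\]
where $n\cdot r$ denotes the $n$-fold ring sum $r+r+\cdots +r$.

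Next, I would introduce two auxiliary $\FPC$-formulas by standard fixed-point recursions. The first, $\textsf{mul}(n,r,s)$, expresses $s = n\cdot r$: the base case is $\textsf{mul}(0,r,0)$, and the step adds $r$ on the second sort's successor until $n$ is exhausted. Since $n\le \card{K}$ is polynomially bounded, the inductive definition reaches the relevant arguments in polynomially many stages. The second, $\textsf{sum}(i,j,r,s)$, expresses that $s$ is the running total $\sum_{r'\leqslant r} n_{r'}(i,j)\cdot r'$ taken over the ordered enumeration of $R$ afforded by $\leqslant$. Here the base case handles the $\leqslant$-minimal ring element and each induction step, moving to the $\leqslant$-successor, adds the contribution $n_{r'}\cdot r'$ obtained from $\textsf{mul}$. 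The relation $M^{A\cdot B}(i,j,r)$ is then defined as $\textsf{sum}(i,j,r_{\max},r)$ where $r_{\max}$ is the $\leqslant$-maximum element of $R$.

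No genuinely conceptual obstacle arises: the single feature of $\FPC$ we rely on is that counting terms aggregate information over the unordered index set $K$, while the linear order on $R$ lets us serialise the outer sum. All intermediate quantities (the counts $n_r(i,j)$ and the partial sums) are polynomially bounded in $\card{\struct P}$, so the required fixed-point recursions close within polynomially many stages, and the resulting collection of formulas constitutes the desired $\FPC$-interpretation $\Theta$.
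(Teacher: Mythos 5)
Your proof is correct and takes essentially the same approach as the paper: both compute $(A\cdot B)(i,j)$ by using a counting term to obtain, for each $r\in R$, the multiplicity with which $r$ occurs among the products $A(i,k)\cdot B(k,j)$, and then sum $\sum_{r\in R} n_r(i,j)\cdot r$ by iterating along the linear order on $R$. You merely spell out the fixed-point recursions for scalar multiplication and the ordered sum that the paper dismisses with ``can easily be defined in $\FPC$.''
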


\begin{proof} 
We reuse an idea of Blass~\etal~\cite{blass02polynomial}. For $i \in I$
and $j \in J$ we have
\[ 
	(A\cdot B)(i,j) 
	= \sum_{k \in K} A(i,k) \cdot B(k,j) 
	= \sum_{r \in R} r \cdot \bigl[\# {k \in K} \bigl(  A(i,k)
\cdot B (k,j) = r \bigr) \bigr].
\]

\noindent
In other words, instead of individually summing up all products $A(i,k)
\cdot B(k,j)$ for indices $k \in K$, which would require a linear order
on $K$, we use the counting mechanism of $\FPC$ to obtain the multiplicities of
how often a specific ring element $r \in R$ appears in this sum. The
entries of $A \cdot B$ can then easily be obtained by taking the sum over all
ring elements $r \in R$ weighted by the respective multiplicities (the
right-hand expression). Since the ring $R$ is ordered, this expression can
easily be defined in $\FPC$. 
\end{proof}

\noindent
In~\cite{blass02polynomial}, Blass~\etal showed that exponentiation 
of square matrices over a finite field can be expressed in $\FPC$ (using
the fact that matrix multiplication is in $\FPC$). Moreover, by using the
method of repeated squaring, they show that
exponentiation can be expressed even when the value of the exponent is
given as a binary string of length polynomial in the size of the
input structure.

\begin{defi}
Let $\eta(\upsilon)$ be an $\FPC$-formula with a free number variable 
$\upsilon$ and let $t$ be a number term of $\FPC$ with no free variables.
For a structure $\struct B$ we let
\[
	(\eta(\upsilon), t)^\struct{B} \defeq \{ i \sep 0 \leq i \leq
t^\struct{B} \text{ and } \struct B \models \eta[i] \}.
\]

\noindent
We identify the set $(\eta(\upsilon), t)^\struct{B}$ with 
the integer $m = \sum_{i \in (\eta, t)^\struct{B}} 2^i$, i.e.\ the tuple
$(\eta(\upsilon), t)$ defines in the structure $\struct B$ the
$t^\struct{B}$-bit binary representation of $m$. \defnend
\end{defi}



\noindent
Given that the product of two matrices can be defined in
 $\FPC$ (Lemma~\ref{lemma_matrix-multiplication}), it is not hard to see
that the repeated squaring approach outlined in~\cite{blass02polynomial}
for describing matrix exponentiation over finite fields also works for
matrices over commutative rings with a linear order. We state this more
formally as follows.


\begin{lem}[Matrix exponentiation]
\label{lemma_matrix-exponentiation} 
For each pair $(\eta(\upsilon), t)$, where $\eta(\upsilon)$ is an
 $\FPC$-formula with a free number variable $\upsilon$ and $t$ is a number
term, there is an $\FPC$-interpretation $\Theta_{(\eta(\upsilon),  t)} $ of
$\vocMatrixOrd$ in $\vocMatrixOrd$ such that for all $I \times I$ matrices
$\struct A$ over a local ring $\struct R$ with a linear order, the
structure $\Theta_{(\eta(\upsilon), t)}^\struct{A}$ encodes the $I \times
I$ matrix $\struct A^{n}$, where $n = (\eta(\upsilon), t)^\struct{A}$. \qed
\end{lem}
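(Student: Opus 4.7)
The plan is to implement classical repeated squaring inside $\FPC$, using Lemma~\ref{lemma_matrix-multiplication} as the primitive that carries out each individual matrix product. Writing the exponent as $n = (\eta(\upsilon), t)^\struct{A} = \sum_{i \in (\eta, t)^\struct{A}} 2^i$, the bit positions $0 \leq i \leq t^\struct{A}$ range over polynomially many values of the number sort, which is exactly the regime in which an $\FPC$ fixed-point can iterate.

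First I would define, by simultaneous induction on a number-sort stage variable $i$, two $I \times I$ matrices: the current square $B_i$, set to $B_0 = A$ and updated by $B_{i+1} = B_i \cdot B_i$, and the running product $P_i$, set to $P_0$ equal to the $I \times I$ identity matrix (quantifier-free definable using equality on $I$ and the ring constants $0$, $1$) and updated by $P_{i+1} = P_i \cdot B_i$ if $\struct{A} \models \eta[i]$ and $P_{i+1} = P_i$ otherwise. A straightforward induction on $i$ yields $B_i = A^{2^i}$ and $P_{t^\struct{A} + 1} = \prod_{i \in (\eta, t)^\struct{A}} A^{2^i} = A^n$, so the desired output matrix is read off from the final stage of $P_i$.

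Formally, I would encode the history of the computation by two relations $\beta(i, a, b, r)$ and $\pi(i, a, b, r)$ asserting that at stage $i$ the $(a,b)$-entry of $B_i$ (respectively $P_i$) equals $r$; these are produced by a single inflationary fixed-point whose defining formula at stage $i+1$ invokes the formulas underlying Lemma~\ref{lemma_matrix-multiplication} applied to the stage-$i$ slices. Since the ring $R$ carries a linear order (being part of the input vocabulary $\vocMatrixOrd$), the counting-based definition of matrix multiplication supplied by that lemma is directly available here, and since the number of stages is bounded by $t^\struct{A} + 1$, which is polynomial in the size of $\struct{A}$, the fixed-point closes within the usual polynomial stage bound of $\FPC$.

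The main obstacle is the uniform application of Lemma~\ref{lemma_matrix-multiplication} \emph{inside} the fixed-point: the lemma is stated as a $\vocMatrixPairOrd$-to-$\vocMatrixOrd$ interpretation, whereas here the same operation must be applied, at each stage $i$, to a pair of slices of a relation that is itself still being defined. This is handled by composing interpretations, treating $i$ as a parameter, so that the body of the fixed-point formula plugs the current stage's slices into (a parameterised copy of) the multiplication interpretation in order to produce the next slice. The assumption that $R$ is local plays no role in the construction; what is used is only commutativity and the linear order on $R$, both of which are available from $\vocMatrixOrd$.
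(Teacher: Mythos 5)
Your proposal is correct and follows essentially the same route the paper takes: it invokes the repeated-squaring construction of Blass, Gurevich and Shelah (which the paper cites and whose adaptation it leaves implicit), using Lemma~\ref{lemma_matrix-multiplication} at each stage inside a single inflationary fixed-point over the number sort. Your remark that locality of $R$ is not actually used — only commutativity and the linear order inherited from $\vocMatrixOrd$ — is accurate and consistent with the hypotheses of Lemma~\ref{lemma_matrix-multiplication}.
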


\noindent
Recall that the set of invertible $I \times I$-matrices over a commutative
ring $R$ forms a group under matrix multiplication, which is
known as the general linear group (written $\genl{I}{R}$). Hence, if we let
$\ell := \card{\genl{I}{R}}$ then an $I\times I$-matrix $A$ over $R$ is
invertible if, and only if, $A^\ell = \IdMatrix$. The following lemma shows
that the cardinality of the general linear group for local rings $R$
can be defined in $\FPC$. 

\label{lemma_cardinality-gen-lin-group-definable}
\begin{lem}[Cardinality of $\genl{I}{R}$]
There is an $\FPC[\vocMatrix]$-formula $\eta(\upsilon)$, 
with a free number variable $\upsilon$, and a number term $t$ (without free
variables) in
$\FPC[\vocMatrix]$, such that for any 
$I\times I$-matrix $\struct A$ over a local ring $R$, it holds that
$(\eta(\upsilon), t)^\struct{A} = \card{\genl{I}{\struct R}}$.
\end{lem}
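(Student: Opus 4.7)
The plan is to derive a closed-form expression for $\card{\genl{I}{R}}$ in terms of easily definable quantities and then show that every ingredient can be computed bit by bit in $\FPC$ using the $(\eta,t)$ encoding from the preceding definition. Concretely, let $m$ be the maximal ideal of the local ring $R$ and $F=R/m$ the residue field of cardinality $q=\card R/\card m$; set $n=\card I$. I would first invoke the classical fact that an $I\times I$ matrix over $R$ is invertible if and only if its reduction modulo $m$ lies in $\genl{I}{F}$. The reduction map $\genl{I}{R}\to\genl{I}{F}$ is surjective (any lift of an invertible matrix over $F$ has a determinant whose image is a unit, hence is itself a unit), and each of its fibres is a coset of the additive group $M_I(m)$ of matrices with entries in $m$ and therefore has cardinality $\card{m}^{n^2}$. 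Combining this with $\card{\genl{n}{F_q}}=\prod_{i=0}^{n-1}(q^n-q^i)$ yields the closed form
\[
   \card{\genl{I}{R}} \;=\; \card{m}^{n^2}\cdot\prod_{i=0}^{n-1}\bigl(q^n-q^i\bigr).
\]

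Next, I would verify that the basic parameters $n=\card I$, $\card R$, $\card m$ and $q=\card R/\card m$ are all $\FPC$-definable as polynomially bounded numerical terms. For this, note that the maximal ideal $m=R\setminus\units R$ is first-order definable over any local ring (by Proposition~\ref{proposition_local-rings-properties}), so counting terms return the cardinalities $\card R$ and $\card m$ directly, and $q$ is obtained as their quotient. The tuple $(I,I)$ gives $n$ via a counting term.

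The main work is then to evaluate the formula above bit by bit, since the intermediate values $q^i$, $q^n-q^i$ and $\card m^{n^2}$ are in general exponentially large in $\card{\struct A}$ and cannot be stored as individual number terms. I would represent each such value by a pair $(\eta,t)$ and assemble, by a cascade of $\FPC$-interpretations in the style of Blass~\etal~\cite{blass02polynomial}, the standard school-book subroutines for (i)~addition and subtraction of two polynomial-length bit-strings, (ii)~multiplication of two such bit-strings, and (iii)~iterated multiplication of polynomially many bit-strings via a fixed-point recursion (together with repeated squaring to produce $q^n$ and $\card m^{n^2}$ from the small numbers $q$ and $\card m$). These routines work on the ordered numerical sort, where $\FPC$ captures polynomial time, so their definability is unproblematic. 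A fixed-point induction on $i=0,1,\dots,n-1$ then accumulates bit-representations of the running power $q^i$ and of the running product $\prod_{j<i}(q^n-q^j)$; one final multiplication by the bit-representation of $\card m^{n^2}$ gives the desired bit-representation of $\card{\genl{I}{R}}$, from which the required $\eta(\upsilon)$ and $t$ are read off.

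The only delicate part of the argument is the uniform implementation of iterated arithmetic on exponentially large bit-encoded numbers inside $\FPC$. This is essentially folklore — every operation involved is polynomial-time computable on the ordered number sort — but it demands a careful layered construction of $\FPC$-interpretations, mirroring and extending the matrix-exponentiation development of~\cite{blass02polynomial} that has already been reused in Lemmas~\ref{lemma_matrix-multiplication} and~\ref{lemma_matrix-exponentiation}. All remaining steps are routine consequences of the closed form and the definability of $m$ and $\units R$.
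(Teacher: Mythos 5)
Your proof is correct and follows essentially the same route as the paper: derive the closed form $\card{\genl{I}{R}} = \card{m}^{n^2}\cdot\prod_{i=0}^{n-1}(q^n-q^i)$ via the surjective reduction homomorphism $\genl{I}{R}\to\genl{I}{F}$ whose kernel has size $\card{m}^{n^2}$, compute $\card{\genl{I}{F}}$ by the standard linear-independence count, then observe that the bit-level iterated arithmetic is $\FPC$-definable on the ordered numeric sort. The paper writes the formula equivalently as $\card{R}^{n^2}\prod_{i=0}^{n-1}(1-q^{i-n})$ and treats the final $\FPC$-encoding step as ``straightforward to verify,'' whereas you spell out the bit-string subroutines explicitly, but the mathematical content is identical.
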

\begin{proof}
Let $R$ be a local ring with maximal ideal $m$ and let $I$  be a
finite set of size $n > 0$. We denote the field $R / m$ by $F$
and its cardinality by $q \defeq \card{F}$. Then we have
\[ 
	\card{\genl{I}{F}} = q^{n^2} \cdot \prod_{i=0}^{n-1} (1-q^{i-n}) 
	\quad\text{ and }\quad
	\card{\genl{I}{R}} = \card{R}^{n^2} \cdot \prod_{i=0}^{n-1}
(1-q^{i-n}).
\]

\noindent
Indeed, the first equation is easy to verify: an $I \times I$ matrix over
$F$ is invertible if, and only if, its columns are linearly independent.
Each set of $i$ linearly independent columns generate $q^i$ different
vectors in $F^I$. Thus, if we have already fixed $i$ linearly independent
columns, there remain $(q^{n} - q^i)$ vectors in $F^I$ which can be used
extend to this set to a set of $i+1$ linearly independent columns. This
counting argument shows that
\begin{align*}
	\card{\genl{I}{F}} &= (q^{n} -1 ) \cdot (q^{n} - q ) \cdot (q^{n}
-q^2 ) \cdots (q^{n} -q^{n-1} ) =  \prod_{i=0}^{n-1} (q^{n} - q^i ).
\end{align*}

\noindent
For the second equation, we let $\pi$ denote the canonical group
epimorphism $\pi: \genl{I}{R} \ra \genl{I}{F}$. It is easy to see that
$\card{\mathrm{ker}(\pi)} = \card{m}^{n^2}$. The homomorphism
theorem thus implies that $\card{\genl{I}{\struct R}} =  {\card{m}^{n^2}}
\card{\genl{I}{F}}$ which yields the claim since $q \cdot
\card{m} = \card{R}$.

\medskip
By the above claim we have an exact expression
for the cardinality of $\card{\genl{I}{R}}$ for any non-empty set $I$ and
any finite local ring $R$. It is straightforward to verify that the binary
encoding of this expression can be formalised by a formula and a number
term of fixed-point logic with counting.
\end{proof}

\begin{thm}[Matrix inverse]
\hfill
\begin{enumerate}
 \item There is an $\FPC$-interpretation $\Theta$ of $\vocMatrixOrd$ in
$\vocMatrixOrd$ such that for any $\vocMatrixOrd$-structure~$\struct A$
encoding an $I \times I$-matrix $A$ over a 
commutative ring $R$ with linear order, 
$\Theta(\struct{A})$ encodes an $I\times I$-matrix $B$ such
that $AB = \IdMatrix$, if $A$ is invertible.
\item For every $k \geq 1$, there is an $\FPC$-interpretation $\Theta$ of
$\vocMatrix$ in $\vocMatrix$ such that for any $\vocMatrix$-structure
$\struct A$ which encodes an $I \times I$-matrix $A$ over a
commutative ring $R$ that splits into a direct sum of $k$-generated local
rings, it holds that $\Theta(\struct{A})$ encodes an $I\times I$-matrix $B$
over $R$ such that $B = \inv{A}$, if $A$ is invertible, and
$B=\ZeroMatrix$, otherwise.
\end{enumerate}
\end{thm}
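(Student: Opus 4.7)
The plan for both parts is to reduce to each local summand of $R$ via Proposition~\ref{proposition_reduction-to-local-rings}(3), and then on each summand $eR$ to exploit the fact that if $A_e$ is invertible in $\genl{I}{eR}$ and $\ell_e \defeq \card{\genl{I}{eR}}$, then $A_e^{\ell_e - 1} = A_e^{-1}$. Three ingredients are already in place: first-order definability of the decomposition $R = \bigoplus_{e \in \ringBase{R}} eR$ and of the projections of $A$ (Lemmas~\ref{lemma_ring-idempotent-base} and~\ref{lemma_projection-local-ring}); $\FPC$-definability of the binary representation of $\ell_e$ on each local summand (Lemma~\ref{lemma_cardinality-gen-lin-group-definable}); and $\FPC$-definability of matrix exponentiation with an exponent given in binary over ordered local rings (Lemma~\ref{lemma_matrix-exponentiation}).

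For part~(1), the linear order on $R$ restricts to each local summand $eR$, so the three ingredients combine directly. For each base idempotent $e$ I would first compute $\ell_e$ via Lemma~\ref{lemma_cardinality-gen-lin-group-definable}, then $A_e^{\ell_e - 1}$ via Lemma~\ref{lemma_matrix-exponentiation}, and finally take the direct sum of these matrices indexed by $\ringBase{R}$ to obtain $B$. Whenever $A$ is invertible, each $A_e$ is invertible in $\genl{I}{eR}$ and $B$ coincides with $A^{-1}$; in particular $AB = \IdMatrix$, as required.

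For part~(2), each summand $eR$ is a $k$-generated local ring but comes without an intrinsic linear order; Lemma~\ref{lemma_ordering-local-pir} supplies an $\FPC$-definable one once suitable parameters $(\alpha_e, \pi^e_1, \dots, \pi^e_k) \in eR$ are fixed. I would run the part~(1) construction relative to any such parameter tuple to obtain a candidate matrix $B_e$; since the inverse of $A_e$ is unique whenever it exists, $B_e$ does not depend on the parameter choice, so one can existentially quantify over valid parameter tuples in the interpretation without ambiguity. To handle the non-invertible case, compute $AB$ using Lemma~\ref{lemma_matrix-multiplication} (which applies over unordered commutative rings) and output $B$ if $AB = \IdMatrix$, and $\ZeroMatrix$ otherwise. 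I expect the main obstacle to be precisely this parameter-handling: the interpretation must produce a single matrix from $\struct A$ alone, and the justification that the outcome is parameter-independent rests entirely on the uniqueness of the inverse.
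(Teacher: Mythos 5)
Your proposal reconstructs the paper's (very terse) proof essentially verbatim: decompose $R$ into local summands via Lemmas~\ref{lemma_ring-idempotent-base} and~\ref{lemma_projection-local-ring}, compute $\ell_e = \card{\genl{I}{eR}}$ on each summand (Lemma~\ref{lemma_cardinality-gen-lin-group-definable}), obtain $A_e^{-1} = A_e^{\ell_e - 1}$ by exponentiation (Lemma~\ref{lemma_matrix-exponentiation}), and for part~(2) invoke Lemma~\ref{lemma_ordering-local-pir} to manufacture an order on each summand; your observation that existential quantification over parameter tuples is sound because the inverse is unique is exactly the detail the paper leaves implicit. One small slip: Lemma~\ref{lemma_matrix-multiplication} does \emph{not} apply over unordered rings --- its proof evaluates $\sum_{r \in R} r \cdot n_r$, which needs the order on $R$ --- so the invertibility check in part~(2) must be carried out per local summand using the order supplied by Lemma~\ref{lemma_ordering-local-pir}, or replaced by the cleaner test $A_e^{\ell_e} = \IdMatrix$ which dispenses with the multiplication lemma entirely.
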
 
\begin{proof}
For the first claim, we combine the arguments outlined
above. For the second claim, we additionally apply
Lemma~\ref{lemma_ordering-local-pir} to obtain a linear order on the local
summands of $R$.
\end{proof}





\subsection{Characteristic polynomial over Galois rings}

In~\cite{blass02polynomial}, Blass~\etal
established
that the problem of deciding singularity of a square
matrix over $\GF{2}$ can be expressed in $\FPC$, as we already explained above. Recall that a matrix $A$ is singular
over a field if, and only if, its determinant $\det(A)$ is zero. The result
of Blass~\etal therefore implies that the determinant of a matrix over
$\GF{2}$ can be expressed in $\FPC$ by testing for singularity. This result
was generalised by Dawar~\etal~\cite{dawar09logics}, who showed that over
any finite field (as well as over $\Z$ and $\Q$) the characteristic
polynomial (and thereby, the determinant) of a matrix can be defined in
$\FPC$ (for full details,
see~\cite{holm10thesis}).
Pakusa~\cite{pakusa10thesis} observed that the
same approach works for
the definability of the determinant and characteristic polynomial of
matrices over prime rings $\Zm{p^n}$. 
Recall that for an $I\times I$-matrix $A$ over a commutative ring~$R$, the 
characteristic polynomial $\chi_A \in R[X]$ of $A$ is defined as $\chi_A = 
\det(X E_I - A)$, where $E_I$ denotes the $I\times I$-identity matrix over $R$.

Here we show that the characteristic polynomial of 
matrices over any Galois ring can also be defined in $\FPC$. A finite
commutative ring $R$ is called a \emph{Galois ring} if it is a Galois
(ring) extension of the ring $\Zm{p^n}$ for a prime $p$ and $n \geq
1$. As we will only work with the following equivalent
characterisation of Galois rings we omit the definition of 
Galois ring extensions (for details, we refer to
\cite{bini02finite,mcdonald1974finite}).

\begin{defi}
A \emph{Galois ring} is a finite commutative ring $R$ which is 
isomorphic to a quotient ring $\Zm{p^n}[X] / (f(X))$, where $f(X)$ is a
monic irreducible polynomial of degree $r$ in $\Zm{p^n}[X]$ whose image
under the reduction map $\mu: \Zm{p^n} \rightarrow \Zm{p^n} / (p \cdot
\Zm{p^n}) \isom \GF{p}$ is irreducible.
Such a polynomial is
called a \emph{Galois polynomial} of degree $r$ over $\Zm{p^n}$.\defnend 
\end{defi}

\noindent
We summarise some useful facts about Galois rings. For every
ring~$\Zm{p^n}$ and every $r \geq 1$, there is a \emph{unique}
Galois extension of degree $r$ over $\Zm{p^n}$, which we denote by
$\GR{p^n}{r}$. 
Moreover, any Galois ring is a chain ring, and thus we
can use Lemma~\ref{lemma_ordering-local-pir} to obtain an \FPC-definable
linear order on such rings. As Galois rings include all finite fields
and all prime rings, the following theorem gives a  generalisation of
all known results concerning the logical complexity of
the characteristic polynomial and determinant from~\cite{holm10thesis,dawar09logics,blass02polynomial}. 

\begin{thm}[Characteristic polynomial]
\label{theorem_charpoly-galois-rings}
There are $\FPC$-formulas $\theta_{\text{det}}(z)$ and 
$\theta_{\text{char}}(z,\upsilon)$, where $z$ is an element variable and
$\upsilon$ is a number variable, such that for any 
$\vocMatrix$-structure $\struct A$ which encodes an $I\times I$-matrix $A$
over a Galois ring $R$ we have:
\begin{itemize}

    \item $\struct A \models \theta_{\text{det}}[d]$ if, and only if, the
determinant of $A$ over $R$ is $d \in R$;

    \item $\struct A \models \theta_{\text{char}}[d,k]$ if, and only if,
the coefficient of $x^k$ in the characteristic polynomial $\chi_{\struct
A}(x)$ of $A$ over $R$ is $d \in R$.
\end{itemize}
\end{thm}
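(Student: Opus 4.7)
The plan is to reduce the problem to the case of matrices over a linearly ordered Galois ring, and then to express the characteristic polynomial by simulating a division-free polynomial-time algorithm (such as Samuelson--Berkowitz) in \FPC. First, every Galois ring $R = \GR{p^n}{r}$ is a chain ring, since its maximal ideal $pR$ is principal. Hence Lemma~\ref{lemma_ordering-local-pir} with $k=1$ yields an \FPC-definable linear order on $R$, using the polynomial expansion of ring elements over the Teichmüller coordinate set constructed in that proof. Thus it suffices to define $\theta_{\text{char}}$ and $\theta_{\text{det}}$ over $\vocMatrixOrd$-structures, since the order on $R$ is \FPC-definable from the $\vocMatrix$-structure alone.

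With the ring linearly ordered, matrix multiplication of $I \times I$ matrices over $R$ is \FPC-definable by Lemma~\ref{lemma_matrix-multiplication}, despite the index set $I$ being unordered. Following the strategy of Dawar~\etal~\cite{dawar09logics,holm10thesis} for the case of finite fields and of $\Z$, I would define the coefficients of $\chi_A$ by simulating the Samuelson--Berkowitz algorithm, which computes $\chi_A(X)$ through a sequence of $O(|I|)$ matrix multiplications and involves no divisions. This latter point is crucial: over a ring of characteristic $p^n$, approaches based on Newton's identities or the Faddeev--LeVerrier recurrence would require the inversion of integer coefficients divisible by $p$ and would fail exactly when $p \leq |I|$. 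Each intermediate matrix produced by the algorithm is \FPC-definable by induction on the stage, with Lemma~\ref{lemma_matrix-multiplication} used at each step to carry out the matrix products over unordered indices.

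The main technical obstacle is that the Samuelson--Berkowitz algorithm is naturally formulated in terms of a chain of nested principal submatrices of $A$, which presupposes a linear order on $I$; in our setting $I$ is unordered and no canonical order is \FPC-definable. To overcome this, I would use the \FPC workaround developed in~\cite{blass02polynomial,holm10thesis}: each entry of each intermediate matrix can be written as a sum, over a polynomial-size family of tuples of indices from $I$, of products of entries of $A$; such sums are permutation-invariant, and can be evaluated in \FPC by grouping tuples according to the ring value they contribute, in the same spirit as the multiplicity-based evaluation in the proof of Lemma~\ref{lemma_matrix-multiplication}. Once all coefficients are produced, the formula $\theta_{\text{char}}(z,\upsilon)$ is obtained, and $\theta_{\text{det}}(z)$ is read off as the coefficient of $x^0$, up to the sign $(-1)^{|I|}$ already \FPC-definable from the cardinality of the domain.
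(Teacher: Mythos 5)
Your first step is sound and matches the paper: a Galois ring $\GR{p^n}{r}$ is a chain ring with principal maximal ideal $pR$, so Lemma~\ref{lemma_ordering-local-pir} with $k=1$ gives an $\FPC$-definable linear order on the \emph{ring} $R$, after which Lemma~\ref{lemma_matrix-multiplication} yields $\FPC$-definable matrix multiplication. You have also correctly spotted the obstacle that makes the problem non-trivial: Newton's identities (and hence Faddeev--LeVerrier/Csanky applied directly) require dividing by $1,\dots,\lvert I\rvert$, which is not possible in characteristic $p^n$.

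However, your proposed fix has a genuine gap, and it is in two places. First, the Samuelson--Berkowitz algorithm does not need an order on the ring $R$; it needs an order on the \emph{index set} $I$, because the algorithm proceeds along a chain of nested principal submatrices $A[1..j,1..j]$. Lemma~\ref{lemma_ordering-local-pir} gives you an order on $R$, but no $\FPC$-definable order on $I$ exists in general, and the entire difficulty in this line of work is precisely that $I$ is unordered. Second, the ``workaround'' you invoke is not available. The intermediate Toeplitz matrices $T_j$ in the Berkowitz algorithm are \emph{not} invariant under permutations of $I$: if you conjugate $A$ by a permutation matrix, the intermediates change, even though the final characteristic polynomial does not. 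Moreover, their entries (such as $-r_j M_j^{k} c_j$) have exponentially many monomials when expanded into sums of products of entries of $A$, so there is no ``polynomial-size family of tuples'' to sum over; the multiplicity-counting trick of Lemma~\ref{lemma_matrix-multiplication} applies to a single matrix product over one unordered index set, not to an exponential expansion of a whole algorithm. Neither \cite{blass02polynomial} nor \cite{holm10thesis} contains such a workaround.

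The paper resolves the characteristic-$p^n$ obstacle quite differently, and it is instructive to see why Csanky's algorithm is the right choice. Csanky's method is built on the power sums $\mathrm{tr}(A^k)$, and both matrix powering and taking the trace are genuinely \emph{order-invariant} in $I$, hence $\FPC$-definable without any ordering of the index set. To cope with the divisions, the paper uses Lemma~\ref{lemma_definable-isomorphism-galois-rings} to realise $R\cong\Zm{p^n}[X]/(f(X))$ explicitly in $\FPC$, lifts $f$ to an integer polynomial $F$ and the matrix to $A^\star$ over the characteristic-zero ring $S=\Z[X]/(F(X))$, runs Csanky over $S$ (where the divisions are exact integer divisions and can be carried out on the number sort), and reduces modulo $p^n$ at the end. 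If you want to pursue a division-free route you would have to replace Samuelson--Berkowitz by something whose intermediates are themselves symmetric functions of $A$ computable by an $\FPC$ fixed point over unordered $I$; as written, your proof does not establish that.
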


\noindent
Before we prove this theorem, we need some technical results. First of all,
we fix an encoding of polynomials by number terms of $\FPC$, as follows.

\begin{defi}[Encoding polynomials]
Let $\pi(\upsilon)$ be a number term (of $\FPC$) in signature $\tau$, where $\upsilon$ is
a number variable. Given a $\tau$-structure $\struct A$ and an integer
$m$, we write $\function{poly}_{X}(\pi, \struct A, m)$ to denote the
integer polynomial $a_{m}X^m + \dots + a_1X + a_0$, where $a_i =
\pi[i]^{\struct{A}}$ for each $i \in [m]$. \defnend
\end{defi}

\makebreak

\noindent
By the definition of Galois rings, we know that each $\GR{p^n}{r}$ is 
isomorphic to a quotient ring $\Zm{p^n}[X] / (f(X))$, where $f(X)$ is
a Galois polynomial of degree $r$ over $\Zm{p^n}$.
The following lemma shows that we can define this isomorphism explicitly 
in $\FPC$ over each Galois ring.

\begin{lem}[Representation of Galois rings]
\label{lemma_definable-isomorphism-galois-rings}
Let $\upsilon$ be a number variable and $x, y, z$ be element variables.
There is an $\FPC$-formula $\phi(x,y)$ and $\FPC$-number terms $\eta(\upsilon; x,y,z)$
and $\pi(\upsilon; x,y)$ in vocabulary $\vocRing$ such that for any Galois ring $R \isom \GR{p^n}{r}$ and
all pairs $(\alpha, \beta) \in \phi(x,y)^{R}$, it holds that the map $\iota\colon R \rightarrow \Zm{p^n}[X]$ given by 
\[
	\iota: g \mapsto
	\function{poly}_{X}(\eta(\upsilon; \alpha/x,\beta/y, g/z),  R, r) \bmod{p^n},
\]

\noindent
is a ring isomorphism $\iota\colon R \isom \Zm{p^n}[X]/(f(X))$, where
$f(X)$ is
the Galois polynomial of degree~$r$ over $\Zm{p^n}$ given by $f(X) \defeq
\function{poly}_{X}(\formula{\pi}(\upsilon; \alpha/x, \beta/y), R, r) \bmod{p^n}$.
\end{lem}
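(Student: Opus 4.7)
The plan is to combine the structure theory of Galois rings with the FPC-definable linear order on chain rings established in Lemma~\ref{lemma_ordering-local-pir}. Since $R \isom \GR{p^n}{r}$ is a chain ring whose maximal ideal is $pR$ for the prime $p = \mathrm{char}(R/pR)$, with $|R| = p^{nr}$ and $R/pR \isom \GF{p^r}$, both $p$ and $r$ are FPC-definable from $|R|$. The prime subring $\Zm{p^n} \subseteq R$ generated by~$1$ is first-order definable, and the Teichmüller coordinate set $\Gamma(R) = \{x \in R : x^{p^r} = x\}$ is FPC-definable because the exponent $p^r$ is available as a numerical term.

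Applying Lemma~\ref{lemma_ordering-local-pir} with $k=1$ and canonical maximal-ideal generator $p \in R$, the choice of $\alpha$ as a generator of the cyclic group $\Gamma(R) \setminus \{0\}$ (equivalently, a Teichmüller lift of a primitive element of the residue field) yields an FPC-definable linear order on $R$. The formula $\phi(x,y)$ is taken to hold of pairs $(\alpha, \beta)$ where both $\alpha$ and $\beta$ are such generators; we keep two parameters to allow flexibility in the resulting Galois polynomial, while the order depends only on $\alpha$. The central structural fact from the theory of Galois rings is that for any such $\beta$, the set $\{1, \beta, \beta^2, \ldots, \beta^{r-1}\}$ forms a free $\Zm{p^n}$-module basis of $R$.

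Given this basis, every $g \in R$ has a unique representation $g = \sum_{i=0}^{r-1} c_i \beta^i$ with $c_i \in \Zm{p^n}$. Using the FPC-definable order on $R$ and the fact that FPC captures polynomial time on ordered structures, the number term $\eta(\upsilon; \alpha, \beta, g)$ is defined to return the binary encoding of $c_\upsilon$, viewed as an integer in $\{0, 1, \ldots, p^n - 1\}$. Likewise, the number term $\pi(\upsilon; \alpha, \beta)$ encodes the $\upsilon$-th coefficient of the minimal polynomial $f(X) = X^r - \sum_{i=0}^{r-1} a_i X^i$ of $\beta$ over $\Zm{p^n}$, whose coefficients $-a_i$ are obtained by expressing $\beta^r$ in the basis.

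The main obstacle is the verification that $\iota$ is a ring isomorphism onto $\Zm{p^n}[X]/(f(X))$ and that $f$ is a Galois polynomial, i.e.\ that its reduction modulo $p$ is irreducible over $\GF{p}$. Both facts are classical consequences of the structure theory of Galois rings (see~\cite{bini02finite}): the minimal polynomial of a Teichmüller lift of a primitive element of the residue field is always a Galois polynomial of degree $r$, and the evaluation map $X \mapsto \beta$ induces a $\Zm{p^n}$-algebra isomorphism $\Zm{p^n}[X]/(f(X)) \isom R$. Since such pairs $(\alpha, \beta)$ always exist in a Galois ring, $\phi$ is non-empty on every $R$ in the class, so the interpretation is well-defined.
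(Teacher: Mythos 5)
Your argument is correct, and it reaches the same kind of isomorphism (polynomial evaluation at a Teichm\"uller lift $\beta$) as the paper, but it constructs the Galois polynomial $f(X)$ by a different route. The paper first works in the residue field $F = R/m$, fixes a primitive element and its minimal polynomial $g(X)$ over $\GF{p}$ using Holm's result on finite fields, then lifts $g$ to $f$ over $\Zm{p^n}$ via Bini--Flamini's explicit coefficient formula $b_i = p^n - p + a_i$, and only afterwards takes $\beta \in R$ to be a root of that $f$. You instead pick $\beta$ first, as a Teichm\"uller lift of a primitive element (so $\beta^{p^r} = \beta$, $\{1,\beta,\dots,\beta^{r-1}\}$ a free $\Zm{p^n}$-basis of $R$), and define $f$ as the actual minimal polynomial of $\beta$ over $\Zm{p^n}$, computing it via the $\FPC$-definable order from Lemma~\ref{lemma_ordering-local-pir} with $k=1$. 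The two constructions yield possibly different Galois polynomials for the same $R$, but either is a legitimate presentation $R \isom \Zm{p^n}[X]/(f(X))$ and the lemma statement only requires the map to be defined by $\pi$, not to be canonical. Your route is slightly more self-contained in that it leans on a lemma already proved in this paper rather than citing the Bini--Flamini lifting formula directly, whereas the paper's version makes $f$ fully explicit without needing to invert the basis representation; both ultimately rest on the $\FPC$-definability of primitive elements in finite fields, which underlies Lemma~\ref{lemma_ordering-local-pir} as well. One small remark: your $\phi(x,y)$ takes both $\alpha$ and $\beta$ to be Teichm\"uller generators, with the order depending only on $\alpha$; this works, but the cleaner match to the lemma's intent is to let $\alpha$ govern the order and let $\beta$ range over Teichm\"uller generators independently, which is essentially what you do.
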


\begin{proof}
We recall the construction of $\GR{p^n}{r}$ described by 
Bini and Flamini~\cite{bini02finite}. Recall that the residue field
$\GF{p^r}$ of $\GR{p^n}{r}$ is isomorphic to $\Zm{p}[X]/(g(X))$ where
$g(X)$ is a monic polynomial of degree $r$ over $\Zm{p}$. More
specifically, $g(X)$ is the least monic polynomial in $\Zm{p}[X]$ such that
$g(\alpha) = 0$, where $\alpha \in \GF{p^r}$ is a primitive element of
$\GF{p^r}$; that is, a generator of the cyclic group $\units{\GF{p^r}}$.
Let one such $g(X)$ be fixed. Then $\GR{p^n}{r} \isom \Zm{p^n}[X]/(f(X))$
where $f(X)$ is a polynomial of degree $r$ in $\Zm{p^n}[X]$ such that $f(X)
\equiv g(X) \imod{p}$. Moreover, it is easy to construct the polynomial
$f(X)$ from $g(X)$, as follows. Writing $g(X) = X^r + a_{r-1}X^{r-1} +
\cdots + a_1 X + a_0$ and $f(X) = X^r + b_{r-1}X^{r-1} + \cdots + b_1 X +
b_0$, it is shown in~\cite{bini02finite} that $b_i = p^n - p + a_i$ for $i
\in [0, r-1]$. 

\smallskip

As shown by Holm~\cite[Chapter 3]{holm10thesis}, we can define both
 the set of primitive elements and their associated minimal polynomials in
$\FPC$ over finite fields. Assume we have fixed a primitive element
$\alpha$ and its minimal polynomial $g(X)$. Then, by the above, it is
straightforward to formalise $f(X)$ in $\FPC$, with the element $\alpha$ as
a parameter. Finally, if we let $\beta \in \GR{p^n}{r}$ be a root of $f(X)$ 
then the isomorphism  $\iota: \GR{p^n}{r} \isom \Zm{p^n}[X]/(f(X))$ can be
realised explicitly as $a \mapsto h_a(X)$ where $h_a(X)$ is the unique
polynomial such that $h_a(\beta) = a$, for each $a \in R$~\cite{mcdonald1974finite}; 
that is to say, the map $\iota$ is given by
\[
	\iota: a \mapsto h(X) \defiff h(\beta) = a,
\]

\noindent
and this mapping can be formalised in $\FPC$ given $\alpha$ and $\beta$ as parameters.
\end{proof}

%

\noindent
To prove Theorem~\ref{theorem_charpoly-galois-rings}, we follow the
approach\footnote[1]{This approach was originally suggested by Rossman and
documented in a note by Blass and Gurevich\cite{blass05update}.} described
in~\cite{dawar09logics} and formalise a well-known polynomial-time
algorithm by Csanky~\cite{csanky76fast} in $\FPC$. This algorithm computes
the coefficients of the characteristic polynomial (and thereby, the
determinant) of a matrix over any commutative ring of characteristic zero.
The technical details of how to express Csanky's algorithm in $\FPC$ have
been
outlined in~\cite{dawar09logics,holm10thesis,laubner11thesis} and we omit
the details here.

Since Csanky's algorithm only works in characteristic zero, it cannot be
applied directly to matrices over Galois rings. Instead, given a matrix
$A$ over a Galois ring $R \isom \GR{p^n}{r}$, we take the
following steps to ensure that $A$ is suitable for the algorithm.

Firstly, using Lemma~\ref{lemma_definable-isomorphism-galois-rings} we
define a polynomial $f(X) \in \Zm{p^n}[X]$ such that $R \isom
\Zm{p^n}[X] / (f(X))$ (and this isomorphism can be defined explicitly). Let
$F(X)$ be a polynomial over $\Z$ whose reduction modulo $p^n$ is $f(X)$.
This can be done trivially, for the coefficients of $f(X)$ are already
given by integers in the range $[0,p^n-1]$. Next we lift the matrix
$A$ to a matrix $A^\star$ over the quotient ring $S
= \Z[X]/(F(X))$: first by translating $A \mapsto \iota(A)$
according to the $\FPC$-definable isomorphism $\iota: R \isom
\Zm{p^n}[X]/(f(X))$ given by
Lemma~\ref{lemma_definable-isomorphism-galois-rings}
 and then map $\iota(A) \mapsto {A}^\star$ by lifting each $h(X)$ in
$\Zm{p^n}[X]/(f(X))$ to the polynomial $H(X)$ in $S$ whose
reduction modulo $p^n$ is $h(X)$. Finally, we apply Csanky's algorithm over
$S$ to the matrix $A^\star$ and then reduce the output
modulo $p^n$ to get the correct result. This last reduction is sound as we
have $R = S/(p^n)$. As explained
in~\cite[\S3.4.3]{holm10thesis}, Csanky's algorithm can be formalised in
$\FPC$ even when the ring elements are given explicitly as polynomials in
this way. Putting everything together, we conclude that each coefficient of
the characteristic polynomial $\chi_{A}(X)$ of $A$ can be
defined in $\FPC$. Since the determinant of $A$ is precisely the
constant term of $\chi_{A}(X)$,
Theorem~\ref{theorem_charpoly-galois-rings} now follows.

%
%
\section{Discussion}
\label{sec_discussion}

Motivated by the question of finding extensions
of \fpc to capture
larger fragments of $\Ptime$, we have analysed the
(inter-)definability of solvability problems over various classes of
algebraic domains. Similar to the notion of rank logic
\cite{dawar09logics} one can consider \emph{solvability logic}, which
is the extension of \fpc by  Lindstr\"om quantifiers that
decide solvability of linear equation systems. In this context, our
results from
\S\ref{sec_les-different-domains}~and~{\S\ref
{sec_structure-of-finite-rings}}
 can be seen to relate fragments of
solvability logic obtained by restricting quantifiers to different
algebraic domains, such as Abelian groups or commutative rings. We
have also identified many classes of algebraic structures over which
the solvability problem reduces to the very basic problem of
solvability over cyclic groups of prime-power order. This raises the
question, whether a reduction even to groups of prime order is
possible. In this case, solvability logic would turn out to be a
fragment of rank logic. On the other hand, it also remains open
whether or not the matrix rank over finite fields can be
expressed in fixed-point logic extended by solvability operators.

With respect to specific algebraic domains, we prove that $\fpc$ can
define a linear order on the class of all $k$-generated local
rings, i.e.\ on classes of local rings for which every maximal ideal can
be generated by $k$ elements, where $k$ is a fixed constant. Together
with our results from \S\ref{sec_structure-of-finite-rings}, this 
can
be used to show that all natural problems from linear algebra over
(not necessarily local) $k$-generated rings reduce to problems
over ordered rings under $\fp$-reductions. An interesting direction of
future research is to explore how far our techniques can be used to
show (non-)definability in fixed-point logic of other problems from
linear 
algebra over rings. 
 
Finally, we mention an interesting topic of related research, which is the logical study of
\emph{permutation group membership
problems} ($\text{GM}$ for short). An instance of $\GM$ consists of
a set $\Omega$, a set of generating permutations $\pi_1, \dots,
\pi_n$ on $\Omega$ and a target permutation $\pi$, and the problem is
to decide whether $\pi$ is generated by $\pi_1, \dots, \pi_n$. This
problem is known to be decidable in polynomial time (indeed it is in
$\compnc$ \cite{BaLuSe87}). We can show that all the
solvability problems we studied in this paper reduce to $\GM$
under first-order reductions (basically, an application of Cayley's
theorem). 
In particular this shows that $\GM$ is not definable in
$\fpc$. By extending fixed-point logic by a suitable operator for $\GM$ we
therefore obtain a logic which extends rank logics and in which all studied solvability problems are
definable.
This logic is worthy of further study as it
can uniformly express all problems from (linear) algebra that have
been considered so far in the context of understanding the descriptive
complexity gap between $\FPC$ and $\Ptime$.

%
%
\bibliographystyle{plain} 
\bibliography{main}

%
%

	

%
%

%
%

%
%

%
%

\end{document}